\newtheorem{rem}{Remark}
\newtheorem{prop}{Proposition}
\newtheorem{theorem}{Theorem}
\newtheorem{lem}{Lemma}
\newtheorem{coro}{Corollary}
\newtheorem{definition}{Definition}
\DeclareMathOperator*{\argmin}{arg\,min}
\DeclareMathOperator*{\trace}{Tr}
\DeclareMathOperator*{\diag}{diag}
\DeclareMathOperator*{\rank}{rank}
\DeclareMathOperator*{\sat}{sat}
\renewcommand{\Re}{\mathbb{R}}
\renewcommand{\paragraph}[1]{\smallskip\noindent\textbf{#1.} }
\newcommand{\BM}{\begin{bmatrix}}
\newcommand{\EM}{\end{bmatrix}}
\newcommand{\BBM}{\big[\begin{matrix}}
\newcommand{\EEM}{\end{matrix}\big]}
\newcommand{\bbm}{[\begin{matrix}}
\newcommand{\eem}{\end{matrix}]}
\title{An optimization framework for resilient batch estimation in Cyber-Physical Systems}
\author{Alexandre Kircher$^1$, Laurent Bako$^1$, Eric Blanco$^1$, Mohamed Benallouch$^2$ 
\thanks{$^1$ A. Kircher, L. Bako and E. Blanco are  with Universit\'{e} de Lyon, Laboratoire Amp\`{e}re (Ecole Centrale Lyon, CNRS UMR 5005), F-69134. 
        {\footnotesize\tt E-mails: alexandre.kircher, laurent.bako, eric.blanco@ec-lyon.fr}}%
\thanks{$^2$ M. Benallouch  is with Universit\'{e} de Lyon, ECAM Lyon, Lab ECAM, F-69321 Lyon, France.
        {\footnotesize \tt E-mail: mohamed.benallouch@ecam.fr}}%
}
\begin{document}
%\setstretch{1.2}
%\doublespacing
%\onehalfspacing
%

\maketitle

\begin{abstract}
This paper proposes a class of resilient state estimators for LTV discrete-time systems.  The dynamic equation of the system is assumed to be affected by a bounded process noise. As to the available measurements, they are potentially corrupted by a noise of both dense and impulsive natures. The latter in addition to being arbitrary in its form, need not be strictly bounded.   In this setting, we construct the estimator as the set-valued map which associates to the measurements, the minimizing set of some appropriate  performance functions. We consider a family of such performance functions each of which yielding a specific instance of the general estimator. It is then shown that the proposed class of estimators enjoys the property of resilience, that is, it induces an estimation error which, under certain conditions, is independent of the extreme values of the (impulsive)  measurement noise. Hence, the estimation error may be bounded while the measurement noise is virtually unbounded. Moreover, we provide several error bounds (in different configurations) whose expressions depend explicitly on the degree of observability of the system being observed and on the considered performance function. Finally, a few simulation results are provided to illustrate the resilience property.
\end{abstract}

\begin{IEEEkeywords}
Secure state estimation, resilient estimators, optimal estimation, Cyber-physical systems.
\end{IEEEkeywords}

\section{Introduction}
%%%%%%%%
\paragraph{Context}
We consider in this work the problem of designing state estimators which would be resilient against an (unknown) sparse noise sequence affecting the measurements. By sparse noise we refer here to a  signal sequence which is of impulsive nature, that is, a sequence which is most of the time equal to zero,  except at a few instants where it can take on arbitrarily large values. 
The problem is relevant for example, in the supervision of  Cyber-Physical Systems (CPS)~\cite{cardenas_secure_2008}. In this application, the supervisory data may be collected by spatially distributed sensors and then sent to a distant processing unit through some communication network. During the transmission, the data may incur intermittent packet losses or adversarial attacks consisting in e.g., the injection of arbitrary signals. Beyond CPS, there are many other applications where the sparse noise model of uncertainty is relevant: robust statistics \cite{Huber-Book-09}, hybrid system identification \cite{Bako11-Automatica}, intermittent sensor fault detection, etc.

\paragraph{Related works} The problem of estimating the state of CPS under attacks has been investigated recently through many different approaches. 
 Since the measurements are assumed to be affected by a sequence of outliers which is sparse in time, a natural scheme of solution to the state estimation problem may be to first process the data so as to detect the  occurrences of the nonzero instances of that sparse noise, remove the corrupted data and then proceed with classical estimation methods such as the Kalman filter or the Luenberger type of observer \cite{mishra_secure_2017,pasqualetti_attack_2013}. While this approach  sounds a priori reasonable,  the main challenge remains to achieve an efficient detection and isolation of the outliers. Regarding the scenarios where the sporadic noise is modeled in a  probabilistic setting, there exists a body of interesting  results providing performance limits of estimation schemes \cite{sinopoli_kalman_2004,Mo15-TAC,Ren19}. \\  
Another category of approaches, which are inspired by some recent results in compressive sampling \cite{Candes08-SPM,Foucart13-Book}, rely on sparsity-inducing optimization techniques. A striking feature of these methods  is that they do not treat separately the tasks of detection, data cleaning and estimation. Instead, an implicit discrimination of the wrong data is induced by some  specific properties of the to-be-minimized cost function.  One of the first works that puts forward this approach for the resilient state estimation problem is the one reported in \cite{fawzi_secure_2014}. There, it is assumed that only a fixed number of sensors are subject to attacks (sparse over time but otherwise arbitrary disturbances). The challenge then resides in the fact that at each time instant, one does not know which sensor is compromised. Note however that the assumptions in \cite{fawzi_secure_2014} were quite restrictive as no \textit{dense} process noise or measurement noise (other than the \textit{sparse} attack signal) was considered. 
%In the wake of \cite{fawzi_secure_2014}, many other results were obtained which extend 
These limitations open ways for later extensions in many directions. For example, \cite{shoukry_event-triggered_2016} suggests  a reformulation which  is argued to reduce computational cost  by using the concept of event-triggered update  
; \cite{pajic_attack-resilient_2017} considers an observation model which includes dense noise along with the sparse attack signal. In \cite{chang_secure_2018},  the assumption of a fixed number of attacked sensors is relaxed. Finally,  the recent paper \cite{Han19-TAC} proposes a unified framework for analyzing resilience capabilities of most of these (convex) optimization-based estimators. Although a bound on the estimation error was derived in this paper, it is not quantitatively related to the properties (e.g., observability) of the dynamic system being observed. The state estimation problem treated there is rather viewed  as a linear regression problem similarly to \cite{bako_class_2017,Candes06-IT}.

\paragraph{Contributions}
The contributions of the current paper consist in the design and the analysis of a class of  optimization-based resilient estimators for Linear Time-Varying (LTV) discrete-time systems.  The available model of the system assumes bounded noise in both the dynamics and the observation equation with the latter being possibly affected, additionally, by an unknown but sparse attack signal. Contrary to the settings considered in some existing works, we  did not impose here any  restriction  on the number of sensors which are subject to attacks, that is, any sensor can be compromised at any time. Note also that no statistical significance is attached to the uncertainties modeled by noise. In this setting, by generalizing our previous work reported in \cite{kircher_analysis_2020}, the current paper proposes a general (robust) estimation framework for the state of LTV systems. We propose a class of state estimators  constructed  as the set-valued maps which associate to the output measurements, the minimizing set of some  appropriate  performance functions. A variety of performance functions are considered for the design of the estimator and handled in a unified analysis framework: convex nonsmooth/smooth loss functions and nonconvex saturated ones. 
Our main theoretical results concern the resilience analysis of the proposed class of estimators. 
 We show that the estimation error associated with the new class of estimators can be made, under certain conditions, insensitive to the (possibly very large) amplitude of the sparse attack signal. The proposed analysis relies on new quantitative characterizations of the observability property of the system whose state is being observed.   Although the derived error bounds may be conservative, they have the important advantage of being explicitly expressible in function of the properties of the considered dynamic system and those of the optimized loss functions. This makes them valuable qualitative tools for assessing the impact of the estimator's design parameters and that of the system matrices on the quality of the estimation.  
For example, the proposed error bounds reflect the intuition that the more observable the system is with respect to the new criteria, the larger the number of instances of gross values (of the output noise) it can handle and the smaller the values of the bounds. 
Finally the paper shows that for some choice of the design functions (loss functions), some instances of the proposed family of estimators enjoy the exact recoverability property in the particular situation where  the measurements are corrupted  only by sparse noise. We present a  condition for this property that can be numerically verified by means of convex optimization.    
Overall, in comparison with \cite{Han19-TAC} which also considers resilient estimation though in a linear regression setting, we (i) introduce here an alternative definition of resilience, (ii) characterize quantitatively the impact of intrinsic properties (observability) of the system being observed on the quality of the estimation (iii) derive an explicit expression of a bound on the estimation error. 
%%%%

\paragraph{Outline} The rest of the paper is  structured as follows. We start by introducing in Section \ref{part:ii}, the settings for the resilient state estimation problem. We then define in Section~\ref{part:iii} the new class of optimization-based estimators proposed here to address this problem. The analysis of this new framework is presented in Section~\ref{part:iv}. In Section \ref{sec:Exact-Recoverability}, we further discuss the properties of a special constrained version of the initial class of estimators. 
In Section \ref{sec:Numerical-Eval}, we comment on the numerical verification of the conditions derived in the analysis part.  Some numerical results are described in Section ~\ref{part:v} and  finally, concluding remarks are given in Section~\ref{part:vi}.

\paragraph{Notation} $\Re_{\geq 0}$ (respectively $\Re_{>0}$) is the set of nonnegative (respectively positive) reals. $\Re_*$ designates the set of real numbers excluding zero. We note $\Re^a$ the set of (column) vectors with $a$ real elements and  $\Re^{a\times b}$, the set of real matrices with $a$ rows and $b$ columns. If $M\in \Re^{a\times b}$, then $M^{\top}$ will designate the transposed matrix of $M$. $I$ will refer to the (square)  identity matrix of appropriate dimension. The notation $\lVert \cdot \rVert$ will denote a norm over a given set (which will  be specified when necessary). $\lVert \cdot \rVert_p$  denotes the $\ell_p$ norm (for $p\geq 1$) or the $\ell_p$ quasi-norm (for $0<p<1$) defined for $z=\begin{pmatrix}z_1 & \cdots & z_a\end{pmatrix}$ in $\Re^a$ by $\left\|z\right\|_p=\left(|z_1|^p+\cdots +|z_a|^p\right)^{1/p}$. The limit of this when $p\downarrow 0$ gives the so-called $\ell_0$-norm $\left\|\cdot\right\|_0$ of $z$, i.e.,  the number of nonzero entries in $z$. Its limit when $p\uparrow +\infty$ gives the infinity norm denoted $\left\|z\right\|_\infty$ and returning the maximum value of the $\left|z_i\right|$. For $x\in \Re$, $e^x$ refers to the exponential function applied to $x$.\\
 If $\mathcal{S}$ is a set, then $\mathcal{P}(\mathcal{S})$ is the power set of $\mathcal{S}$. If $\mathcal{S}$ is a finite set, the notation $|\mathcal{S}|$ refers to the cardinality of $\mathcal{S}$. \\
%%%%%% 
\textit{$\mathcal{K}_{\infty}$ functions \cite{Kellett14}.} We name $\mathcal{K}_{\infty}$ the set of functions $f:\Re_{\geq 0}\rightarrow\Re_{\geq 0}$ which are continuous, zero at zero, strictly increasing and satisfy $\lim_{\lambda\rightarrow + \infty} f(\lambda)=+\infty$. If $f\in \mathcal{K}_{\infty}$, then it admits an inverse, denoted here $f^{-1}$, which also lies in $\mathcal{K}_{\infty}$.  Similarly, we use the notation $\mathcal{K}_{\sat,a}$  to denote the set of \textit{saturated functions} $f:\Re_{\geq 0}\rightarrow\interval{0}{a}$ which are continuous, zero at zero, strictly increasing on $\interval{0}{a}$ and such that $f(\lambda)=f(a)$ for all $\lambda\geq a$.
% 
%%%%%
\noindent \textit{Supremum.} Given a function $f$ over $\Re^a$ and a subset $\mathcal{S}$ of $\Re^a$, the notation $\sup_{z\in\mathcal{S}} f(z)<b$, with $b\in\Re$, will mean that for all $z$ in $\mathcal{S}$, $f(z)<b$. This notation includes the case where the supremum is $b$ but is not attained by any element of $\mathcal{S}$. 

\section{The Resilient Estimation Problem}\label{part:ii}
Consider a discrete-time  Linear Time-Varying (LTV) system described by 
\begin{equation}\label{eq:sys}
\Sigma : \left\{\begin{array}{r  l}
x_{t+1} &= A_tx_t+w_t \\
y_t &=C_tx_t+f_t  
\end{array}
\right.
\end{equation}
where $x_t\in\Re^n$ is the state vector of the system at time $t$ and $y_t\in\Re^{n_y}$ is the output vector at time $t$; $\{A_t\}$ and $\{C_t\}$ are families of matrices with appropriate dimensions; $\{w_t\}$ is an \textit{unobserved bounded noise sequence}.
%%%
As to $\{f_t\}$, it is regarded here as an (unobserved) \textit{arbitrary noise sequence} affecting the measurements. For clarity of the exposition, it may be convenient  to view $f_t$ as a combination of two types of sequences: a \textit{bounded sequence} $\left\{v_t\right\}$ and a \textit{sparse sequence} $\left\{s_t\right\}$ (this decomposition is indeed always possible for an arbitrary noise signal).   
Hence, we may  write
\begin{equation}\label{eq:f=s+v}
	f_t=v_t+s_t, 
\end{equation}
where $v_t$ is a sensor noise of moderate amplitude and  $s_t$ is a sparse noise sequence in the sense that its (entrywise and/or timewise) components are mostly equal to zero but its  nonzero elements can take on (possibly) arbitrarily large values. Such a sparse sequence $\left\{s_t\right\}$ may account for adversarial attacks in the same spirit as in \cite{fawzi_secure_2014,Han19-TAC}, intermittent sensor faults, or data losses, in particular when a communication network is involved in the data acquisition-transmission chain.  In the sequel, we may  also  refer to $\left\{w_t\right\}$  and $\left\{v_t\right\}$ in \eqref{eq:sys} and \eqref{eq:f=s+v} as  dense noises and to the largest elements of $\left\{s_t\right\}$ as outliers.   

For the sake of simplicity, the sparse (and potentially arbitrary large) noise is assumed here to affect only  the measurement equation. Note however that the proposed analysis method can be extended to the more general scenario where the sparse noises  may affect both the dynamics and the measurements.

\paragraph{Problem}The problem considered in this paper is the one of estimating the states $x_0,\ldots,x_{T-1}$ of the system \eqref{eq:sys} on a time period $\mathbb{T}=\left\{0,\ldots,T-1\right\}$ given $T$ measurements $y_0,...,y_{T-1}$ of the system output. 
We shall seek  an accurate estimate of the state despite the uncertainties in the system equations \eqref{eq:sys} modeled by $w_t$ and $f_t$ the characteristics of which are informally described above. In particular, we would like the  to-be-designed  estimator to produce an estimate such that the estimation error is, when possible, independent of the maximum amplitude of $\left\{f_t\right\}$. Such an estimator will  be called \textit{resilient}, see Definition \ref{def:resilience} for a formal characterization of this property.

\noindent Denote with 
\begin{equation}
X_{T-1}=\begin{pmatrix}
x_0 &x_1 &\dots & x_{T-1}
\end{pmatrix}
\end{equation}
the actual state trajectory of the system $\Sigma$  on a finite time horizon of length $T$. 
Similarly, we use the notation
\begin{equation}
Y_{T-1}=\begin{pmatrix}
y_0 & y_1 &\cdots & y_{T-1}
\end{pmatrix}
\end{equation}
to refer to the collection of measurements on the same time horizon. 
The state estimation problem is approached here from an offline perspective, therefore $T$ is fixed. For the sake of simplicity, the $T$ index will be dropped from the variable names and it will be assumed that signal matrices without an index represent values on the period $\mathbb{T}=\{0,\ldots,T-1\}$. To simplify further the formulas, we also pose $\mathbb{T}'=\{0,\ldots,T-2\}$ while $\mathbb{S}=\left\{1,\ldots,n_y\right\}$ will be a set indexing the sensors (or the rows of the matrices $C_t$ in \eqref{eq:sys}). 
%%%%%%%%%%%%%%%%%%%%%%%%%%%%%%%%%%%%%%%%%%%%%%%%%%%%%%%%%%%%%%%%%%%%%%%%%%%%%%%%%%%%%%%%%%%%%%%%%%%%%%%
\section{Optimization-based approach to Resilient state estimation}\label{part:iii}

\subsection{The state estimator}

In this section we present an optimization-based framework for solving the state estimation problem defined above. 
To define formally the proposed state estimator, let us first introduce the to-be-minimized objective function.   Given the  matrices $\left\{(A_t,C_t)\right\}$ of the system \eqref{eq:sys} and  $T$ output measurements $Y=\begin{pmatrix}y_0 & \cdots & y_{T-1}\end{pmatrix}$, we consider a performance function 
$V_{\Sigma}:\Re^{n_y\times T} \times \Re^{n\times T} \rightarrow \Re_{\geq 0}$ defined by 
\begin{equation}\label{eq:V}
V_{\Sigma}(Y,Z)=\lambda\sum_{t\in\mathbb{T}'}\phi_t(z_{t+1}-A_tz_t)+ \sum_{t\in\mathbb{T}}\psi_t(y_t-C_tz_t)
\end{equation}
where $Z=\begin{pmatrix}z_0 & \cdots & z_{T-1}\end{pmatrix}\in\Re^{n\times T}$ is a hypothetical trajectory matrix with $z_i$ denoting the $i$-th column of $Z$; $\lambda>0$ is a user-defined parameter which aims at balancing the contributions of the two terms involved in  the expression of the performance index $V_{\Sigma}(Y,Z)$. $\{\phi_t\}$ and $\{\psi_t\}$ are two families of positive functions (called here loss functions) defined on $\Re^n$ and $\Re^{n_y}$ respectively. For the sake of simplicity, we will assume throughout the paper that for all $t$ in $\mathbb{T}$, $\phi_t$ and $\psi_t$ can be expressed by 
\begin{align}
&\phi_t(z)=\phi(W_tz)  \quad \forall z\in\Re^n\label{eq:phi_f}\\
& \psi_t(e)=  \psi(V_te) \quad \forall e\in\Re^{n_y}, \label{eq:psi_f}
\end{align}
where  $\phi:\Re^n\rightarrow\Re_{\geq 0}$ and $\psi:\Re^{n_y}\rightarrow\Re_{\geq 0}$ are two fixed loss functions and $\{W_t\}$ and $\{V_t\}$ are  two families of \textit{nonsingular} weighting matrices with appropriate dimensions.

\begin{definition}\label{def:est} 
Given a system $\Sigma$  such as the one in \eqref{eq:sys} and given an output measurement matrix $Y\in \Re^{n_y\times T}$, we define a state estimator to be a set-valued map $\mathcal{E}:\Re^{n_y \times T} \rightarrow \mathcal{P}(\Re^{n\times T})$ which maps  $Y$ to a subset of the space of possible trajectories of the system.    
\end{definition}
We consider a class of state estimators defined by
\begin{equation}\label{eq:def_est}
 \mathcal{E}(Y)=\argmin_{Z\in\Re^{n\times T}}V_{\Sigma}(Y,Z).
\end{equation}
As such the estimator $\mathcal{E}$ is well-defined if for any fixed $Y$,  $V_{\Sigma}(Y,Z)$ admits a non empty minimizing set, that is, if there exists at least one $Z^\star$ such that $V_{\Sigma}(Y,Z)\geq V_{\Sigma}(Y,Z^\star)$ for all $Z\in \Re^{n\times T}$. To ensure this property we will need to put an observability assumption on the system whose state is being estimated and require some further properties on the loss functions $\phi$ and $\psi$ entering in the definition of the objective function $V_\Sigma$. 

\unskip
\subsection{Well-definedness of the estimator}
Let us start by stating the properties required for the loss functions involved in the definition of $V_\Sigma$. 
Due to the multiple usages that will be made of  these properties, it is  convenient to state them for a generic  loss function defined on a set of matrices (of which vectors constitute a special case). 
Throughout this paper, a loss function is a positive function $\xi:\Re^{a\times b} \rightarrow\Re_{\geq 0}$ which will be required to satisfy a subset (depending of the specific usage)  of the following  properties: 
\begin{enumerate}[label=(P{\arabic*})]\topsep=0pt
\item Positive definiteness: $\xi(0)=0$ and $\xi(Z)>0$ for all $Z\neq 0$\label{prop:prem}
\item Continuity: $\xi$ is continuous \label{prop:2}
\item Symmetry:  $\xi(-Z)=\xi(Z)$ for all $Z\in\Re^{a\times b}$\label{prop:3}
\item \label{prop:4} {Generalized Homogeneity (GH)}: There exists a $\mathcal{K}_{\infty}$ function $q:\Re_{\geq 0} \rightarrow \Re_{\geq 0}$  such that for all $\lambda\in\Re_*$ and for all $Z\in\Re^{a\times b}$,
\begin{equation}\label{eq:homog_gen}
\xi(Z)\geq q\left(\dfrac{1}{|\lambda |}\right) \xi(\lambda Z).
\end{equation}
\end{enumerate}
\begin{enumerate}[label=(P{\arabic*})]\topsep=0pt \setcounter{enumi}{4}
\item \label{prop:der} {Generalized Triangle Inequality (GTI):} There exists a positive real number $\gamma_{\xi}$ such that for all $Z_1$, $Z_2$ in $\Re^{a\times b}$
\begin{equation}\label{eq:tri_gen}
\xi(Z_1-Z_2)\geq \gamma_{\xi}\xi(Z_1)-\xi(Z_2). 
\end{equation}
\end{enumerate}

\noindent It can be usefully observed, for the future developments, that \eqref{eq:tri_gen} can be equivalently written as 
$\xi(Z_1+Z_2)\leq \gamma_{\xi}^{-1}\xi(Z_1)+\gamma_{\xi}^{-1}\xi(Z_2). $

\paragraph{Examples of loss functions}
Note that norms on $\Re^{a\times b}$ satisfy naturally the properties~\ref{prop:prem}--\ref{prop:der} with $q:\lambda \mapsto \lambda$ and $\gamma_{\xi}=1$, hence  yielding the classic homogeneity property and triangle inequality. It can also be checked that functions $\xi$ of the form $\xi(Z)=\left\|Z\right\|^p$ with $p>0$, fully qualify as loss functions in the sense that they fulfill all the properties \ref{prop:prem}--\ref{prop:der}. In this case, $\gamma_{\xi}$ in \eqref{eq:tri_gen} can be taken equal to $2^{1-1/p}$ if $0<p\leq 1$ and  $2^{1-p}$ otherwise. Lastly we note that if $\ell:\Re^{a\times b}\rightarrow \Re_{\geq 0}$ satisfies \ref{prop:prem}--\ref{prop:3}  and \ref{prop:der}, then so does the  function $\xi$ defined by $\xi(Z)=1-e^{-\ell(Z)}$ (see Lemma \ref{lem:exponential-cost-P1-P5} in the appendix). Similarly, saturated functions of the form $\xi(Z)=\min(\ell(Z),R_0)$ for some $R_0>0$ satisfy  \ref{prop:prem}--\ref{prop:3}  and \ref{prop:der}.   
 In the case of convex functions, a link can be established between~\ref{prop:4} and~\ref{prop:der}. 
%%%%%
\begin{lem}[\cite{kircher_analysis_2020}]\label{lem:cvx}
If $\xi : \Re^{a\times b}\rightarrow \Re_{\geq 0}$ is convex and satisfies property~\ref{prop:4} with a $\mathcal{K}_{\infty}$ function $q$, then it also satisfies~\ref{prop:der} with  $\gamma_{\xi}=2q(1/2)$.
\end{lem}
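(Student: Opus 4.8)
The plan is to work with the equivalent reformulation of (P5) noted immediately before the lemma, namely that (P5) with constant $\gamma_{\xi}$ is the same as the subadditivity-type bound
\begin{equation*}
\xi(Z_1+Z_2)\le \gamma_{\xi}^{-1}\xi(Z_1)+\gamma_{\xi}^{-1}\xi(Z_2)
\end{equation*}
holding for all $Z_1,Z_2\in\Re^{a\times b}$. With the claimed value $\gamma_{\xi}=2q(1/2)$ this target becomes $\xi(Z_1+Z_2)\le \frac{1}{2q(1/2)}\bigl(\xi(Z_1)+\xi(Z_2)\bigr)$, so it suffices to establish this single inequality.

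The argument combines two ingredients. First, convexity evaluated at the midpoint gives
\begin{equation*}
\xi\Bigl(\tfrac{Z_1+Z_2}{2}\Bigr)\le \tfrac{1}{2}\xi(Z_1)+\tfrac{1}{2}\xi(Z_2).
\end{equation*}
Second, I would invoke property \ref{prop:4} with the choice $\lambda=2$ and argument $Z=(Z_1+Z_2)/2$, which produces
\begin{equation*}
\xi\Bigl(\tfrac{Z_1+Z_2}{2}\Bigr)\ge q\bigl(\tfrac{1}{2}\bigr)\,\xi(Z_1+Z_2),
\end{equation*}
equivalently an upper bound $\xi(Z_1+Z_2)\le q(1/2)^{-1}\,\xi\bigl((Z_1+Z_2)/2\bigr)$ on the full sum in terms of its halved version. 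Chaining these two estimates yields precisely $\xi(Z_1+Z_2)\le \frac{1}{2q(1/2)}\bigl(\xi(Z_1)+\xi(Z_2)\bigr)$, which is the reformulated \ref{prop:der} with $\gamma_{\xi}=2q(1/2)$.

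The one point requiring care is the direction in which \ref{prop:4} is applied: the naive choice $\lambda=1/2$ would bound $\xi\bigl((Z_1+Z_2)/2\bigr)$ \emph{above} by a multiple of $\xi(Z_1+Z_2)$, which is useless here; taking $\lambda=2$ instead delivers the needed \emph{lower} bound on the midpoint value, hence the desired upper bound on $\xi(Z_1+Z_2)$. It is also worth noting that $q(1/2)>0$, since $q\in\mathcal{K}_{\infty}$ is strictly increasing and vanishes only at $0$, so $\gamma_{\xi}=2q(1/2)$ is a legitimate positive constant. No other property of $\xi$ (positive definiteness, continuity, or symmetry) enters the proof.
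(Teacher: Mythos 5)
Your proof is correct: midpoint convexity, $\xi\bigl(\tfrac{Z_1+Z_2}{2}\bigr)\le \tfrac12\xi(Z_1)+\tfrac12\xi(Z_2)$, combined with property~\ref{prop:4} applied at $\lambda=2$ to the midpoint, gives exactly the subadditivity form of~\ref{prop:der} with constant $\gamma_\xi=2q(1/2)$, and the equivalence of that form with \eqref{eq:tri_gen} is already recorded in the paper; your remark about the direction of the homogeneity step and the positivity of $q(1/2)$ covers the only delicate points. The paper itself defers the proof to the cited reference \cite{Kircher19-TR}, but the constant $2q(1/2)$ is precisely what this midpoint argument produces, so your route is essentially the intended one.
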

%%%%%

Observe that quadratic functions $\xi :\Re^{a\times b} \rightarrow \Re_{\geq 0}$ of the form $\xi(Z)=\trace(Z^\top QZ)$ with $Q\in\Re^{a\times a}$ being a positive definite matrix and $\trace$ referring to the trace of a matrix, satisfy properties~\ref{prop:prem}--\ref{prop:4} with a $\mathcal{K}_\infty$ function $q:\lambda \mapsto \lambda^2$. Since such functions are convex,  it follows from Lemma \ref{lem:cvx} above that they also verify~\ref{prop:der} for $\gamma_{\xi}=2q(1/2)=1/2$.
%%%
\begin{rem}
In virtue of \eqref{eq:phi_f}-\eqref{eq:psi_f}, the families of functions $\left\{\phi_t\right\}$ and $\left\{\psi_t\right\}$ satisfy \ref{prop:prem}--\ref{prop:der} whenever $\phi$ and $\psi$ satisfy \ref{prop:prem}--\ref{prop:der}. 
\end{rem}
%%%%%%%%%%%%%
We now recall from \cite{kircher_analysis_2020} a technical lemma which will play a fundamental role in analyzing the properties of the estimator \eqref{eq:def_est}. In particular, our proof of well-definedness relies on this lemma. 
%%%%
\begin{lem}[Lower Bound of a loss function]\label{lem:lb}
Let $\xi:\Re^{a\times b}\rightarrow\Re_{\geq 0}$ be a function which has properties \ref{prop:prem}--\ref{prop:2} and~\ref{prop:4} with a $\mathcal{K}_\infty$ function $q$. Then, for all norm $\lVert \cdot \rVert$ on  $\Re^{a\times b}$,
\begin{equation}\label{eq:fl_r}
\xi(Z)\geq Dq(\lVert Z \rVert) \quad \forall Z\in \Re^{a\times b}
\end{equation}
%with $q$ defined as in~\ref{prop:4}, $\lVert \cdot \rVert$ being an arbitrary  norm on $\Re^{a\times b}$ and
where
\begin{equation}\label{eq:def_d}
D=\min_{\lVert Z \rVert=1} \xi(Z)>0.
\end{equation}
\end{lem}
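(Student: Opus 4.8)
The plan is to reduce the inequality on all of $\Re^{a\times b}$ to its behaviour on the unit sphere $\{Z : \lVert Z \rVert = 1\}$, where compactness lets me extract the positive constant $D$, and then to transport the bound off the sphere by rescaling, exploiting the generalized homogeneity property \ref{prop:4}. The structure of the right-hand side, namely $D\,q(\lVert Z\rVert)$, signals exactly this: $q$ is the $\mathcal{K}_\infty$ function from \ref{prop:4}, and $D$ is the minimum of $\xi$ over the unit sphere, so the argument must consist of (i) showing $D$ is well-defined and strictly positive, and (ii) a homogeneity-type scaling step.

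\textbf{Step 1: well-definedness and positivity of $D$.} First I would observe that the unit sphere $\mathcal{U}=\{Z\in\Re^{a\times b} : \lVert Z\rVert = 1\}$ is closed and bounded, hence compact (all norms on the finite-dimensional space $\Re^{a\times b}$ being equivalent and the topology being the usual one). Since $\xi$ is continuous by \ref{prop:2}, it attains its infimum on $\mathcal{U}$, so the minimum in \eqref{eq:def_d} is achieved at some $Z_0$ with $\lVert Z_0\rVert = 1$. Because $Z_0\neq 0$, positive definiteness \ref{prop:prem} gives $D=\xi(Z_0)>0$. This justifies both that $D$ exists and the strict inequality $D>0$ asserted in \eqref{eq:def_d}.

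\textbf{Step 2: the scaling argument.} The bound \eqref{eq:fl_r} holds trivially at $Z=0$, since $q(0)=0$ (as $q\in\mathcal{K}_\infty$) and $\xi(0)=0$ by \ref{prop:prem}. For $Z\neq 0$, set $\lambda = \lVert Z\rVert \in \Re_{>0}\subset\Re_*$ and write $Z = \lambda\,\bar Z$ where $\bar Z = Z/\lVert Z\rVert$ satisfies $\lVert \bar Z\rVert = 1$, so $\bar Z\in\mathcal{U}$. Applying \ref{prop:4} to $\bar Z$ with scalar $\lambda$, inequality \eqref{eq:homog_gen} yields
\begin{equation*}
\xi(\bar Z)\geq q\!\left(\frac{1}{|\lambda|}\right)\xi(\lambda\bar Z)=q\!\left(\frac{1}{\lVert Z\rVert}\right)\xi(Z).
\end{equation*}
The difficulty is that this produces $q(1/\lVert Z\rVert)$ on the right, whereas the target is $q(\lVert Z\rVert)$; the natural fix is to instead feed \ref{prop:4} the point $Z$ itself together with the scalar $\mu = 1/\lVert Z\rVert$, giving $\xi(Z)\geq q(\lVert Z\rVert)\,\xi\!\big(Z/\lVert Z\rVert\big) = q(\lVert Z\rVert)\,\xi(\bar Z)$. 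Since $\bar Z\in\mathcal{U}$, the definition of $D$ gives $\xi(\bar Z)\geq D$, and combining the two yields $\xi(Z)\geq D\,q(\lVert Z\rVert)$, as required.

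\textbf{Main obstacle.} The single point requiring care is the correct choice of the scalar and base point in \ref{prop:4} so that the $\mathcal{K}_\infty$ function $q$ appears with argument $\lVert Z\rVert$ rather than its reciprocal; a careless substitution inverts the argument and breaks the bound. Beyond this, the argument is routine: compactness supplies $D$, continuity and positive definiteness supply $D>0$, and the generalized homogeneity \ref{prop:4} supplies the rescaling.
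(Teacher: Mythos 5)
Your proposal is correct and follows essentially the same route as the paper's proof: compactness of the unit sphere together with continuity and positive definiteness yields $D>0$, and then property~\ref{prop:4} applied to $Z$ with scalar $1/\lVert Z\rVert$ gives $\xi(Z)\geq q(\lVert Z\rVert)\,\xi(Z/\lVert Z\rVert)\geq D\,q(\lVert Z\rVert)$. Your explicit discussion of choosing the scalar so that $q$ receives $\lVert Z\rVert$ rather than its reciprocal is exactly the substitution the paper makes implicitly.
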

%%%%%%
\begin{proof}
We start by observing that the unit hypersphere $\mathcal{S}=\left\{Z\in \Re^{a\times b}: \left\|Z\right\|=1\right\}$ is a compact set in the topology induced by the norm $\left\|\cdot\right\|$.  By the extreme value theorem, $\xi$ being continuous, admits necessarily a minimum value  on $\mathcal{S}$, i.e., there is $Z^\star\in \mathcal{S}$ such that $\xi(Z)\geq D\triangleq \xi(Z^\star)>0$ for all $Z\in \mathcal{S}$. For any nonzero $Z\in \Re^{a\times b}$, $\dfrac{Z}{\left\|Z\right\|}\in \mathcal{S}$ so that  $\xi(\dfrac{Z}{\left\|Z\right\|})\geq D$. On the other hand, by the relaxed homogeneity of $\xi$, 
$$\xi(Z)\geq q(\left\|Z\right\|) \xi(\dfrac{Z}{\left\|Z\right\|})\geq D q(\left\|Z\right\|).$$ Moreover, this inequality holds for $Z=0$. It therefore holds true for any $Z\in \Re^{a\times b}$. 
\end{proof}

%%%%%%%%%%%%%%%%%%%%%%%%%%%%%
\begin{prop}[Well-definedness of the estimator]\label{prop:Well-definedness}
Let the loss functions $\phi$ and $\psi$ in \eqref{eq:phi_f}-\eqref{eq:psi_f} satisfy properties \ref{prop:prem}--\ref{prop:der} 
 and assume that the LTV system \eqref{eq:sys} is observable on $\interval{0}{T-1}$ in the sense that the observability matrix 
%
%%%
\begin{equation}\label{eq:Obsv-Matrix}
\mathcal{O}_{0,T-1}\triangleq \begin{pmatrix}
(C_{0})^\top& 
(C_{1}A_{0})^\top&
\cdots&
(C_{T-1}A_{T-2}\ldots A_1 A_{0})^\top
\end{pmatrix}^\top
\end{equation}
has full column rank.    
Then the estimator \eqref{eq:def_est} is well-defined, \textit{i.e.}, the objective function $V_\Sigma(Y,\cdot)$ attains its minimum for any fixed $Y$.   
\end{prop}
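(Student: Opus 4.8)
The plan is to invoke a Weierstrass-type argument: for a fixed $Y$, I will show that the map $Z\mapsto V_\Sigma(Y,Z)$ is continuous and coercive (radially unbounded), which together guarantee that its sublevel sets are compact and hence that the infimum is attained on $\Re^{n\times T}$. Continuity is immediate: by property~\ref{prop:2} the loss functions $\phi$ and $\psi$ are continuous, and $V_\Sigma(Y,\cdot)$ is a finite sum of these composed with the affine maps $Z\mapsto W_t(z_{t+1}-A_tz_t)$ and $Z\mapsto V_t(y_t-C_tz_t)$. The substance of the proof is therefore the coercivity, namely that $V_\Sigma(Y,Z)\to+\infty$ whenever $\lVert Z\rVert\to+\infty$.

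For the coercivity I would first lower-bound each summand. Applying Lemma~\ref{lem:lb} to $\phi$ and $\psi$ (which satisfy \ref{prop:prem}--\ref{prop:2} and~\ref{prop:4}) yields constants $D_\phi,D_\psi>0$ and $\mathcal{K}_\infty$ functions $q_\phi,q_\psi$ with $\phi_t(\delta_t)\geq D_\phi\,q_\phi(\lVert W_t\delta_t\rVert)$ and $\psi_t(\epsilon_t)\geq D_\psi\,q_\psi(\lVert V_t\epsilon_t\rVert)$, where I write $\delta_t=z_{t+1}-A_tz_t$ and $\epsilon_t=y_t-C_tz_t$; since $W_t$ and $V_t$ are nonsingular, $\lVert W_t\delta_t\rVert$ and $\lVert V_t\epsilon_t\rVert$ dominate $\lVert\delta_t\rVert$ and $\lVert\epsilon_t\rVert$ up to positive constants. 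Next I would perform the linear change of variables $Z\leftrightarrow(z_0,\delta_0,\dots,\delta_{T-2})$, which is a bijection because each $z_t$ is recovered from $z_0$ and the residuals through the recursion $z_{t+1}=A_tz_t+\delta_t$; hence $\lVert Z\rVert\to+\infty$ is equivalent to $\lVert z_0\rVert+\sum_t\lVert\delta_t\rVert\to+\infty$. Substituting this recursion into $\epsilon_t$ and stacking over $t$ expresses the output residuals as $\mathbf{e}=\mathbf{y}-\mathcal{O}_{0,T-1}\,z_0-\mathcal{M}\Delta$, where $\mathbf{y},\mathbf{e},\Delta$ collect the $y_t,\epsilon_t,\delta_t$ respectively and $\mathcal{M}$ is a fixed block lower-triangular matrix assembled from the $C_t$ and the state-transition matrices.

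I would then conclude by a case analysis along an arbitrary sequence $Z^{(k)}$ with $\lVert Z^{(k)}\rVert\to+\infty$, passing to subsequences as needed. If the residuals $\Delta^{(k)}$ are unbounded, then some $\lVert\delta_t^{(k)}\rVert\to+\infty$ and the corresponding term $\lambda D_\phi\,q_\phi(\lVert W_t\delta_t^{(k)}\rVert)\to+\infty$ forces $V_\Sigma\to+\infty$. Otherwise $\Delta^{(k)}$ is bounded while $\lVert z_0^{(k)}\rVert\to+\infty$; here the full-column-rank hypothesis on $\mathcal{O}_{0,T-1}$ provides $\alpha>0$ with $\lVert\mathcal{O}_{0,T-1}\,z_0\rVert\geq\alpha\lVert z_0\rVert$, so the stacked identity gives $\lVert\mathbf{e}^{(k)}\rVert\geq\alpha\lVert z_0^{(k)}\rVert-\lVert\mathbf{y}\rVert-\lVert\mathcal{M}\rVert\,\lVert\Delta^{(k)}\rVert\to+\infty$, whence some $\lVert\epsilon_t^{(k)}\rVert\to+\infty$ and the output term $D_\psi\,q_\psi(\lVert V_t\epsilon_t^{(k)}\rVert)$ drives $V_\Sigma$ to infinity. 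In both cases $V_\Sigma(Y,\cdot)$ is coercive, its sublevel sets are compact, and the extreme value theorem delivers a minimizer.

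The step I expect to be the crux is the second case of the coercivity argument: capturing precisely how observability prevents $V_\Sigma$ from remaining bounded when the hypothetical initial state $z_0$ diverges while the dynamics residuals stay under control. This is where the full-column-rank assumption is essential, since without it a diverging $z_0$ could be compatible with small output residuals and the infimum might fail to be attained.
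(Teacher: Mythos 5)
Your proof is correct, but it takes a genuinely different route from the paper's. The paper also reduces the problem to showing that $V_\Sigma(Y,\cdot)$ is coercive, but it removes the dependence on $Y$ by invoking the generalized triangle inequality \ref{prop:der} of $\psi$, writing $\psi_t(y_t-C_tz_t)\geq \gamma_\psi\psi_t(C_tz_t)-\psi_t(y_t)$, so that $V_\Sigma(Y,Z)\geq F(Z)-\sum_{t\in\mathbb{T}}\psi_t(y_t)$ with $F(Z)=\lambda\sum_{t\in\mathbb{T}'}\phi_t(z_{t+1}-A_tz_t)+\gamma_\psi\sum_{t\in\mathbb{T}}\psi_t(C_tz_t)$ independent of $Y$; it then shows (via Lemma~\ref{lem:fc}, writing $F=\xi\circ\ell$ with $\ell$ linear and injective precisely because of observability) that $F$ satisfies \ref{prop:prem}--\ref{prop:4}, and applies Lemma~\ref{lem:lb} to $F$ as a whole to obtain a single $\mathcal{K}_\infty$ minorant $F(Z)\geq Dq(\lVert Z\rVert)$. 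You never use \ref{prop:der}: you apply Lemma~\ref{lem:lb} to $\phi$ and $\psi$ term by term, pass to the coordinates $(z_0,\delta_0,\dots,\delta_{T-2})$, and let observability act through the quantitative bound $\lVert\mathcal{O}_{0,T-1}z_0\rVert\geq\alpha\lVert z_0\rVert$ in the stacked residual identity, concluding with a two-case sequential argument. Each approach buys something: yours is more elementary and self-contained, and it in fact establishes the proposition under weaker hypotheses (only \ref{prop:prem}, \ref{prop:2} and \ref{prop:4} are used; neither symmetry \ref{prop:3} nor the GTI \ref{prop:der} is needed), while making explicit where the rank condition on \eqref{eq:Obsv-Matrix} intervenes; the paper's is shorter, yields the uniform quantitative lower bound $V_\Sigma(Y,Z)\geq q(\lVert Z\rVert)-\sum_{t\in\mathbb{T}}\psi_t(y_t)$, and recycles machinery (Lemmas~\ref{lem:fc} and \ref{lem:lb} and the injectivity computation) that the paper needs anyway for Lemma~\ref{lem:obs} and Theorem~\ref{th:lb}. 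One point to tighten on your side: in each branch of your dichotomy you obtain divergence of $V_\Sigma$ only along a subsequence, so to conclude coercivity you should run the dichotomy on an arbitrary subsequence of $\{Z^{(k)}\}$ (sub-subsequence criterion), or argue by contradiction starting from a subsequence on which $V_\Sigma$ remains bounded; this is routine and does not affect correctness.
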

%%%%%%%%%%%%%%%%%%%%%%%%%%%%%%
Hence, the condition of the proposition guarantees that $\mathcal{E}(Y)$ is non empty for all $Y\in \Re^{n_y\times T}$. Before proving this  result, we first make the following observation.  

%%%%%%
\begin{lem}[Equivalent condition of Observability]\label{lem:obs} 
Consider the objective function $V_\Sigma$ defined in \eqref{eq:V} where $\left\{(\phi_t,\psi_t)\right\}$ are defined as in \eqref{eq:phi_f}-\eqref{eq:psi_f} with $\phi$ and $\psi$ satisfying \ref{prop:prem}--\ref{prop:4}. 
Then the  following two statements are equivalent:
\begin{enumerate}
\item[(i)] \label{lo_ii} The system is observable on the time interval $\interval{0}{T-1}$.
\item[(ii)] \label{lo_i} There exists  a $\mathcal{K}_{\infty}$ function $q$ such that for all $Z~=~\begin{pmatrix}z_{0} & z_{1} &\ldots &z_{T-1}\end{pmatrix}$ in $\Re^{n\times T}$,
\begin{equation}\label{eq:lem_lo_i}                                                                                  
V_\Sigma(0,Z)\geq q(\lVert z_{0}\rVert)
\end{equation}
\end{enumerate}
\end{lem}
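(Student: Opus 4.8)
The plan is to prove the two implications separately, exploiting the fact that setting $Y=0$ turns $V_\Sigma(0,\cdot)$ into a positive combination of the loss functions evaluated along the dynamics residuals $z_{t+1}-A_tz_t$ and the output residuals $C_tz_t$. It is convenient to introduce the state-transition matrices $\Phi_0=I$ and $\Phi_t=A_{t-1}\cdots A_0$ for $t\geq 1$, so that $\Phi_{t+1}=A_t\Phi_t$ and the rows of $\mathcal{O}_{0,T-1}$ in \eqref{eq:Obsv-Matrix} are exactly the $C_t\Phi_t$. The key elementary observation is that the \emph{noise-free} trajectory $z_t=\Phi_t z_0$ makes every dynamics residual vanish and collapses the stacked output residuals to $\mathcal{O}_{0,T-1}z_0$.

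For the implication (ii)$\Rightarrow$(i) I would argue by contraposition. If the system is not observable, $\mathcal{O}_{0,T-1}$ has a nontrivial kernel, so there is $x\neq 0$ with $\mathcal{O}_{0,T-1}x=0$, i.e.\ $C_t\Phi_t x=0$ for all $t\in\mathbb{T}$. Taking $Z$ to be the noise-free trajectory $z_t=\Phi_t x$ makes each $z_{t+1}-A_tz_t=0$ and each $C_tz_t=0$; by positive definiteness \ref{prop:prem} of $\phi_t$ and $\psi_t$, every summand of $V_\Sigma(0,Z)$ vanishes, so $V_\Sigma(0,Z)=0$. Since $z_0=x\neq 0$, any $\mathcal{K}_\infty$ function $q$ would give $q(\lVert z_0\rVert)>0=V_\Sigma(0,Z)$, contradicting \eqref{eq:lem_lo_i}. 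Hence (ii) forces observability.

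The substantial direction is (i)$\Rightarrow$(ii), and the main obstacle is to establish that $D\triangleq \inf_{\lVert z_0\rVert=1} V_\Sigma(0,Z)>0$, where the infimum is over all $Z$ whose first column has unit norm. This is delicate because the feasible set is \emph{not} compact: the columns $z_1,\dots,z_{T-1}$ are unconstrained. I would argue by contradiction. A minimizing sequence $Z^{(k)}$ with $V_\Sigma(0,Z^{(k)})\to 0$ forces each nonnegative summand to $0$, and Lemma~\ref{lem:lb} applied directly to $\phi_t$ and $\psi_t$ turns this into $z_{t+1}^{(k)}-A_tz_t^{(k)}\to 0$ and $C_tz_t^{(k)}\to 0$ for every $t$ (using that $q_\phi,q_\psi$, being $\mathcal{K}_\infty$, vanish only at $0$). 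Because $\lVert z_0^{(k)}\rVert=1$ lies on the compact unit sphere, I would pass to a subsequence with $z_0^{(k)}\to x^\star$, $\lVert x^\star\rVert=1$; writing $z_{t+1}^{(k)}=A_tz_t^{(k)}+\bigl(z_{t+1}^{(k)}-A_tz_t^{(k)}\bigr)$ and inducting gives $z_t^{(k)}\to\Phi_t x^\star$, whence $C_t\Phi_t x^\star=0$ for all $t$, i.e.\ $\mathcal{O}_{0,T-1}x^\star=0$. Observability then yields $x^\star=0$, contradicting $\lVert x^\star\rVert=1$; therefore $D>0$.

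Finally I would use generalized homogeneity \ref{prop:4} to scale this unit-sphere bound to arbitrary $z_0$. For $z_0\neq 0$, set $\mu=\lVert z_0\rVert$ and $\tilde Z=\mu^{-1}Z$; applying \eqref{eq:homog_gen} to $\phi_t$ and $\psi_t$ (which inherit the generalized homogeneity of $\phi,\psi$ with the same functions $q_\phi,q_\psi$, since the weighting matrices factor out) with parameter $1/\mu$ gives $\phi_t(z_{t+1}-A_tz_t)\geq q_\phi(\mu)\,\phi_t(\tilde z_{t+1}-A_t\tilde z_t)$ and the analogous inequality for $\psi_t$. Writing $q=\min(q_\phi,q_\psi)$, which is again $\mathcal{K}_\infty$, and using nonnegativity of the loss terms, these combine into $V_\Sigma(0,Z)\geq q(\lVert z_0\rVert)\,V_\Sigma(0,\tilde Z)\geq D\,q(\lVert z_0\rVert)$, the last step using $\lVert \tilde z_0\rVert=1$ and $D>0$. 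The map $\lambda\mapsto D\,q(\lambda)$ is $\mathcal{K}_\infty$, and the bound holds trivially at $z_0=0$, so \eqref{eq:lem_lo_i} holds with this function, completing the proof.
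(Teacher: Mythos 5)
Your proof is correct, but the forward direction (i)$\Rightarrow$(ii) takes a genuinely different route from the paper's. The paper never confronts the non-compact slice $\{Z:\lVert z_{0}\rVert=1\}$ that you handle with a minimizing-sequence argument: instead it writes $F(Z)=V_\Sigma(0,Z)$ as a composition $\xi\circ\ell$ of a sum-of-losses function $\xi$ with the linear map $\ell$ sending $Z$ to its dynamics and output residuals, observes that observability makes $\ell$ injective, and invokes Lemma~\ref{lem:fc} to conclude that $F$ itself satisfies \ref{prop:prem}--\ref{prop:4} on $\Re^{n\times T}$; Lemma~\ref{lem:lb} applied to $F$ with an induced matrix norm then gives $F(Z)\geq D\,q'(\lVert Z\rVert_{\text{ind}})$ over the compact unit sphere of the \emph{whole} space, and the bound in terms of $\lVert z_{0}\rVert$ alone follows from $\lVert z_{0}\rVert=\lVert Ze_1\rVert\leq\lVert Z\rVert_{\text{ind}}\,\lVert e_1\rVert_T$. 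You instead normalize only the first column, pay for the resulting loss of compactness with a subsequence-extraction contradiction (where observability enters by forcing the limit $x^\star$ into the kernel of $\mathcal{O}_{0,T-1}$), and recover the general bound via the generalized-homogeneity scaling \ref{prop:4}. Both arguments are sound; the paper's is more modular and yields the stronger intermediate estimate $F(Z)\geq D\,q'(\lVert Z\rVert)$ in the full trajectory norm, which it reuses when proving well-definedness in Proposition~\ref{prop:Well-definedness} and Theorem~\ref{th:lb}, whereas yours is more self-contained --- it needs only Lemma~\ref{lem:lb} applied componentwise plus the fact that the minimum of two $\mathcal{K}_\infty$ functions is $\mathcal{K}_\infty$ (the paper's Lemma~\ref{lem:kf}) --- and it exposes exactly where observability is used. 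Your backward direction is the same noise-free-trajectory argument as the paper's, merely phrased as a contraposition rather than directly.
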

A proof of this lemma is reported in Appendix~\ref{app:obs_proof}. 
The function $q$ can be interpreted here as a gain function which measures how much the system is observable with regards to the two families $\{\phi_t\}$ and $\{\psi_t\}$: the more the system is observable, the more $q$ amplifies its argument magnitude, making different trajectories more discernible. 

\noindent\textit{Proof of Proposition \ref{prop:Well-definedness}:}
The idea of the proof is to show that $V_{\Sigma}(Y,\cdot)$ is \textit{coercive} (\textit{i.e.}, continuous and radially unbounded) for any given $Y$ and then apply a result\footnote{Note that radial unboundedness is equivalent to  level-boundedness in the terminology of \cite{Rockafellar05-Book}. } in \cite[Thm 1.9]{Rockafellar05-Book} to conclude on the attainability of the infimum (which certainly exists since $V_{\Sigma}(Y,\cdot)$ is a positive function). Clearly, $V_{\Sigma}(Y,\cdot)$ is continuous as a consequence of  $\phi$ and $\psi$ being continuous by assumption (see property \ref{prop:2}). We then just need to  prove the radial unboundedness of $V_{\Sigma}(Y,\cdot)$, \textit{i.e.},  
$\lim_{\left\|Z\right\|\rightarrow +\infty}V_{\Sigma}(Y,Z)=+\infty$ for an arbitrary norm $\left\|\cdot\right\|$ on the $Z$-space and for all fixed $Y$. Since $\psi$ satisfies property \ref{prop:der}, there exists a constant $\gamma_\psi>0$ such that 
$\psi_t(y_t-C_tz_t)\geq \gamma_\psi \psi_t(C_tz_t)-\psi_t(y_t)$. Applying this property leads naturally to 
$$V_{\Sigma}(Y,Z)\geq F(Z)-  \sum_{t\in\mathbb{T}}\psi_t(y_t),$$
where 
\begin{equation}
	F(Z)=\lambda \sum_{t\in\mathbb{T}'}\phi_t(z_{t+1}-A_tz_t)+\gamma_\psi \sum_{t\in\mathbb{T}}\psi_t(C_tz_t).  
\end{equation}
It can then be shown (following a similar reasoning as in Appendix \ref{app:obs_proof}), under the observability assumption,  that $F$ satisfies the conditions of Lemma \ref{lem:lb}. It follows that for any norm $\left\|\cdot\right\|$ on $\Re^{n\times T}$,  there exists a  $\mathcal{K}_\infty$ function $q$ such that 
$$ F(Z)\geq q(\left\|Z\right\|).$$ 
Combining this with the inequality above, we obtain that 
$$V_{\Sigma}(Y,Z)\geq q(\left\|Z\right\|)-  \sum_{t\in\mathbb{T}}\psi_t(y_t)$$
which implies the radial unboundedness of $V_{\Sigma}(Y,\cdot)$ for any fixed $Y$. Hence the estimator  \eqref{eq:def_est} is well-defined as stated. 
\qed

%%%%%%%%%%%%%%%%%%%%%%%%%
As it turns out from Proposition \ref{prop:Well-definedness}, observability of system \eqref{eq:sys} and properties \ref{prop:prem}--\ref{prop:4} imposed on $\phi$ and $\psi$ ensure that $\mathcal{E}(Y)$ is a non empty set for any given $Y$. We then call any member $\hat{X}=\begin{pmatrix}
\hat{x}_0 & \hat{x}_1 & \dots & \hat{x}_{T-1}\end{pmatrix}$ of $\mathcal{E}(Y)$, an estimate of the state trajectory $X$ of system \eqref{eq:sys} on the time interval $\mathbb{T}$. In particular, $\hat{x}_t$ is called an estimate of the state $x_t$ at time $t\in \mathbb{T}$.

To conclude this section, note that the definition  of the estimator in \eqref{eq:def_est} does not require any convexity assumption on the objective function $V_\Sigma$.  Hence the theoretical analysis to be presented in the next sections does not make use of  convexity either. However, we  may prefer in practice  to select convex loss functions $\phi$ and $\psi$. In effect, the elements of $\mathcal{E}(Y)$ are not necessarily expressible through an explicit formula. So, in practice one would resort instead to numerical solvers to approach the solution of the underlying optimization problem. And the numerical search process is known to be more efficient when $V_\Sigma(Y,Z)$ is a convex function of $Z$ \cite{Boyd04-Book,grant_cvx_2017}.
Nevertheless it is fair to recognize that nonconvex optimization methods can be  successfully implemented as well though with less theoretical guarantees of reaching global optimality with general purpose solvers. 
% the members of $\mathcal{Y}$ can be computed efficiently using numerical solvers. 

%%%

\section{The resilience property of the proposed class of estimators}\label{part:iv}

In this section, we prove that the state estimator proposed in \eqref{eq:def_est}  possesses the resilience property under some conditions. More specifically, our main result states that  the estimation error  $\hat{X}-X$, \textit{i.e.}, the difference between the real state trajectory and the estimated one, is upper bounded by a bound  which does not depend on the amplitude of the outliers contained in $\left\{f_t\right\}$ provided that the number of such outliers is below some threshold. 

%%%
\subsection{Definition of the resilience of an estimator}\label{subsec:def-resilience}
\noindent Let us start with a formal definition of the resilience property for a state estimator of the form \eqref{eq:def_est}. 
%Consider for this purpose the scenario where the process noise $\left\{w_t\right\}$ in \eqref{eq:sys} is identically equal to zero. 
For this purpose, let $Y^\star=\begin{pmatrix} y_0^\star & \ldots,y_{T-1}^\star\end{pmatrix}$ denote the noise-free  output matrix of \eqref{eq:sys}, i.e., the output defined by $y_t^\star=C_tx_t^\star$, $t=0,\ldots,T-1$, with $x_{t+1}^\star=A_tx_{t}^\star$ and  $x_0^\star=x_0$ ($x_0$ being the true initial state of \eqref{eq:sys}). Let $\mathcal{F}_r$, a subset of $\Re^{n_y\times T}$ containing $0$, denote a matrix of measurement noise components. 
\begin{definition}[Resilience of an estimator]  \label{def:resilience}
The set-valued estimator $\mathcal{E}$ defined in \eqref{eq:def_est}  is called resilient against the set $\mathcal{F}_r$ of measurement noise if there exists a $\mathcal{K}_\infty$ function $g$ such that, when  the process noise $\left\{w_t\right\}$ is zero,  it holds that for any measurement noise matrix $F\in \Re^{n_y\times T}$, 
	\begin{equation}\label{eq:resilience-def}
		\|\hat{X}-X\| \leq g\big(\inf_{\Omega\in \mathcal{F}_r}d(F-\Omega)\big) \quad \forall \hat{X}\in \mathcal{E}(Y^\star+F)
	\end{equation}
with $\|\cdot\|$ denoting some norm, $d: \Re^{n_y\times T}\rightarrow\Re_{\geq 0}$ being a function subject to \ref{prop:prem}-\ref{prop:der}. Hence  $\inf_{\Omega\in \mathcal{F}_r}d(F-\Omega)$  denotes some pseudo-distance from $F$ to the set  $\mathcal{F}_r$.   
%%%%
\end{definition}
Since $0\in \mathcal{F}_r$, a consequence of property \eqref{eq:resilience-def} is that  $\mathcal{E}(Y^\star)=\left\{X\right\}$ which follows from \eqref{eq:resilience-def} for $F=0$. This  fact expresses correctness of the estimator in a nominal situation, i.e., its ability to recover the true state matrix $X$ in the absence of any uncertainty in the a priori known model. Indeed this condition is guaranteed to hold if the system $\Sigma$ is observable over the considered observation time horizon $T$. 
Another key implication of  condition \eqref{eq:resilience-def} is that   the estimation error associated with a  resilient estimator is totally insensitive  to any measurement noise matrix $F$ which lies in $\mathcal{F}_r$, that is, $\mathcal{E}(Y^\star+F)=\left\{X\right\}$ for all measurement noise $F\in \mathcal{F}_r$. Throughout this paper, we consider a set   $\mathcal{F}_r$ defined as follows.  For $F=\begin{pmatrix} f_0 & \cdots & f_{T-1}\end{pmatrix}\in \Re^{n_y\times T}$, let $\mathbb{T}_{0}^c(F)=\left\{t\in \mathbb{T}:\psi_t(f_{t})>0 \right\}$ and $\mathbb{T}_{0}(F)=\left\{t\in \mathbb{T}:\psi_t(f_{t})=0\right\}$. For $r$ a positive integer,  
%be an index set collecting all times for which $f_{t})$ exceeds $\varepsilon$ . 
define  $\mathcal{F}_r$ to be  the set of matrices in $\Re^{n_y\times T}$ having at most $r$ nonzero columns, i.e., 
\begin{equation}\label{eq:Fr}
	\mathcal{F}_r=\left\{F: |\mathbb{T}_{0}^c(F)|\leq r\right\}.
\end{equation}
For the need of making explicit the resilience property (in the results to be presented) with respect to the set $\mathcal{F}_r$, we will need the following lemma. 
\begin{lem}\label{lem:Fr}
 Consider the set $\mathcal{F}_r$ of measurement noise matrix defined in \eqref{eq:Fr} and select a (pseudo distance) function $d:\Re^{n_y\times T}\rightarrow\Re_{\geq 0}$ defined by $d(F)= \sum_{t\in \mathbb{T}} \psi_t(f_t)$ with $\psi_t$ a function defined as in \eqref{eq:psi_f} and having the properties \ref{prop:prem}-\ref{prop:der}. Then
$\inf_{\Omega\in \mathcal{F}_r}d(F-\Omega)$ is equal to the sum of the $T-r$ smallest terms in $\left\{\psi_t(f_t):t\in \mathbb{T} \right\}$. 
%(See Appendix \ref{subsec:Proof-Smallest} for a proof of this statement). 
\end{lem}
%%%
\begin{proof}
Let $I_r^c(F)$ denote the index set of the $r$ largest entries of the vector  $\begin{pmatrix}\psi_0(f_0) & \ldots & \psi_{T-1}(f_{T-1})\end{pmatrix}$ and $I_r(F)$ denote the index set of its $T-r$ smallest entries. Then, with the notation $\Omega=\begin{pmatrix}\omega_0 & \cdots & \omega_{T-1}\end{pmatrix}$,  
$$\begin{aligned}
	\inf_{\Omega\in \mathcal{F}_r}d(F-\Omega)& =\inf_{\Omega\in \mathcal{F}_r}\sum_{t\in \mathbb{T}}\psi_t(\omega_t-f_t)\\
	& = \inf_{\Omega\in \mathcal{F}_r}\Big[\!\!\sum_{t\in I_r(F)}\psi_t(\omega_t-f_t)+\!\!\sum_{t\in I_r^c(F)}\psi_t(\omega_t-f_t)\Big]\\
	& = \inf_{\substack{\Omega\in \mathcal{F}_r\\ \mathbb{T}_0(\Omega)=I_r(F)}}\Big[\!\sum_{t\in I_r(F)}\psi_t(f_t)+\!\!\!\sum_{t\in I_r^c(F)}\psi_t(\omega_t-f_t)\Big]\\
	& = \sum_{t\in I_r(F)}\psi_t(f_t)
\end{aligned} $$
where the notation $\mathbb{T}_0(\Omega)$ is defined in the lines preceding Eq. \eqref{eq:Fr}. %\ref{subsec:def-resilience}. 
The infimum is reached here for $\Omega\in \mathcal{F}_r$ such that $\omega_t=0$ $\forall t\in I_r(F)$ and $\omega_t=f_t$ $\forall  t\in I_r^c(F)$. Hence $\inf_{\Omega\in \mathcal{F}_r}d(F-\Omega)$ is, as claimed, the sum of the $T-r$ smallest values among $\left\{\psi_t(f_t):t\in \mathbb{T}\right\}$. 
\end{proof}

We will also introduce in the sequel a notion of \textit{approximate resilience} of $\mathcal{E}$. This terminology refers to Definition \ref{def:resilience} when the right hand side of \eqref{eq:resilience-def} is modified as 
$ g\big(\inf_{\Omega\in \mathcal{F}_r}d(F-\Omega)+\delta\big)$ with $\delta$ some nonnegative real number. 
 %
	
%%

%%%%%%%%%%%%%%%%%%%%%%%%%%%%%%%%%%%%%%%%%%%%%
\subsection{Some notational conventions for the analysis}
For convenience, let us introduce a few more notations. 
Let $\Phi:\Re^{n\times T}\rightarrow\Re_{\geq 0}$ and $\Psi_{\mathbb{T}}:\Re^{n_y\times T}\rightarrow\Re_{\geq 0}$ be defined by 
\begin{align}
&\Phi(Z)=\sum_{t\in\mathbb{T}'}\phi_t(z_{t+1}-A_tz_t)\\
&\Psi_{\mathbb{T}}(Z)=\sum_{t\in\mathbb{T}}\psi_t(C_tz_t) \label{eq:Psi-T}
\end{align}
We also  introduce the partial cost function $\Psi_{\mathbb{I}}$ defined for any $\mathbb{I}\subset \mathbb{T}$ by  $\Psi_{\mathbb{I}}(Z)=\sum_{t \in \mathbb{I}} \psi_t(C_tz_t)$. We will assume throughout the paper that the loss functions $\phi$ and $\psi$ satisfy a subset of the properties  \ref{prop:prem}--\ref{prop:der}  and in particular, when they are required to satisfy the GTI~\ref{prop:der},  we will denote the associated positive constants with $\gamma_\phi$ and $\gamma_\psi$ respectively. 
%%
%Additionally, it is  assumed that the GTI constant related to the output error measuring function $\psi$  is equal to $1$, i.e., $\gamma_\psi=1$. 
Finally, let us pose
\begin{equation}\label{eq:H(Z)}
	H_\Sigma(Z)= \lambda \gamma_\phi \Phi(Z)+ \gamma_\psi\Psi_{\mathbb{T}}(Z).  
\end{equation}
%%%%%%%%%%%%%%%%%%%%%%%%%%%%%%%%%%%%%%%%%%%%%%%%%%%%%%%%%%%%%%%%%%%%%%%%%%%%

We will organize the resilience analysis for the estimator \eqref{eq:def_est} along two cases: first, the scenario where the gross error vector sequence $\left\{s_t\right\}$ in \eqref{eq:f=s+v} is block-sparse in time and then the situation where it is both componentwise and temporally sparse. To be more precise, if we denote with $S\in \Re^{n_y\times T}$ the matrix formed from the sequence $\left\{s_t: t\in \mathbb{T}\right\}$, then the first case refers to columnwise block-sparsity of $S$ while the second is related to an entrywise sparsity. Note that the two cases coincide when the system of interest is  single-input single-output (SISO).

%%%%%%%%%%%%%%%%%
\subsection{Resilience to intermittent timewise block-sparse errors}
We start by introducing the concept of $r$-Resilience index of an estimator such as the one in \eqref{eq:def_est}, a measure which depends of the system matrices, the structure of the performance function $V_\Sigma$ and on the loss functions $\phi$ and $\psi$. 

%%%
\begin{definition}\label{def:pr}
Let   $r$ be a nonnegative integer. Assume that the system  $\Sigma$ in \eqref{eq:sys} is  observable on $\interval{0}{T-1}$.  We then define the $r$-Resilience index of the estimator $\mathcal{E}$ in \eqref{eq:def_est} (when applied to $\Sigma$) to be the real number $p_r$ given by 
\begin{equation}\label{eq:def_pr}
p_r=\sup_{\substack{Z\in \Re^{n\times T}\\ Z\neq 0}}\sup_{\substack{\mathbb{I}\subset \mathbb{T}\\|\mathbb{I}|=r}}\dfrac{\Psi_{\mathbb{I}}(Z)}{H_\Sigma(Z)}
\end{equation}
where $H_\Sigma$ is as defined in~\eqref{eq:H(Z)}. The supremum is taken here over all nonzero $Z$ in $\Re^{n\times T}$ and over all subsets  $\mathbb{I}$ of $\mathbb{T}$ with cardinality equal to $r$. 
\end{definition}
%%%%
\noindent The index $p_r$ can be interpreted as a quantitative measure of the observability of the system $\Sigma$. The observability is needed here to ensure that the denominator $H_\Sigma(Z)$ of \eqref{eq:def_pr} is different from zero whenever $Z\neq 0$.  Furthermore, it should be remarked that $\Psi_{\mathbb{I}}(Z)\leq H_\Sigma(Z)$ for any $\mathbb{I}\subset \mathbb{T}$,  which implies that the defining suprema of  $p_r$ are well-defined. Note that $p_r$ is an increasing function of $r$ and satisfies $p_0=0$ and $p_T=1$. More discussions on the numerical evaluation of $p_r$ are deferred to Section \ref{sec:Numerical-Eval}.

In order to state the resilience result for the estimator \eqref{eq:def_est} when applied to system $\Sigma$,  let us introduce a last notation to be used in the analysis. 
Let $\varepsilon\geq 0$ be a given number. For any admissible  sequence $\left\{f_t\right\}_{t\in \mathbb{T}}$ in \eqref{eq:sys}, we can  split the time index set $\mathbb{T}$ into two disjoint label sets,  
\begin{equation}\label{eq:def_teps}
	\mathbb{T}_\varepsilon=\left\{t\in \mathbb{T}: \psi_t(f_{t})\leq \varepsilon\right\},
\end{equation}
indexing those $f_{t}$ which are upper bounded by $\varepsilon$ and  $\mathbb{T}_\varepsilon^c=\left\{t\in \mathbb{T}: \psi_t(f_t)> \varepsilon\right\}$ indexing those $f_{t}$ which are possibly unbounded. It is important to keep in mind that $\varepsilon$ is just a parameter for decomposing the noise sequence in two parts in view of the analysis (and not necessarily a bound on elements of the sequence $\left\{\psi(f_{t})\right\}$). 
For example,  taking $\varepsilon=0$ would be appropriate for analyzing the properties of the estimator when $f_{t}$ is strictly sparse and each of its nonzero elements is treated as an outlier.

\begin{theorem}[Upper bound on the estimation error]\label{th:lb}
Consider the system $\Sigma$ defined by~\eqref{eq:sys} with output $Y$ together with the state estimator \eqref{eq:def_est} in which the loss functions $\phi$ and $\psi$ are assumed to obey \ref{prop:prem}--\ref{prop:der}. Denote with $\gamma_\phi$ and $\gamma_\psi$ the constants associated with  the GTI~\ref{prop:der}  and $q_\phi$ and $q_\psi$ the $\mathcal{K}_\infty$ functions associated with the GH~\ref{prop:4} \color{black} for $\phi$ and $\psi$ respectively. 
Let  $\varepsilon\geq 0$ and set  $r=|\mathbb{T}_{\varepsilon}^c|$. \\
If the system is observable on $\interval{0}{T-1}$ and $p_r< 1/(1+\gamma_\psi)$,  then for any norm  $\lVert \cdot \rVert$ on $\Re^{n\times T}$,
\begin{equation}\label{eq:th_lb}
\lVert \hat{X}-X\rVert \leq h\Big(\dfrac{2\beta_\Sigma(\varepsilon)}{D\big[1-(1+\gamma_\psi)p_r\big]}+\delta(\varepsilon)\Big) \quad \forall  \hat{X}\in \mathcal{E}(Y)
\end{equation} 
with $X$ denoting the true state matrix from \eqref{eq:sys},  $D=\min_{\left\|Z\right\|=1}H_{\Sigma}(Z)>0$ and $\beta_\Sigma(\varepsilon)$, $\delta(\varepsilon)$  and $h$ \color{black} being defined by 
\begin{align}
	& \beta_\Sigma(\varepsilon)=\lambda\sum_{t\in\mathbb{T}'}\phi_t (w_t)+\sum_{t\in \mathbb{T}_\varepsilon}\psi_t(f_t), \label{eq:beta} \\
		& \delta(\varepsilon) = \dfrac{1-\gamma_\psi}{D\left[1-(1+\gamma_\psi)p_r\right]}\sum_{t\in \mathbb{T}_\varepsilon^c} \psi_t(f_t) \label{eq:delta}\\
		 & h(\alpha)=\max\left\{q_\phi^{-1}(\alpha),q_\psi^{-1}(\alpha)\right\}, \alpha \in \Re_{\geq 0}\label{eq:def_h}
\end{align}
\end{theorem}

\begin{proof} Let $\hat{X}$ in $\mathcal{E}(Y)$.  By definition of $\mathcal{E}$ in \eqref{eq:def_est}, we have $V_{\Sigma}(Y,\hat{X})\leq V_{\Sigma}(Y,X)$, which gives explicitly 
\begin{multline}\label{eq:V(Xhat)<V(X)}
\lambda \sum_{t\in\mathbb{T'}}\phi_t(\hat{x}_{t+1}-A_t\hat{x}_t)+\sum_{t\in\mathbb{T}}\psi_t(y_t-C_t\hat{x}_t) \\ \leq \lambda \sum_{t\in\mathbb{T}'}\phi_t(w_t)+\sum_{t\in\mathbb{T}}\psi_t(f_t)
\end{multline}
Using the fact that $x_{t+1}=A_tx_t+w_t$ from \eqref{eq:sys} and applying the GTI and the symmetry properties of $\phi_t$, we can write
\begin{align*}
\phi_t (\hat{x}_{t+1}-A_t\hat{x}_{t})
%&=\phi_t(-\hat{x}_{t+1}+A_t\hat{x}_t) \text{ as } \phi_t(z) \text{ is symmetric}\\
& = \phi_t(\hat{x}_{t+1}-x_{t+1}-A_t(\hat{x}_t-{x}_t)+w_t)  \\
&\geq\gamma_\phi \phi_t(e_{t+1}-A_te_t)-\phi_t(w_t) %\text{ because of \eqref{eq:tri_gen}}
\end{align*}
with $e_t=\hat{x}_t-{x}_t$. It follows that the first term on the left hand side of \eqref{eq:V(Xhat)<V(X)} is lower bounded as follows
\begin{equation}\label{eq:INEQ-I}
	\lambda \sum_{t\in\mathbb{T'}}\left[\gamma_\phi \phi_t(e_{t+1}-A_te_t)-\phi_t(w_t)\right]\leq \lambda \sum_{t\in\mathbb{T'}}\phi_t(\hat{x}_{t+1}-A_t\hat{x}_t). 
\end{equation}
Similarly, by making use of \eqref{eq:sys}, observe that $\psi_t(y_t-C_t\hat{x}_t)=\psi_t(f_t-C_te_t)$. We now apply the GTI and symmetry of $\psi_t$ in two different ways  depending on whether $t$ belongs to $\mathbb{T}_\varepsilon$ or $\mathbb{T}_\varepsilon^c$: 
\begin{align*}
\forall t\in \mathbb{T}_\varepsilon,\:&& &\psi_t(y_t-C_t\hat{x}_t)\geq  \gamma_\psi \psi_t(C_te_t)-\psi_t(f_t)\\
\forall t\in \mathbb{T}_\varepsilon^c,\:&& &\psi_t(y_t-C_t\hat{x}_t)\geq \gamma_\psi \psi_t(f_t)-\psi_t(C_te_t)
\end{align*}
These inequalities imply that the second term  on the left hand side of \eqref{eq:V(Xhat)<V(X)} is lower bounded as follows
\begin{equation}\label{eq:INEQ-II}
	\begin{aligned}
		\sum_{t\in \mathbb{T}_\varepsilon}\left[\gamma_\psi \psi_t(C_te_t)-\psi_t(f_t)\right]&+\sum_{t\in \mathbb{T}_\varepsilon^c}\left[\gamma_\psi \psi_t(f_t)-\psi_t(C_te_t)\right] \\
		&\qquad \leq \sum_{t\in\mathbb{T}}\psi_t(y_t-C_t\hat{x}_t)	
	\end{aligned} 
\end{equation}
Combining \eqref{eq:V(Xhat)<V(X)}, \eqref{eq:INEQ-I} and \eqref{eq:INEQ-II} gives 
$$
\begin{aligned}
&\lambda \gamma_\phi \sum_{t\in\mathbb{T}'} \phi_t(e_{t+1}-A_te_t)+\gamma_\psi\sum_{t\in \mathbb{T}}\psi_t(C_te_t)\\
& \hspace{3cm} -(1+\gamma_\psi)\sum_{t\in \mathbb{T}_\varepsilon^c}\psi_t(C_te_t) \\
&\qquad \leq 2\Big(\lambda\sum_{t\in\mathbb{T}'} \phi_t (w_t)+\sum_{t\in \mathbb{T}_\varepsilon}\psi_t(f_t)\Big)+\sum_{t\in \mathbb{T}_\varepsilon^c}(1-\gamma_\psi)\psi_t(f_t)
\end{aligned}
$$
which, by using  \eqref{eq:Psi-T}, \eqref{eq:H(Z)}, \eqref{eq:beta}, can be written as 
$$
\begin{aligned}
&H_\Sigma(E)-(1+\gamma_\psi)\Psi_{\mathbb{T}_\varepsilon^c}(E)
\leq 2\beta_\Sigma(\varepsilon)+\sum_{t\in \mathbb{T}_\varepsilon^c}(1-\gamma_\psi)\psi_t(f_t)
%2\Big(\lambda\sum_{t\in\mathbb{T}'} \phi_t (w_t)+\sum_{t\in \mathbb{T}_\varepsilon}\psi_t(f_t)\Big)
\end{aligned}
$$
with $E=\begin{pmatrix}e_0 & e_1 & \cdots & e_{T-1}\end{pmatrix}$. 
%%%
As $\mathbb{T}_\varepsilon^c$ has $r$ elements, applying the definition of $p_r$ gives
\begin{equation}
\Psi_{\mathbb{T}_\varepsilon^c}(E)\leq p_r H_\Sigma(E)
\end{equation}
By the assumption that $p_r<1/(1+\gamma_\psi)$ we have that \\ $1-(1+\gamma_\psi)p_r>0$, and consequently, that 
\begin{equation}\label{eq:proof_i}
 H_{\Sigma}(E)\leq \dfrac{1}{1-(1+\gamma_\psi)p_r}\Big[2\beta_\Sigma(\varepsilon)+(1-\gamma_\psi)\sum_{t\in \mathbb{T}_\varepsilon^c} \psi_t(f_t)\Big]
\end{equation}
Given that the system is observable on $\interval{0}{T-1}$, it can be shown, thanks to Lemma~\ref{lem:fc} in the Appendix, that $H_\Sigma$ satisfies properties~\ref{prop:prem}--\ref{prop:4} (the proof of this is quite similar to that of Lemma \ref{lem:obs} in  Appendix~\ref{app:obs_proof}). We can therefore apply  Lemma~\ref{lem:lb} to conclude that for any norm $\lVert \cdot \rVert$ 
\begin{equation}\label{eq:proof_ii}
H_{\Sigma}(E)\geq Dq'(\lVert E\rVert)
\end{equation}
with  $D$ defined by $D=\min_{\lVert Z \rVert=1}H_\Sigma(Z)$  and $q'(\alpha)=\min\{q_\phi(\alpha),q_\psi(\alpha)\}$\color{black}. 
Finally, the result follows  by selecting $h$ to be $h=q'^{-1}$ with $q'^{-1}$ denoting the inverse of $q'$, which can be simplified to match its definition in~\eqref{eq:def_h}.   
\end{proof}
%%%%%%%%%%%%%%%%%%%%%%%%

\paragraph{Strict resilience}
Now we state our (strict) resilience result as a consequence of Theorem \ref{th:lb}  when  the output error-measuring loss function $\psi$ satisfies the triangle inequality.  
%%%
\begin{coro}[Resilience property]
\label{coro:resilience}
Let the conditions of Theorem \ref{th:lb} hold with the additional requirement that $\gamma_\psi=1$.  
Then 
\begin{equation}\label{eq:resilience}
\lVert \hat{X}-X\rVert \leq h\Big(\dfrac{2\beta_\Sigma(\varepsilon)}{D\big(1-2p_r\big)}\Big) \quad \forall  \hat{X}\in \mathcal{E}(Y). 
\end{equation} 
\end{coro}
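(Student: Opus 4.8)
The plan is to obtain this corollary as an immediate specialization of Theorem~\ref{th:lb} rather than by reproving anything from scratch. Since the hypotheses of the corollary are exactly those of Theorem~\ref{th:lb} together with the single extra requirement $\gamma_\psi=1$, I would simply invoke the theorem and substitute $\gamma_\psi=1$ into its error bound \eqref{eq:th_lb}.

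Two substitutions carry the whole argument. First, the denominator $1-(1+\gamma_\psi)p_r$ appearing in \eqref{eq:th_lb} becomes $1-2p_r$, and the compatibility condition $p_r<1/(1+\gamma_\psi)$ becomes $p_r<1/2$, which guarantees $1-2p_r>0$ so that the right-hand side remains well-defined. Second --- and this is the crux of the resilience statement --- the additive correction term $\delta$ defined in \eqref{eq:delta} carries the prefactor $(1-\gamma_\psi)$; setting $\gamma_\psi=1$ forces $\delta=0$, \emph{irrespective} of the value of $\sum_{t\in\mathbb{T}_\varepsilon^c}\psi_t(f_t)$. Feeding these two facts into \eqref{eq:th_lb} collapses the argument of $h$ to $2\beta_\Sigma(\varepsilon)/\big(D(1-2p_r)\big)$, which is exactly \eqref{eq:resilience}.

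The only point that deserves a word of justification is that $\gamma_\psi=1$ is an admissible choice, i.e.\ that property~\ref{prop:der} can genuinely hold with constant $1$; this is immediate since, as noted just after the statement of~\ref{prop:der}, norms satisfy the GTI with $\gamma_\xi=1$, in which case \ref{prop:der} reduces to the ordinary (reverse) triangle inequality for $\psi$. I do not expect any real obstacle here: there is no new estimate to establish and no $\mathcal{K}_\infty$ machinery to redeploy, since the function $h=q'^{-1}$ and the constant $D=\min_{\lVert Z\rVert=1}H_\Sigma(Z)$ are inherited verbatim from the proof of Theorem~\ref{th:lb}. The content to emphasize is interpretive rather than technical: because $\delta$ is the only channel through which the outlier magnitudes $\psi_t(f_t)$, $t\in\mathbb{T}_\varepsilon^c$, enter the bound, its vanishing is precisely what renders the estimation error insensitive to the extreme values of the sparse noise, thereby establishing strict resilience.
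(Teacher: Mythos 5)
Your proposal is correct and matches the paper's own proof, which likewise obtains the corollary by substituting $\gamma_\psi=1$ into the bound \eqref{eq:th_lb} and observing that $\delta$ in \eqref{eq:delta} vanishes, so the extreme values of $\{f_t\}$ drop out of the bound entirely. Your extra remark on the admissibility of $\gamma_\psi=1$ (e.g., norms satisfy the GTI with constant $1$) is harmless but unnecessary, since $\gamma_\psi=1$ is simply a hypothesis of the corollary.
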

%%%%%%%%%%%%%%%%%%
\begin{proof}
The proof is immediate by considering the bound  in \eqref{eq:th_lb} and observing that $\delta(\varepsilon)$  expressed in \eqref{eq:delta} vanishes when $\gamma_\psi=1$, hence eliminating completely the contribution of the extreme values of $\left\{f_t\right\}$ to the error bound. This gives immediately \eqref{eq:resilience}. It remains now to make clear that \eqref{eq:resilience} is consistent with the requirement \eqref{eq:resilience-def} of Definition \ref{def:resilience}. For this purpose note that the bound in \eqref{eq:resilience} can be written as $g(\beta_\Sigma(\varepsilon))$ with $g\in \mathcal{K}_\infty$ defined by $g(\alpha)=h\big(2\alpha/\big(D(1-2p_r)\big)\big)$. Moreover, $\beta_\Sigma(\varepsilon)$ reduces to $\sum_{t\in \mathbb{T}_\varepsilon}\psi_t(f_t)=\inf_{\Omega\in \mathcal{F}_r}d(F-\Omega)$ when the process noise is zero (see \eqref{eq:beta} and Lemma \ref{lem:Fr}). Hence, $\mathcal{E}$ qualifies, in the sense of Definition \ref{def:resilience}, as an estimator which is resilient to the set $\mathcal{F}_r$ of measurement noise defined in \eqref{eq:Fr}.  
\end{proof}
%%%%%%%%%%%%%%%%%%%%%%
The resilience property of the estimator \eqref{eq:def_est} lies here in the fact that, under the conditions of Theorem \ref{th:lb} and Corollary \ref{coro:resilience}, the bound in \eqref{eq:resilience} on the estimation error does not depend on the magnitudes of the extreme values of the noise sequence $\left\{f_{t}\right\}_{t\in \mathbb{T}}$. Considering in particular the function $\beta_\Sigma(\varepsilon)$, we remark that it can be overestimated as follows  
\begin{equation}\label{eq:upper-bound-beta}
\beta_\Sigma(\varepsilon)\leq \lambda\sum_{t\in\mathbb{T}'}\phi_t(w_t)+|\mathbb{T}_\varepsilon|\varepsilon. 
\end{equation}
The first term on the left hand side of \eqref{eq:upper-bound-beta}  represents the uncertainty brought by the dense noise $\left\{w_t\right\}$ over the whole state trajectory. It is bounded since $\left\{w_t\right\}$ is bounded by assumption (see the description of the system in Section \ref{part:ii}). The second term is a bound on the sum of those instances of $f_{t}$ whose magnitude is smaller that $\varepsilon$. \\ %multiplied by $\varepsilon$.
Because $\beta_\Sigma$ is a function of $\varepsilon$, the bound in \eqref{eq:resilience}  represents indeed a family of bounds  parameterized by $\varepsilon$. Since $\varepsilon$ is a mere analysis device, a question would be how to select it for the analysis to achieve the smallest bound. Such favorable values, say $\varepsilon^\star$, satisfy  
$$\varepsilon^\star\in \argmin_{\varepsilon\geq 0} \left\{h\Big(\dfrac{2\beta_\Sigma(\varepsilon)}{D(1-2p_r)}\Big): r=|\mathbb{T}_\varepsilon^c|,\: p_r<1/2 \right\}.  $$

Another interesting point is that the inequality stated by Theorem~\ref{th:lb} holds for any norm $\left\|\cdot\right\|$ on $\Re^{n\times T}$. Note though that the value of the bound depends (through the parameter $D$) on the specific norm used to measure the estimation error.  Moreover, different choices of the performance-measuring norm will result in different geometric forms for the uncertain set, that is, the ball (in the chosen norm) centered at the true state with radius equal to the upper bound displayed in \eqref{eq:resilience}.

We also observe that  the smaller the parameter $p_r$ is, the tighter the error bound will be, which suggests that the estimator is more resilient when $p_r$ is lower. A similar reasoning applies to the number $D$ which is desired to be large here.  
These two parameters (i.e., $p_r$ and $D$) reflect properties of the system whose state is being estimated. They can be interpreted, to some extent, as measures of the degree of observability of the system. In conclusion, the estimator inherits partially its resilience property from characteristics of the system being observed. This is consistent with the well-known fact that the more observable a system is, the more robustly its state can be estimated from output measurements.

\paragraph{Approximate resilience}
As discussed above, the triangle inequality property of the loss function $\psi$ is fundamental for achieving \textit{strict resilience}. When $\psi$ does not satisfy this property (i.e., when $\gamma_\psi\neq 1$), the term $\delta(\varepsilon)$ in \eqref{eq:th_lb} is unlikely to vanish completely. However we can prevent it from growing excessively by an appropriate choice of $\psi$ in \eqref{eq:psi_f}. To see this, assume for example that $\psi$ is defined by $\psi(y)=1-e^{-\ell(y)}$. Then since $\psi(y)\leq 1$ for all $y$, $\delta(\varepsilon)$ saturates to a constant value regardless of how large the $f_t$ are for $t\in \mathbb{T}_\varepsilon^c$. On the other hand, this choice introduces a new technical challenge related to the fact that the function $q'$ in \eqref{eq:proof_ii} is no longer a $\mathcal{K}_\infty$ function but a bounded (saturated) function. Handling this situation will require some additional condition on the upper bound in \eqref{eq:proof_i}.   
To sum up, by selecting a saturated loss function for $\psi$, we obtain the following \textit{approximate resilience} result.   
%%%%%%%%%%%%%%%%%%%%%%%%%% 
\begin{coro}[Case where $\gamma_\psi\neq 1$]\label{coro:Approximate-Resilience}
Let the conditions of Theorem \ref{th:lb} hold.   
Assume further that the loss function $\psi$ in \eqref{eq:psi_f} is defined by $\psi(y)=1-e^{-\ell(y)}$ where $\ell:\Re^{n_y}\rightarrow \Re_{\geq 0}$ satisfies \ref{prop:prem}--\ref{prop:der}. In particular, assume that property \ref{prop:4} is satisfied by $\ell$ with a  $\mathcal{K}_\infty$ function  $q$ such that \eqref{eq:homog_gen} is an equality relation.
Also, let $\varepsilon\geq 0$ be such that
\begin{equation}\label{eq:Cond-Bound}
	{b}(\varepsilon)\triangleq \dfrac{2\beta_\Sigma(\varepsilon)+r(1-\gamma_\psi)r^o(\varepsilon)}{ D\left[1-(1+\gamma_{\psi})p_r\right]}< 1,
\end{equation}
where $r=r(\varepsilon)=|\mathbb{T}_\varepsilon^c|$ and  $r^o(\varepsilon)=\max_{t\in \mathbb{T}_\varepsilon^c}\psi_t(f_t)\leq 1$. Then   there exists a continuous and strictly increasing function $h_{\sat}:\interval{0}{1}\rightarrow \interval{0}{1}$ (obeying $h_{\sat}(0)=0$ and $h_{\sat}(1)=1$) such that for any norm  $\lVert \cdot \rVert$ on $\Re^{n\times T}$,
\begin{equation}\label{eq:approximate-resilience}
\lVert \hat{X}-X\rVert \leq h_{\sat}^{-1}\big(b(\varepsilon)\big) \quad \forall  \hat{X}\in \mathcal{E}(Y). 
\end{equation} 
with $D$ in \eqref{eq:Cond-Bound} defined as in the proof of Theorem \ref{th:lb} using the norm $\lVert \cdot \rVert$.
\end{coro}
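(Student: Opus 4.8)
The plan is to recycle the argument of Theorem~\ref{th:lb} up to the a priori estimate \eqref{eq:proof_i} on $H_\Sigma(E)$ with $E=\hat X-X$ --- that chain uses only the GTI~\ref{prop:der} and the definition of $p_r$, so it survives verbatim when $\psi$ is saturated --- and then to substitute, for the $\mathcal{K}_\infty$ lower bound \eqref{eq:proof_ii}, a \emph{saturated} lower bound. The replacement is forced because \eqref{eq:proof_ii} is obtained by applying Lemma~\ref{lem:lb} to $H_\Sigma$, and Lemma~\ref{lem:lb} requires the generalized homogeneity~\ref{prop:4}, which $\psi(y)=1-e^{-\ell(y)}$ no longer possesses.

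Starting from \eqref{eq:proof_i}, I would use the boundedness $\psi(y)\le 1$: each term in the last sum obeys $\psi_t(f_t)\le r^o(\varepsilon)$, so $\sum_{t\in\mathbb{T}_\varepsilon^c}\psi_t(f_t)\le r\,r^o(\varepsilon)$, and since $1-\gamma_\psi\ge 0$ for this $\psi$ this majorization is legitimate. Feeding it into \eqref{eq:proof_i} and recognizing the definition of $b(\varepsilon)$ in \eqref{eq:Cond-Bound} yields the clean upper estimate $H_\Sigma(E)\le D\,b(\varepsilon)$.

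For the lower bound I would abandon homogeneity and exploit compactness. Observability and Lemma~\ref{lem:fc} still make $H_\Sigma$ positive definite and continuous (\ref{prop:prem}--\ref{prop:2}), so on the compact unit ball $\min_{\lVert Z\rVert=\rho}H_\Sigma(Z)>0$ for every $\rho\in(0,1]$. I would define $h_{\sat}(\rho)=\tfrac1D\min_{\rho\le\lVert Z\rVert\le 1}H_\Sigma(Z)$ (a running minimum, regularized into a strictly increasing minorant should flats appear). By construction $h_{\sat}:\interval{0}{1}\rightarrow\interval{0}{1}$ is continuous, strictly increasing, with $h_{\sat}(0)=0$, $h_{\sat}(1)=\tfrac1D\min_{\lVert Z\rVert=1}H_\Sigma(Z)=1$, and $H_\Sigma(Z)\ge D\,h_{\sat}(\lVert Z\rVert)$ whenever $\lVert Z\rVert\le 1$.

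The step I expect to be the main obstacle is confining the error to $\lVert E\rVert\le 1$, the only region in which the local bound holds; this is precisely what the hypothesis $b(\varepsilon)<1$ is engineered for. I would promote the local inequality to the global, saturating one $H_\Sigma(Z)\ge D\,h_{\sat}(\min(\lVert Z\rVert,1))$, which reduces to checking $H_\Sigma(Z)\ge D$ for all $\lVert Z\rVert\ge 1$. The natural route is ray-monotonicity: writing $Z=\lVert Z\rVert\,(Z/\lVert Z\rVert)$ one gets $H_\Sigma(Z)\ge H_\Sigma(Z/\lVert Z\rVert)\ge D$ as soon as $H_\Sigma$ is nondecreasing along rays for scalings $\ge 1$. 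Establishing this ray-monotonicity from the assumed properties of $\ell$ and $\phi$ --- rather than simply postulating it --- is where the argument must be handled with the most care, since the generalized homogeneity~\ref{prop:4} alone does not deliver pointwise radial monotonicity. Granting it, $H_\Sigma(E)\le D\,b(\varepsilon)<D=D\,h_{\sat}(1)$ excludes $\lVert E\rVert\ge 1$; hence $\lVert E\rVert<1$, the local bound applies, $h_{\sat}(\lVert E\rVert)\le b(\varepsilon)$, and inverting the strictly increasing $h_{\sat}$ gives $\lVert \hat X-X\rVert\le h_{\sat}^{-1}\big(b(\varepsilon)\big)$, as claimed.
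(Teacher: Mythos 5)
Your first half is correct and coincides with the paper's: inequality \eqref{eq:proof_i} only uses \ref{prop:der} and the definition of $p_r$, so it survives for the saturated $\psi$, and the majorization $\sum_{t\in\mathbb{T}_\varepsilon^c}\psi_t(f_t)\le r\,r^o(\varepsilon)$ is legitimate (indeed $\gamma_\psi\le 1$ holds for \emph{any} loss satisfying \ref{prop:prem} and \ref{prop:der}, as seen by taking $Z_2=0$ in \eqref{eq:tri_gen}), which yields $H_\Sigma(E)\le D\,b(\varepsilon)$. Your compactness construction of $h_{\sat}$ as a running minimum over the annuli $\{\rho\le\lVert Z\rVert\le 1\}$ is also a workable substitute for the paper's lower bound \emph{on the unit ball}: it is continuous, takes the right endpoint values, and its possible flats are easily removed (e.g.\ replace $h_{\sat}(\rho)$ by $\rho\,h_{\sat}(\rho)$, which is a continuous, strictly increasing minorant with the same endpoint values, since $H_\Sigma$ is positive definite on each annulus under observability).

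The genuine gap is the step you explicitly ``grant'': $H_\Sigma(Z)\ge D$ whenever $\lVert Z\rVert\ge 1$. Without it, the hypothesis $b(\varepsilon)<1$ cannot be used to confine $E=\hat X-X$ to the unit ball, your local bound never applies, and the argument collapses exactly at its load-bearing joint; so as written the proof is incomplete. The claim is true, and the paper's proof is organized around supplying it: Lemma \ref{lem:exponential-cost-P1-P5}, via Lemma \ref{lem:Infimum}, shows that $\psi=1-e^{-\ell}$ does \emph{not} lose property \ref{prop:4} --- it satisfies it with a comparison function $q^\star\in\mathcal{K}_{\sat,1}$ that equals $1$ at every argument $\ge 1$. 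By Lemma \ref{lem:fc}, $H_\Sigma$ then inherits the inequality $H_\Sigma(Z)\ge q'(1/|\lambda|)\,H_\Sigma(\lambda Z)$ with $q'$ saturated at value $1$, and the computation in the proof of Lemma \ref{lem:lb}, applied with $\lambda=1/\lVert Z\rVert\le 1$, gives $H_\Sigma(Z)\ge q'(\lVert Z\rVert)\,H_\Sigma\big(Z/\lVert Z\rVert\big)\ge D$ for all $\lVert Z\rVert\ge 1$ --- which is precisely your ``ray-monotonicity for scalings $\ge 1$''. In other words, your concern that \ref{prop:4} ``does not deliver pointwise radial monotonicity'' is where you stopped one lemma short: the \emph{saturated} form of \ref{prop:4} enjoyed by the exponential loss delivers exactly the one-sided radial comparison you need. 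Note also that once this saturated $q'$ is in hand, your compactness construction becomes unnecessary: the paper simply takes $h_{\sat}$ to be the restriction of $q'$ to $\interval{0}{1}$, deduces $q'(\lVert E\rVert)\le b(\varepsilon)<1$, concludes $\lVert E\rVert<1$ from the saturation, and inverts.
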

%%%%
\begin{proof}
%Depart from \eqref{eq:proof_i}, we just need to construct the lower bound differently. 
That the particular function $\psi$ specified in the statement of the corollary satisfies the properties \ref{prop:prem}--\ref{prop:3} and \ref{prop:der} is a question which is fully answered by Lemma \ref{lem:exponential-cost-P1-P5} in Section \ref{subsec:appendix-Corollary} of the appendix.  Consequently, let us observe that the inequality \eqref{eq:proof_i} arising in the proof of Theorem \ref{th:lb}  still holds true here. As to \eqref{eq:proof_ii}, it also holds as well but with the slight difference that $q'$ is just a saturated function  in $\mathcal{K}_{\sat,1}$ (as defined in the notation section) with bounded range $\interval{0}{1}$. This results in fact from Lemma \ref{lem:exponential-cost-P1-P5}  and the proof of Lemma \ref{lem:lb}.  We can therefore write 
$$
\begin{aligned}
	q'(\left\|E\right\|)&\leq \dfrac{1}{D\left(1-(1+\gamma_\psi)p_r\right)}\Big[2\beta_\Sigma(\varepsilon)+(1-\gamma_\psi)\sum_{t\in \mathbb{T}_\varepsilon^c} \psi_t(f_t)\Big]\\
	&\leq b(\varepsilon)< 1%\dfrac{2\beta_\Sigma(\varepsilon)+r(1-\gamma_\psi)r^o(\varepsilon)}{D\left[1-(1+\gamma_{\psi})p_r\right]}\leq 1
\end{aligned}
$$
with $q'\in \mathcal{K}_{\sat,1}$. %The last inequality corresponds to the condition \eqref{eq:Cond-Bound}.
  Note from the definition of the class $\mathcal{K}_{\sat,1}$, that $q'(\left\|E\right\|)<1$ implies that $\left\|E\right\|<1$ (since otherwise we would have $q'(\left\|E\right\|)= 1$). 
Letting $h_{\sat}$ be the restriction of such a function $q'$ on $\interval{0}{1}$, we have $q'(\left\|E\right\|)=h_{\sat}(\left\|E\right\|)\leq b(\varepsilon)$  with $h_{\sat}$ being invertible. 
 We can now apply $h_{\sat}^{-1}$ to each member of this inequality to reach the desired result since $b(\varepsilon)$ lies in the range of $h_{\sat}$. 
\end{proof}

%%%%%%%%%%%%%%%%%%%%%

%%%%%%%%%%%%%%%%%%%%%%%%%%%%%%%%%
\subsection{Resilience to attacks on the individual  sensors}
We now consider the situation where the matrix  $S\in \Re^{n_y\times T}$ formed from $\left\{s_t\right\}$ in \eqref{eq:f=s+v} may be sparse entrywise i.e., a relatively important fraction of the entries of $S$ are equal to zero\footnote{In this case, the sparsity is expressed in term of fraction of \textit{nonzero entries} in the matrix $S$ whereas in the timewise block-sparsity case, the sparsity level is measured in term of the fraction of \textit{nonzero columns} in $S$.}.
This case is relevant when any individual sensor may be faulty (or compromised by an attacker) at any time.  To address the resilient state estimation problem in this scenario, we select the loss functions $\psi_t$  to have a separable structure.  To be more specific, let $\psi_t$ be such that for any $e=\bbm e_1 & \cdots & e_{n_y}\eem\in \Re^{n_y}$
 \begin{equation}\label{eq:psi_dec}
\psi_t(e)=\sum_{i=1}^{n_y} \psi_{ti}(e_i)
\end{equation} 
where, consistently with \eqref{eq:psi_f},   $\psi_{ti}(e_i)=\psi_i^{\circ}(V_{ti} e_i)$ with $V_{ti}\in \Re_{>0}$ and  $\psi_i^{\circ}:\Re\rightarrow\Re_+$, $i=1,\ldots,n_y$, being some loss functions on $\Re$ enjoying the properties \ref{prop:prem}--\ref{prop:der}.  As in the statement of Corollary \ref{coro:resilience}, we shall require that $\gamma_{\psi_i^{\circ}}=1$. It follows that one can set $\psi_i^{\circ}$ to be the absolute value without loss of generality. Let therefore set $\psi_i^{\circ}(e_i)=|e_i|$ so that $\psi_{ti}(e_i)=\left|V_{ti}e_i\right|$ and 
 \begin{equation}\label{eq:psi-L1}
\psi_t(e)=\left\|V_te\right\|_1
\end{equation} 
with $V_t$ being a diagonal matrix having the $V_{ti}$, $i=1,\ldots,n_y$, on its diagonal. 
%%%%%%%%%%%%

To state  the resilience property in this particular setting,  we partition the index set $\mathbb{T}\times \mathbb{S}$ of the entries of $S$ as 
\begin{equation}\label{eq:Lambda-c}
	\begin{aligned}
	{\Lambda}_{\varepsilon}&=\{(t,i)\in \mathbb{T}\times \mathbb{S}\: :\: \psi_{ti}(f_{ti})\leq \varepsilon\}\\
	{\Lambda}_{\varepsilon}^c&=\{(t,i)\in \mathbb{T}\times \mathbb{S}\: : \: \psi_{ti}(f_{ti})> \varepsilon\}
	\end{aligned}
\end{equation}
with $f_{ti}$ denoting the $i$-th entry of the vector $f_t\in \Re^{n_y}$.  Also, in order to account for the specificity of the new scenario, let us refine slightly the $r$-Resilience index  \eqref{eq:def_pr} to be 
\begin{equation}\label{eq:pr_norm1}
\tilde{p}_r=\sup_{\substack{Z\in \Re^{n\times T}\\Z\neq 0}}\:\sup_{\substack{\mathbb{I}\subset \mathbb{T}\times\mathbb{S}\\|\mathbb{I}|=r}}\dfrac{\sum_{(t,i)\in \mathbb{I}}\psi_{ti}(c_{ti}^\top z_t)}{H_\Sigma(Z)}
\end{equation}
where $H_\Sigma$ is defined as in \eqref{eq:H(Z)} from $\psi_t$ in \eqref{eq:psi-L1} and  $c_{ti}^\top$ is $i$-th row of the observation matrix $C_t$.
%and $\Psi_{\mathbb{I}}$ is expressed for all $\mathbb{I}\subset \mathbb{T}\times\mathbb{S}$ by 
%% $\Psi_{\mathbb{I}}$ is the sum of $\psi_{t,s}$ terms indexed by $\mathbb{I}$, \textit{i.e.}
%\begin{equation}
%\Psi_{\mathbb{I}}(Z)=\sum_{(t,i)\in \mathbb{I}}|V_{ti} c_{ti}^\top z_t|
%\end{equation}
 The difference between $p_r$ in \eqref{eq:def_pr} and $\tilde{p}_r$ in \eqref{eq:pr_norm1}  resides in the index sets for counting possible error occurrences which are $\mathbb{T}$ and $\mathbb{T}\times \mathbb{S}$, respectively.  
%%%%%%

With these notations, we can provide the following theorem which is the analog of Corollary \ref{coro:resilience} in the case where the disturbance matrix $S$ (see Eq. \eqref{eq:f=s+v}) is entrywise sparse. 
%%%%%%%%%%%%%%%%%%%
\begin{theorem}[Upper bound on the estimation error: Separable case]\label{th:norm1}
Consider the system $\Sigma$ defined by~\eqref{eq:sys} with output $Y$ together with the state estimator \eqref{eq:def_est} in which  $\phi$ is assumed to obey \ref{prop:prem}--\ref{prop:der} and $\psi$ is defined as in \eqref{eq:psi-L1}. 
%Denote with $\gamma_\phi$ the constant associated with  the GTI~\ref{prop:der} for $\phi$. 
Let  $\varepsilon\geq 0$ and set  $r=|\Lambda_{\varepsilon}^c|$ with $\Lambda_{\varepsilon}^c$ defined in \eqref{eq:Lambda-c}. \\
If the system is observable on $\interval{0}{T-1}$ and if $\tilde{p}_r< 1/2$,  then there exists a $\mathcal{K}_\infty$ function $\tilde{h}$ such that for all norm  $\lVert \cdot \rVert$ on $\Re^{n\times T}$, 
\begin{equation}\label{eq:th_ub_dm}
\lVert \hat{X}-X\rVert \leq \tilde{h}\left(\dfrac{2\tilde{\beta}_\Sigma(\varepsilon)}{\tilde{D}(1-2\tilde{p}_r)}\right) \quad \forall  \hat{X}\in \mathcal{E}(Y)
\end{equation} 
with $X$ denoting the true state matrix from \eqref{eq:sys} and $\tilde{\beta}_\Sigma(\varepsilon)$  defined by 
%\begin{equation}\label{eq:beta-tilde}
$$
\begin{aligned}
	&\tilde{\beta}_\Sigma(\varepsilon)=\lambda\sum_{t\in\mathbb{T}'}\phi_t (w_t)+\sum_{(t,i)\in \Lambda_\varepsilon}\psi_{ti}(f_{ti})
\end{aligned}
$$
$\tilde{D}$  and $\tilde{h}$ \color{black} are defined as in the statement of Theorem \ref{th:lb} with $H_\Sigma$ being constructed from $\psi$ in \eqref{eq:psi-L1}. 
\end{theorem}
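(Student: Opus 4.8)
The plan is to replay the argument of Theorem~\ref{th:lb}, specialized as in Corollary~\ref{coro:resilience} to the case $\gamma_\psi=1$, but carrying out every triangle-inequality manipulation at the level of the individual scalar entries $(t,i)\in\mathbb{T}\times\mathbb{S}$ rather than at the level of the time stamps $t\in\mathbb{T}$. First I would fix $\hat X\in\mathcal{E}(Y)$ and start from the optimality inequality $V_\Sigma(Y,\hat X)\le V_\Sigma(Y,X)$ written out as in \eqref{eq:V(Xhat)<V(X)}. Setting $e_t=\hat x_t-x_t$ and $E=\begin{pmatrix}e_0&\cdots&e_{T-1}\end{pmatrix}$, the dynamics term is treated exactly as in Theorem~\ref{th:lb}: invoking $x_{t+1}=A_tx_t+w_t$ together with the GTI \ref{prop:der} and the symmetry \ref{prop:3} of $\phi_t$ gives $\phi_t(\hat x_{t+1}-A_t\hat x_t)\ge \gamma_\phi\phi_t(e_{t+1}-A_te_t)-\phi_t(w_t)$, so the dynamics part of the left-hand side is bounded below by $\lambda\gamma_\phi\Phi(E)-\lambda\sum_{t\in\mathbb{T}'}\phi_t(w_t)$.

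The only genuinely new ingredient is the output term. Since $\psi_t(e)=\|V_te\|_1=\sum_{i\in\mathbb{S}}\psi_{ti}(e_i)$ with each scalar loss $\psi_{ti}(\cdot)=|V_{ti}\,\cdot\,|$ obeying the triangle inequality, hence the GTI with constant $1$, I would expand $\psi_t(y_t-C_t\hat x_t)=\sum_{i\in\mathbb{S}}\psi_{ti}(f_{ti}-c_{ti}^\top e_t)$ using \eqref{eq:sys}, and apply the scalar GTI entrywise in the two opposite orientations dictated by the partition \eqref{eq:Lambda-c}:
\[
\psi_{ti}(f_{ti}-c_{ti}^\top e_t)\ge
\begin{cases}
\psi_{ti}(c_{ti}^\top e_t)-\psi_{ti}(f_{ti}), & (t,i)\in\Lambda_\varepsilon,\\
\psi_{ti}(f_{ti})-\psi_{ti}(c_{ti}^\top e_t), & (t,i)\in\Lambda_\varepsilon^c.
\end{cases}
\]
Summing over $\mathbb{T}\times\mathbb{S}$ and using that the full entrywise sum $\sum_{(t,i)}\psi_{ti}(c_{ti}^\top e_t)$ reassembles into $\Psi_{\mathbb{T}}(E)$, the output part is bounded below by $\Psi_{\mathbb{T}}(E)-2\sum_{(t,i)\in\Lambda_\varepsilon^c}\psi_{ti}(c_{ti}^\top e_t)-\sum_{\Lambda_\varepsilon}\psi_{ti}(f_{ti})+\sum_{\Lambda_\varepsilon^c}\psi_{ti}(f_{ti})$.

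Combining the two lower bounds with the right-hand side $\lambda\sum_{t\in\mathbb{T}'}\phi_t(w_t)+\sum_{(t,i)}\psi_{ti}(f_{ti})$ of the optimality inequality, and recalling that $\gamma_\psi=1$ so that $H_\Sigma(E)=\lambda\gamma_\phi\Phi(E)+\Psi_{\mathbb{T}}(E)$, the terms $\sum_{\Lambda_\varepsilon^c}\psi_{ti}(f_{ti})$ cancel on both sides — this is precisely the mechanism that made $\delta$ vanish in Corollary~\ref{coro:resilience} — and one is left with
\[
H_\Sigma(E)-2\!\!\sum_{(t,i)\in\Lambda_\varepsilon^c}\!\!\psi_{ti}(c_{ti}^\top e_t)\le 2\tilde{\beta}_\Sigma(\varepsilon).
\]
Since $|\Lambda_\varepsilon^c|=r$, the definition \eqref{eq:pr_norm1} of the refined index gives $\sum_{(t,i)\in\Lambda_\varepsilon^c}\psi_{ti}(c_{ti}^\top e_t)\le \tilde{p}_r H_\Sigma(E)$, whence $(1-2\tilde{p}_r)H_\Sigma(E)\le 2\tilde{\beta}_\Sigma(\varepsilon)$; the hypothesis $\tilde{p}_r<1/2$ makes the factor positive, so $H_\Sigma(E)\le 2\tilde{\beta}_\Sigma(\varepsilon)/(1-2\tilde{p}_r)$. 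To conclude, I would note that $\psi_t=\|V_t\,\cdot\,\|_1$ is a genuine norm post-composed with a nonsingular matrix, so that under observability $H_\Sigma$ inherits \ref{prop:prem}--\ref{prop:4} via Lemma~\ref{lem:fc}; Lemma~\ref{lem:lb} then furnishes a $\mathcal{K}_\infty$ function $q'$ with $H_\Sigma(E)\ge \tilde{D}\,q'(\|E\|)$ and $\tilde{D}=\min_{\|Z\|=1}H_\Sigma(Z)>0$. Setting $\tilde{h}=q'^{-1}$ and applying it to $q'(\|E\|)\le 2\tilde{\beta}_\Sigma(\varepsilon)/\big(\tilde{D}(1-2\tilde{p}_r)\big)$ yields \eqref{eq:th_ub_dm}.

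Conceptually this is a transcription of Corollary~\ref{coro:resilience}, and I do not expect any new analytic difficulty. The step demanding the most care is the entrywise bookkeeping in the second paragraph: applying the scalar GTI with the correct orientation on $\Lambda_\varepsilon$ versus $\Lambda_\varepsilon^c$, verifying that the split output sum reassembles into $\Psi_{\mathbb{T}}(E)$ (and hence into $H_\Sigma$), and checking that $\gamma_\psi=1$ — inherited from $\psi_{ti}=|V_{ti}\,\cdot\,|$ — is what forces the extreme-value mass $\sum_{\Lambda_\varepsilon^c}\psi_{ti}(f_{ti})$ to cancel outright rather than merely be bounded, which is the exact origin of the resilience. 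The only structural point to isolate is that $H_\Sigma$ still enjoys \ref{prop:prem}--\ref{prop:4} for the $\ell_1$-based $\psi$, but this is immediate since $\|V_t\,\cdot\,\|_1$ is a norm.
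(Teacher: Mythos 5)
Your proposal is correct and follows exactly the route the paper intends: the paper itself only remarks that Theorem~\ref{th:norm1} is proved "by a similar line of arguments" as Theorem~\ref{th:lb} with $\psi$ taken as the $\ell_1$ norm and the index set replaced by $\mathbb{T}\times\mathbb{S}$, which is precisely the entrywise transcription you carry out. Your bookkeeping — the two orientations of the scalar GTI on $\Lambda_\varepsilon$ versus $\Lambda_\varepsilon^c$, the cancellation of $\sum_{\Lambda_\varepsilon^c}\psi_{ti}(f_{ti})$ due to $\gamma_\psi=1$, the bound via $\tilde{p}_r$, and the final appeal to Lemmas~\ref{lem:fc} and~\ref{lem:lb} — is sound and fills in the details the paper leaves implicit.
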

To some extent, Theorem \ref{th:norm1} can be viewed as a special case of Theorem \ref{th:lb} in which the  function $\psi$ is taken to be the $\ell_1$ norm and the data set is modified to be $\mathbb{T}\times \mathbb{S}$. Hence the proof follows a similar line of arguments as that of Theorem \ref{th:norm1}. Again it is not hard to see that the result of Theorem \ref{th:norm1} implies the property of resilience with respect to the set $\mathcal{F}_r$ in \eqref{eq:Fr} of measurement noise in the sense of  Definition \ref{def:resilience} (see the proof of Corollary \ref{coro:resilience}).   
%%%%%%%%%%%%%%%

An interesting property of the estimator can be stated in the absence of dense noise, i.e., when only the sparse noise is active:
\begin{coro}\label{coro:Exact-Recovery}
Consider the system $\Sigma$  defined by~\eqref{eq:sys} and let $r=|\Lambda_0^c|$ (which means that we consider every nonzero occurrence of $f_{it}$ as an outlier by taking $\varepsilon=0$ in \eqref{eq:Lambda-c}). If the conditions of Theorem \ref{th:norm1} hold, $\tilde{p}_r<1/2$, and if  $w_t=0$ in \eqref{eq:sys} for all $t$, then the estimator defined by~\eqref{eq:def_est} retrieves exactly the state trajectory of the system, i.e., 
$\mathcal{E}(Y)=\left\{X\right\}$. 
\end{coro}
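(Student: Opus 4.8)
The plan is to obtain the result as a direct specialization of Theorem~\ref{th:norm1} to the choice $\varepsilon=0$, showing that under the stated hypotheses the right-hand side of the error bound \eqref{eq:th_ub_dm} collapses to zero. First I would verify that the hypotheses of Theorem~\ref{th:norm1} are in force: observability on $\interval{0}{T-1}$ and $\tilde p_r<1/2$ with $r=|\Lambda_0^c|$ are assumed, while $\psi$ defined as in \eqref{eq:psi-L1} together with $\phi$ obeying \ref{prop:prem}--\ref{prop:der} are exactly the loss-function requirements of that theorem. Applying it with $\varepsilon=0$ then yields, for every $\hat X\in\mathcal{E}(Y)$, the inequality $\lVert\hat X-X\rVert\leq \tilde h\big(2\tilde\beta_\Sigma(0)/[\tilde D(1-2\tilde p_r)]\big)$.

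The key computation is to show that $\tilde\beta_\Sigma(0)=0$. Recall $\tilde\beta_\Sigma(0)=\lambda\sum_{t\in\mathbb{T}'}\phi_t(w_t)+\sum_{(t,i)\in\Lambda_0}\psi_{ti}(f_{ti})$. Since $w_t=0$ for all $t$ by hypothesis and $\phi_t$ is positive definite (property \ref{prop:prem}, hence $\phi_t(0)=0$), the first sum vanishes term by term. For the second sum I would note that, with $\varepsilon=0$, the set $\Lambda_0$ collects exactly those pairs $(t,i)$ for which $\psi_{ti}(f_{ti})\leq 0$; because each $\psi_{ti}$ is nonnegative, this forces $\psi_{ti}(f_{ti})=0$ throughout $\Lambda_0$, so the second sum is identically zero as well. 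Hence $\tilde\beta_\Sigma(0)=0$.

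It then remains to conclude. Since $\tilde h$ is a $\mathcal{K}_\infty$ function it is zero at zero, so the bound reads $\lVert\hat X-X\rVert\leq \tilde h(0)=0$, forcing $\hat X=X$ for every $\hat X\in\mathcal{E}(Y)$. To upgrade this to the set equality $\mathcal{E}(Y)=\{X\}$ I would invoke well-definedness: under observability and properties \ref{prop:prem}--\ref{prop:der} on $\phi$ and $\psi$ (the latter being met by the $\ell_1$ norm \eqref{eq:psi-L1}), Proposition~\ref{prop:Well-definedness} guarantees $\mathcal{E}(Y)\neq\emptyset$. A non-empty minimizing set all of whose members coincide with $X$ is precisely $\{X\}$.

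The only point demanding genuine care---rather than any real obstacle---is this last non-emptiness step: the estimate of Theorem~\ref{th:norm1} is quantified over $\hat X\in\mathcal{E}(Y)$ and would hold vacuously were the estimator empty, so the conclusion $\mathcal{E}(Y)=\{X\}$ cannot be drawn from the bound alone and must be coupled with Proposition~\ref{prop:Well-definedness}. Everything else is a direct substitution of $\varepsilon=0$ and $w_t\equiv 0$ into the already-established inequality.
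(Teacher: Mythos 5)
Your proof is correct and follows exactly the paper's route: the paper's own argument is the one-line observation that $\tilde{\beta}_\Sigma(0)=0$ when $w_t\equiv 0$ and $\varepsilon=0$, so the bound of Theorem~\ref{th:norm1} collapses to zero. Your additional care in spelling out why both sums vanish and in invoking Proposition~\ref{prop:Well-definedness} for non-emptiness of $\mathcal{E}(Y)$ is a sound (and slightly more complete) rendering of the same argument.
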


\begin{proof}
This follows directly from the fact that $\tilde{\beta}_\Sigma(0)=0$ in the case where there is no dense noise $w_t$ and $\varepsilon=0$. 
\end{proof}
\noindent Therefore, the estimator \eqref{eq:def_est} has the exact recoverability property, that is, it is able to recover exactly the true state of the system \eqref{eq:sys} when only the sparse noise is active in the measurement equation provided that the number $r=|\Lambda_0^c|$ of nonzero in the sequence $\left\{f_{ti}\right\}_{(t,i)\in \mathbb{T}\times \mathbb{S}}$ is small enough for $\tilde{p}_r$ to be less than $1/2$. According to our analysis, the number of outliers that can be handled by the estimator in this case can be underestimated by 
\begin{equation}
	\max\big\{r: \tilde{p}_r<1/2\big\}. 
\end{equation}

%%%%%%%%%%%%%%%%%%%%%%%%%%%%%%%%%%%%%%%%%%%%%%%%%%%%%%%%%%%%%%%%%%%%%%%%%%%%%%%%%
\section{A special variant of the estimator $\mathcal{E}$}\label{sec:Exact-Recoverability}
%%%%
In this section, we consider a constrained reformulation of the estimator $\mathcal{E}$ defined in \eqref{eq:def_est}. As will be shown shortly, this reformulation  also enjoys the resilience property but under a condition which is more easily verifiable from a  numerical perspective. %present some other interesting properties. 
\color{black} 

We start by considering the  simple scenario where the process noise $w_t$ in \eqref{eq:sys} is identically equal to zero and  the sequence $\left\{f_t\right\}$ is sparse in the sense that its dense component $v_t$ displayed in \eqref{eq:f=s+v}  does not exist.  In this setting  we can obtain a more resilient (to sparse noise in the measurement) estimator than \eqref{eq:def_est} by making it aware of the absence of dense process noise. This can be achieved by contraining the searched state matrix to be in the set $\mathcal{Z}_\Sigma\subset \Re^{n\times T}$  defined by 
%$$\left\{Z= \begin{pmatrix}z_0 & \cdots & z_{T-1}\end{pmatrix}: z_{t+1}=A_tz_t, t=0,\ldots,T-2 \right\} $$
$$\mathcal{Z}_\Sigma=\left\{Z= \begin{pmatrix}z_0 & A_0z_0 & \cdots & A_{T-1}\cdots A_1 A_0 z_0 \end{pmatrix}:z_0\in \Re^n \right\} $$
of possible trajectories starting in any initial state $z_0\in \Re^n$. Following this idea, we consider the estimator $\mathcal{E}^\circ$  defined by 
$$ \mathcal{E}^\circ(Y)=\argmin_{Z \in \mathcal{Z}_\Sigma}V_{\Sigma}(Y,Z). $$
Then $\mathcal{E}^\circ(Y)$ can be rewritten more simply in the form  
\begin{equation}\label{eq:Constrained-Estimator}
	\begin{aligned}
		\mathcal{E}^\circ(Y)&=\Big\{Z= \begin{pmatrix}z_0 & A_0z_0 & \cdots & A_{T-1}\cdots A_1 A_0 z_0 \end{pmatrix}:\Big. \\
		&\hspace{3.5cm} \Big. z_0\in \argmin_{z\in\Re^{n}}V_\Sigma^\circ(Y,z) \Big\} 
	\end{aligned}
\end{equation}
where 
\begin{equation}\label{eq:V-Sigma-circ}
	V_\Sigma^\circ(Y,z)=\sum_{t\in\mathbb{T}}\psi_t(y_t- M_t z)
\end{equation}
with 
\begin{equation}\label{eq:Mt}
	M_t=C_tA_{t-1}\cdots A_1A_0
\end{equation}
for all $t\geq 1$ and $M_0=C_0$.  
Hence the estimation of the state trajectory reduces to estimating the initial state $x_0$. This can be viewed as a robust regression problem, like the ones discussed in \cite{Han19-TAC,bako_class_2017}. Generalizing a result in \cite{bako_class_2017}, we derive next a necessary and sufficient condition for exact recovery of the true state, which holds if and only if
$\argmin_{z\in\Re^{n}}V_\Sigma^\circ(Y,z) =\{x_0\}$ with $x_0$ being the exact initial state of the system $\Sigma$. 
%%%%%%%%%%%%%%%%%%%%%%%%%%%%%%%%%%%%
To this end, we first introduce the concept of concentration ratio of a collection of matrices with respect to a loss function. A notational convention will be necessary for the statement of this property: for any subset $\mathbb{I}$ of $\mathbb{T}$, let 
\begin{equation}\label{eq:Psi-circ}
	\Psi_{\mathbb{I}}^\circ(z)=\sum_{t\in\mathbb{I}}\psi_t(M_t z). 
\end{equation}
%%%
\begin{definition}[$r$-th concentration ratio]\label{def:concentration-ratio}
Let   $\{\psi_t\}$ be a family of loss functions defined by~\eqref{eq:psi_f} in which $\psi$ is assumed to satisfy \ref{prop:prem}, \ref{prop:3} and \ref{prop:der} with constant $\gamma_\psi=1$. Let $M=\left\{M_t\right\}_{t\in \mathbb{T}}$ be a sequence of matrices such that the function $\Psi_{\mathbb{T}}^\circ$ defined in \eqref{eq:Psi-circ} is positive definite. We call $r$-th concentration ratio of $M$, the number defined by
\begin{equation}\label{eq:nuR-M}
\nu_r(M)=\sup_{\substack{z\in \Re^n\\ z\neq 0}}\sup_{\substack{\mathbb{I}\subset \mathbb{T}\\|\mathbb{I}|=r}}\dfrac{\displaystyle\Psi_{\mathbb{I}}^\circ(z)}{\Psi_{\mathbb{T}}^\circ(z)}
\end{equation}
\end{definition}
%%%%%%%%%%%%%%%%%%%%%
At a fixed $r$, $\nu_r(M)$ quantifies a genericity property for the sequence $M=\left\{M_t\right\}_{t\in \mathbb{T}}$. In view of the particular structure of the collection $M$ in \eqref{eq:Mt},  note that $\Psi_{\mathbb{T}}^\circ$ is positive definite whenever the system $\Sigma$ is observable on $\mathbb{T}$. Furthermore, $\nu_r(M)$ can be interpreted to some extent, as a quantitative measure of observability. It is indeed all the smaller as the system is strongly observable. To see this, recall from Lemma \ref{lem:obs}  that if the system is observable on $\interval{0}{T-1}$, then for all $Z\in \mathcal{Z}_\Sigma$ initiated from $z$ in $\Re^n$, we have $V_\Sigma(0,Z)=\Psi_{\mathbb{T}}^\circ(z)\geq q(\lVert z\rVert)$ for some $\mathcal{K}_\infty$ function $q$. 
It follows that 
%and $z$ in $\Re^n$, $\nu_r(M)\leq \Psi_{\mathbb{T}}^\circ(z)/q(\lVert z\rVert)$ which yields
\begin{equation}\label{eq:expl_pr}
\nu_r(M)\leq \sup_{\substack{z\in \Re^n\\ z\neq 0}}\sup_{\substack{\mathbb{I}\subset \mathbb{T}\\|\mathbb{I}|=r}} \dfrac{\Psi_{\mathbb{I}}^\circ(z)}{q(\lVert z\rVert)}
\end{equation}
%%%
Hence the more observable (i.e., the larger the gain function $q$), the smaller $\nu_r(M)$. 
%%%

%%%%%%%%%%%%%%%%%%%%%%ù
For all $(Y,z_0)\in \Re^{n_y\times T}\times \Re^n$ with $Y$ expressed columnwise in the form $Y=\begin{pmatrix}y_0 & \cdots & y_{T-1}\end{pmatrix}$, consider now the following notations:
\begin{align*}
\mathcal{I}^0(Y,z_0)&=\{t\in \mathbb{T}\: :\: y_t-M_tz_0=0\}\\
\mathcal{I}^c(Y,z_0)&=\{t\in \mathbb{T}\: : \:  y_t-M_tz_0\neq0\}. 
\end{align*}

\begin{theorem}[Exact Recoverability Condition]\label{th:ex_rec}
Consider the cost function \eqref{eq:V-Sigma-circ} where $M=\left\{M_t\right\}$ is assumed to have been constructed  as in \eqref{eq:Mt} from the matrices of system \eqref{eq:sys}. Assume that the loss functions   $\{\psi_t\}$ involved in \eqref{eq:V-Sigma-circ} are  defined by~\eqref{eq:psi_f} in which $\psi$ is assumed to satisfy \ref{prop:prem}, \ref{prop:3} and \ref{prop:der} with constant $\gamma_\psi=1$.
Let $r$ be a positive integer.  
If the system \eqref{eq:sys} is observable on $\interval{0}{T-1}$, then the two following propositions are equivalent:
\begin{enumerate}
\item[\emph{(i)}] \label{th_er_i} For all $Y$ in $\Re^{n_y\times T}$ and all $z_0$ in $\Re^{n}$,
\begin{equation}\label{eq:equiv-exact-recovery}
|\mathcal{I}^c(Y,z_0)|\leq r \: \: \Rightarrow \:  \:   \argmin_{z\in \Re^n}V_{\Sigma}^\circ(Y,z)=\big\{z_0\big\}
\end{equation}
\item[\emph{(ii)}] \label{th_er_ii} The index $\nu_r(M)$ satisfies
\begin{equation}\label{eq:nuR<1/2}
\nu_r(M)< 1/2
\end{equation}
\end{enumerate}
\end{theorem}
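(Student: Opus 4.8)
The plan is to prove the equivalence of the two propositions in Theorem~\ref{th:ex_rec} via a two-directional argument, exploiting the structure of $V_\Sigma^\circ$ together with the triangle-inequality hypothesis $\gamma_\psi = 1$ (so that $\psi_t(a-b)\leq \psi_t(a)+\psi_t(b)$ and $\psi_t(a+b)\geq \psi_t(a)-\psi_t(b)$). First I would reduce the problem to the homogeneous case. Since $M=\{M_t\}$ derives from an observable system, we can assume without loss of generality that the true initial state is $z_0$ and write the residual at a candidate $z$ as $y_t - M_t z = (y_t - M_t z_0) + M_t(z_0 - z)$. Setting $e = z - z_0$ and $g_t = y_t - M_t z_0$ (the ``noise'' supported on $\mathcal{I}^c(Y,z_0)$), the condition $|\mathcal{I}^c(Y,z_0)|\leq r$ means the residual vector $\{g_t\}$ has at most $r$ nonzero entries. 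Exact recovery, $\argmin_z V_\Sigma^\circ(Y,z)=\{z_0\}$, is then equivalent to requiring that for every nonzero $e$,
\[
\sum_{t\in\mathbb{T}}\psi_t(g_t - M_t e) > \sum_{t\in\mathbb{T}}\psi_t(g_t),
\]
i.e.\ perturbing away from $z_0$ strictly increases the cost.

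For the direction (ii)$\Rightarrow$(i), I would fix a nonzero $e$ and the support $\mathbb{I}=\mathcal{I}^c(Y,z_0)$ with $|\mathbb{I}|\leq r$, and estimate the cost difference. On the complement $\mathbb{I}^c$ we have $g_t=0$, so those terms contribute $\sum_{t\in\mathbb{I}^c}\psi_t(M_t e)=\Psi_{\mathbb{T}}^\circ(e)-\Psi_{\mathbb{I}}^\circ(e)$. On $\mathbb{I}$, the triangle inequality gives $\psi_t(g_t-M_t e)\geq \psi_t(g_t)-\psi_t(M_t e)$. Summing and cancelling $\sum_{t\in\mathbb{I}}\psi_t(g_t)$ yields
\[
V_\Sigma^\circ(Y,z)-V_\Sigma^\circ(Y,z_0)\geq \Psi_{\mathbb{T}}^\circ(e)-2\Psi_{\mathbb{I}}^\circ(e).
\]
By the definition of $\nu_r(M)$ in~\eqref{eq:nuR-M}, $\Psi_{\mathbb{I}}^\circ(e)\leq \nu_r(M)\,\Psi_{\mathbb{T}}^\circ(e)$, so the right-hand side is at least $(1-2\nu_r(M))\Psi_{\mathbb{T}}^\circ(e)$, which is strictly positive when $\nu_r(M)<1/2$ and $\Psi_{\mathbb{T}}^\circ(e)>0$. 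Positive-definiteness of $\Psi_{\mathbb{T}}^\circ$ (guaranteed by observability) ensures $\Psi_{\mathbb{T}}^\circ(e)>0$ for $e\neq 0$, giving strict minimality at $z_0$.

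For the converse (i)$\Rightarrow$(ii), I would argue by contraposition: assuming $\nu_r(M)\geq 1/2$, construct a counterexample $Y$ (equivalently a residual $\{g_t\}$) with at most $r$ nonzero entries for which $z_0$ is not the unique minimizer. The idea is to pick a near-maximizing pair $(e,\mathbb{I})$ in~\eqref{eq:nuR-M} achieving a ratio at least $1/2$, and set $g_t = M_t e$ for $t\in\mathbb{I}$ and $g_t=0$ otherwise (so $y_t = M_t z_0 + g_t$, and $|\mathcal{I}^c|\leq r$); then compare the cost at $z_0$ against the cost at $z_0+e$, where the residual becomes supported on $\mathbb{I}^c$, to show the latter does no worse. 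The main obstacle I anticipate is exactly this converse direction and handling the boundary case $\nu_r(M)=1/2$, since $\nu_r$ is defined as a supremum that may fail to be attained; here I would either exploit the notion of supremum introduced in the Notations (allowing equality in the limit without attainment) to produce an approximating sequence $e^{(k)}$ forcing a violation of strict uniqueness, or invoke a compactness/homogeneity argument (the ratio is invariant under scaling of $e$, so one may restrict to $\|e\|=1$) to pass to a limit. Care will be needed to convert a non-strict inequality in the limiting cost comparison into an actual failure of uniqueness, which is the delicate technical point of the whole proof.
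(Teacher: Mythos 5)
Your proposal is correct in substance and is essentially the paper's own proof, just reorganized: your (ii)$\Rightarrow$(i) chain (triangle inequality with $\gamma_\psi=1$ on the corrupted indices, exact cancellation on the clean ones where $g_t=0$, then the definition of $\nu_r(M)$ and positive definiteness of $\Psi_{\mathbb{T}}^\circ$ from observability) is precisely the paper's argument, and your counterexample construction for the converse --- swapping the residual support between $z_0$ and $z_0+e$ --- is the paper's construction read contrapositively. The paper proves (i)$\Rightarrow$(ii) directly: for an arbitrary nonzero $z_0$ and arbitrary $\mathbb{I}$ with $|\mathbb{I}|\leq r$, it builds $Y$ with $y_t=0$ on $\mathbb{I}$ and $y_t=M_tz_0$ elsewhere, then compares $V_{\Sigma}^\circ(Y,z_0)=\Psi_{\mathbb{I}}^\circ(z_0)$ against $V_{\Sigma}^\circ(Y,0)=\Psi_{\mathbb{I}^c}^\circ(z_0)$, obtaining the ratio bound $\Psi_{\mathbb{I}}^\circ(z_0)/\Psi_{\mathbb{T}}^\circ(z_0)<1/2$ for every pair; this is the mirror image of your comparison of $z_0$ versus $z_0+e$.

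The ``delicate technical point'' you flag at the end is not actually a difficulty, and your two candidate fixes are not equally viable. The paper's Notations section stipulates that $\sup_{z} f(z)<b$ is to be read as the pointwise statement ``$f(z)<b$ for all $z$,'' explicitly covering the case where the supremum equals $b$ without being attained. Under that convention, (ii) means $\Psi_{\mathbb{I}}^\circ(e)<\frac{1}{2}\Psi_{\mathbb{T}}^\circ(e)$ for every nonzero $e$ and every $\mathbb{I}$ with $|\mathbb{I}|=r$, so its negation hands you an \emph{actual} witness pair $(e,\mathbb{I})$ with $\Psi_{\mathbb{I}}^\circ(e)\geq\Psi_{\mathbb{I}^c}^\circ(e)$ --- no approximating sequence $e^{(k)}$ is needed. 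With that witness, your construction gives $V_{\Sigma}^\circ(Y,z_0+e)=\Psi_{\mathbb{I}^c}^\circ(e)\leq\Psi_{\mathbb{I}}^\circ(e)=V_{\Sigma}^\circ(Y,z_0)$, and this \emph{non-strict} inequality already negates (i): uniqueness of the minimizer would require $V_{\Sigma}^\circ(Y,z_0+e)>V_{\Sigma}^\circ(Y,z_0)$, so $\argmin_{z}V_{\Sigma}^\circ(Y,z)\neq\{z_0\}$. There is nothing to ``convert into strictness.'' By contrast, the compactness/limit route is a dead end here: the theorem assumes only \ref{prop:prem}, \ref{prop:3} and \ref{prop:der} of $\psi$, not continuity \ref{prop:2}, so one cannot pass to limits along near-maximizers; and even granting continuity, a sequence of almost-counterexamples $Y^{(k)}$ does not produce the single measurement matrix $Y$ that the negation of (i) requires. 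Finally, two small touch-ups to your (ii)$\Rightarrow$(i) direction: use the pointwise strict inequality rather than $\Psi_{\mathbb{I}}^\circ(e)\leq\nu_r(M)\Psi_{\mathbb{T}}^\circ(e)$ (the latter yields only $\geq 0$ if the supremum equals $1/2$ unattained), and note that $|\mathcal{I}^c(Y,z_0)|$ may be strictly less than $r$, which is handled, as in the paper, by the monotonicity $r_1\leq r_2\Rightarrow\nu_{r_1}(M)\leq\nu_{r_2}(M)$.
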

%%%%%
\begin{proof} 
(i) $\Rightarrow$ (ii): Assume that (i) holds. Consider an arbitrary subset $\mathbb{I}$ of $\mathbb{T}$ such that $|\mathbb{I}|\leq r$. Let $z_0\neq 0$ be a vector in $\Re^n$.  Construct a sequence $Y$ in $\Re^{n_y\times T}$ such that
$y_t=0$ if  $t\in \mathbb{I}$ and $y_t=M_tz_0$ otherwise. Then  $\mathcal{I}^c(Y,z_0)\subset \mathbb{I}$, so that $|\mathcal{I}^c(Y,z_0)|\leq r$. It then follows from (i) that  $\argmin_{z\in \Re^n}V_{\Sigma}^\circ(Y,z)=\big\{z_0\big\}$ which means that $V_{\Sigma}^\circ(Y,z_0)<V_{\Sigma}^\circ(Y,z)$ for all $z\in \Re^n$, $z\neq z_0$. In particular, $V_{\Sigma}^\circ(Y,z_0)<V_{\Sigma}^\circ(Y,0)$ which, by taking into account the definition of $Y$, gives
$$\Psi_{\mathbb{I}}^\circ(z_0)< \Psi_{\mathbb{I}^c}^\circ(z_0),$$
where $\mathbb{I}^c=\mathbb{T}\setminus \mathbb{I}$. Since $\Psi_{\mathbb{T}}^\circ(z_0)=\Psi_{\mathbb{I}}^\circ(z_0)+ \Psi_{\mathbb{I}^c}^\circ(z_0)$, we see that
$$\dfrac{\Psi_{\mathbb{I}}^\circ(z_0)}{\Psi_{\mathbb{T}}^\circ(z_0)}<1/2 $$
This reasoning works for every nonzero $z_0$ and for every subset $\mathbb{I}$ of $\mathbb{T}$. We can hence conclude that $\nu_r(M)<1/2$. 

%%%%%%%%%%%%%
(ii) $\Rightarrow$ (i): Assume that (ii) holds. Let $(Y,z_0)\in \Re^{n_y\times T}\times \Re^n$ be  such that
$|\mathcal{I}^c(Y,z_0)|\leq r$. We then need to prove that $\argmin_{z\in \Re^n}V_{\Sigma}^\circ(Y,z)=\big\{z_0\big\}$. 
Since the assertion (ii) is assumed true, it follows from \eqref{eq:nuR-M} and \eqref{eq:nuR<1/2}  that 
\begin{equation}\label{eq:INEQ-Psi}
	2\Psi_{\mathbb{I}^c}^\circ(z_0')< \Psi_{\mathbb{T}}^\circ(z_0') \quad \forall z_0'\in \Re^n, \: z_0'\neq 0
\end{equation}
where, for simplicity, we have posed $\mathbb{I}^c=\mathcal{I}^c(Y,z_0)$. In the derivation of \eqref{eq:INEQ-Psi}, we have used the obvious fact that $r_1\leq r_2$ $\Rightarrow$ $\nu_{r_1}(M)\leq \nu_{r_2}(M)$.  If we pose  $\mathbb{I}=\mathcal{I}^0(Y,z_0)=\mathbb{T}\setminus \mathbb{I}^c$, then  the inequality \eqref{eq:INEQ-Psi} is equivalent to
\begin{equation}\label{eq:INEQ-Mtz0}
	\sum_{t\in \mathbb{I}^c}\psi_t(M_tz_0') < \sum_{t\in \mathbb{I}}\psi_t(M_tz_0') 
\end{equation}
Now we observe that for all $t$ in $\mathbb{I}=\mathcal{I}^0(Y,z_0)$, $y_t=M_tz_0$, so that $\psi_t(M_tz_0')=\psi_t\big(y_t-M_t(z_0+z_0')\big)$. On the other hand, for  $t\in\mathbb{I}^c=\mathcal{I}^c(Y,z_0)$, if we apply the GTI~\eqref{eq:tri_gen} with $\gamma_\psi=1$, we obtain 
\begin{align*}
\psi_t(M_tz_0')&=\psi_t\big(y_t-M_tz_0 -(y_t-M_t(z_0+z_0')\big)\\
&\geq \psi_t(y_t-M_tz_0)-\psi_t(y_t-M_t(z_0+z_0'))
\end{align*}
 Combining all these remarks with \eqref{eq:INEQ-Mtz0} yields % results into~\eqref{eq:demo_ec} yields that
$$
\begin{aligned}
\sum_{t\in \mathbb{I}^c} &\left[\psi_t(y_t-M_tz_0)-\psi_t(y_t-M_t(z_0+z_0')\right]\\
& \hspace{4cm}< \sum_{t\in\mathbb{I}}\psi_t(y_t-M_t(z_0+z_0'))
\end{aligned}
$$
Rearranging this gives $V_{\Sigma}^\circ(Y,z_0)<V_{\Sigma}^\circ(Y,z_0+z_0')$ for all $z_0'\in \Re^n$ with $z_0'\neq z_0$. 
This is equivalent to  $\argmin_{z\in \Re^n}V_{\Sigma}^\circ(Y,z)=\big\{z_0\big\}$. Hence (ii) holds as claimed. 
\end{proof}

From the statement of Theorem \ref{th:ex_rec}, we infer that under the assumption that only the sparse noise $\left\{s_t\right\}$ is active (i.e., there is no dense noise $(w_t,v_t)$) in the system  equations  \eqref{eq:sys},  $\mathcal{E}^\circ(Y)=\left\{X\right\}$ whenever $\nu_r(M)<1/2$, i.e, the estimator $\mathcal{E}^\circ$ returns exactly the true state.  
%The $r$-th concentration $\nu_r(M)$  measures a property of the system whose state is being observed. 
For a given system, if one can evaluate numerically the index $\nu_r(M)$, then it becomes possible to assess the number $r_{\max}\triangleq\max\left\{r:\nu_r(M)<1/2\right\}$ of gross errors that can be corrected by the estimator $\mathcal{E}^\circ$ when applied to that system. We will get back to the computational matter in Section \ref{sec:Numerical-Eval}.

%%%%%%%%%%%%%%%%%%%%%%%%%%%%%%%%%%%%%%%%%%%%%%%%%%%%%%%%%%%%%%
\subsection{Special case of $\ell_0$-norm loss based estimator}
Consider the special case where the loss functions $\left\{\psi_t\right\}$ are defined, for all $t\in \mathbb{T}$, by 
\begin{equation}\label{eq:L0-Norm}
	\forall e\in \Re^{n_y}, \quad \psi_t(e)= \left\{\begin{array}{ll} 1 &  \mbox{if }\:  e\neq 0\\ 0 & \mbox{otherwise}\end{array} \right. 
	%if $e\neq 0$ and $\psi_t(e)=0$ otherwise. 
\end{equation}
This corresponds to the block $\ell_0$-norm. Note that such functions satisfy the assumptions \ref{prop:prem}, \ref{prop:3} and \ref{prop:der} requested in the definition \ref{def:concentration-ratio} of $\nu_r(M)$ and in the statement of Theorem \ref{th:ex_rec}. Hence $\nu_r(M)$ is well-defined in this case. 
\begin{coro}\label{coro:L0-recovery}
Consider system \eqref{eq:sys} under the assumption that $w_t=0$ for all $t$. Assume observability of the system on $\interval{0}{T-1}$. Consider the estimator \eqref{eq:Constrained-Estimator} in which the cost $V_{\Sigma}^\circ$ is defined from the family of loss functions $\left\{\psi_t\right\}$ expressed in \eqref{eq:L0-Norm}. Then for all $(Y,z_0)\in \Re^{n_y\times T}\times \Re^n$, 
$$|\mathcal{I}^c(Y,z_0)|<\dfrac{T-\mu(M)+1}{2} \: \Rightarrow \: \mathcal{E}^\circ(Y)=\left\{X\right\},  $$
where $\mu(M)$ defined by 
\begin{equation}\label{eq:nu(M)}
	\begin{aligned}
		\mu(M)=&\min\Big\{k:\forall I\subset \mathbb{T}, \Big(|I|=k\: \Rightarrow \: \rank(M_I)=n\Big) \Big\} 
	\end{aligned}
\end{equation}
is the minimum number $k$ such that any matrix $M_I\in \Re^{|I|n_y\times n}$ formed by stacking vertically the matrices of the collection $\left\{M_t:t\in I\right\}$ indexed by  $I\subset \mathbb{T}$ with $|I|=k$, has full column rank. 
\end{coro}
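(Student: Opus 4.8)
The plan is to reduce the statement to Theorem~\ref{th:ex_rec}. Since the block $\ell_0$-type loss \eqref{eq:L0-Norm} obeys \ref{prop:prem}, \ref{prop:3} and \ref{prop:der} with $\gamma_\psi=1$, that theorem already guarantees that exact recovery holds for every outlier configuration with $|\mathcal{I}^c(Y,z_0)|\le r$ if and only if $\nu_r(M)<1/2$. Hence the entire task amounts to evaluating the concentration ratio $\nu_r(M)$ of \eqref{eq:nuR-M} for the $\ell_0$ loss and rewriting the resulting threshold in terms of $\mu(M)$.

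First I would compute the cost explicitly: for the loss \eqref{eq:L0-Norm} one has $\Psi^\circ_{\mathbb{I}}(z)=|\{t\in\mathbb{I}:M_tz\ne0\}|$, so that, writing $N(z)=\Psi^\circ_{\mathbb{T}}(z)$ for the number of indices $t\in\mathbb{T}$ with $M_tz\ne0$, the inner supremum over $|\mathbb{I}|=r$ in \eqref{eq:nuR-M} is attained by selecting as many active indices as possible, giving $\sup_{|\mathbb{I}|=r}\Psi^\circ_{\mathbb{I}}(z)/\Psi^\circ_{\mathbb{T}}(z)=\min(r,N(z))/N(z)$ for each fixed $z\ne0$.

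The crux is then to lower-bound $N(z)$ over all $z\ne0$ using $\mu(M)$. I would argue that $|\{t:M_tz=0\}|\le \mu(M)-1$ for every $z\ne0$: otherwise some index set $I$ with $|I|=\mu(M)$ satisfies $M_tz=0$ for all $t\in I$, i.e. $z$ would be a nonzero element of the null space of the stacked matrix $M_I$, contradicting $\rank(M_I)=n$ granted by the very definition \eqref{eq:nu(M)} of $\mu(M)$. This yields $N(z)\ge T-\mu(M)+1=:N_{\min}$; conversely, minimality of $\mu(M)$ produces a subset of size $\mu(M)-1$ whose stacked matrix is rank-deficient, hence a nonzero $z_0$ attaining $N(z_0)=N_{\min}$. (Observability of $\Sigma$ on $\interval{0}{T-1}$ guarantees that the full stack is the observability matrix, so $\mu(M)\le T$ and $N_{\min}\ge1$.) Consequently, whenever $r<N_{\min}$ we have $\min(r,N(z))=r$ for all $z\ne0$, and maximizing $r/N(z)$ over $z$ means minimizing $N(z)$, whence $\nu_r(M)=r/N_{\min}=r/(T-\mu(M)+1)$.

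Finally I would close the argument by setting $r=|\mathcal{I}^c(Y,z_0)|$: the hypothesis $r<(T-\mu(M)+1)/2$ forces $r<N_{\min}$, so the formula above applies and gives $\nu_r(M)=r/(T-\mu(M)+1)<1/2$; invoking the implication (ii)$\Rightarrow$(i) of Theorem~\ref{th:ex_rec} then yields $\argmin_{z}V^\circ_\Sigma(Y,z)=\{z_0\}$, which, through the reparametrization \eqref{eq:Constrained-Estimator}, is exactly $\mathcal{E}^\circ(Y)=\{X\}$. I expect the main obstacle to be the rank/null-space counting of the third step --- translating the purely algebraic quantity $\mu(M)$ into a guaranteed number $N(z)$ of surviving (uncorrupted) active measurements --- together with checking that the supremum defining $\nu_r(M)$ is genuinely attained at $z_0$, so that $\nu_r(M)$ equals, rather than merely bounds, $r/(T-\mu(M)+1)$ and the threshold $1/2$ is reached sharply.
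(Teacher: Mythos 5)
Your proof is correct and follows essentially the same route as the paper's: the key step in both is the counting argument that, for $z\neq 0$, at most $\mu(M)-1$ of the residuals $M_tz$ can vanish (else $z$ would lie in the null space of a full-rank stack $M_I$), giving $\Psi^\circ_{\mathbb{T}}(z)\geq T-\mu(M)+1$, hence $\nu_r(M)\leq r/(T-\mu(M)+1)$, and then Theorem~\ref{th:ex_rec} finishes the job. Your additional effort to show that $\nu_r(M)$ \emph{equals} $r/(T-\mu(M)+1)$ (via a rank-deficient subset of size $\mu(M)-1$) is correct but unnecessary, since the upper bound alone suffices for the conclusion.
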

%%%%%
\begin{proof}
Let us start by observing that with the particular loss functions invoked in the statement of the corollary, 
 $\Psi_{\mathbb{T}}^\circ(z)$ denotes the number of $t\in \mathbb{T}$ for which $\psi_t(M_tz)\neq 0$. It follows from the definition of $\mu(M)$ that  $\Psi_{\mathbb{T}}^\circ(z)\geq T-\mu(M)+1$ for all $z\neq 0$.  The reason for this is that if $\psi_t(M_tz)$ was to be equal to zero more than $\mu(M)-1$ times, then $z$ would be necessarily equal to zero. 
As a result we get 
$$\nu_r(M)\leq \dfrac{r}{T-\mu(M)+1} $$
Hence by applying Theorem \ref{th:ex_rec}, $|\mathcal{I}^c(Y,z_0)|/(T-\mu(M)+1)<1/2$ is a sufficient condition for exact recovery by the $\ell_0$-norm based estimator. 
\end{proof}

%%%%%%%%%%%%%%%%%%%%%
\begin{rem}\label{rem:EntryWise-L0-Estimator}
Assume that $\psi_t$ is defined to be the counting norm, i.e.,  
\begin{equation}\label{eq:L0-Norm-EntryWise}
	\psi_t(e)= \left\|e\right\|_0
\end{equation}
Then $\psi_t$ has a separable structure as illustrated in \eqref{eq:psi_dec}. Consider then defining, still under the observability assumption, an entrywise version of the concentration ratio by
\begin{equation}\label{eq:nuR-M-Tilde}
\tilde{\nu}_r(M)=\sup_{\substack{z\in \Re^n\\ z\neq 0}}\sup_{\substack{\mathbb{I}\subset \mathbb{T}\times \mathbb{S}\\|\mathbb{I}|=r}}\dfrac{\sum_{(t,i)\in \mathbb{I}}\left\|M_{ti}z\right\|_0}{\Psi_{\mathbb{T}}^\circ(z)}
\end{equation}
where $M_{ti}$ refers to the $i$-th row of $M_t$. 
Further, let
$$\tilde{\mu}(M)=\min\Big\{k:\forall I\subset \mathbb{T}\times \mathbb{S}, \Big(|I|=k\: \Rightarrow \: \rank(\tilde{M}_I)=n\Big)\Big\} $$
with $\tilde{M}_I\in \Re^{|I|\times n}$ denoting the matrix obtained by stacking the row vectors $\left\{M_{ti}: (t,i) \in I\right\}$. 
Then a result similar to Corollary \ref{coro:L0-recovery} is obtainable: if the number  the measurements corrupted by a nonzero error (among the $n_yT$ available) is strictly less than $(n_yT-\tilde{\mu}(M)+1)/2$, then the estimator $\mathcal{E}^\circ$  expressed in \eqref{eq:Constrained-Estimator} (with $\psi_t$ being the $\ell_0$ norm as in \eqref{eq:L0-Norm-EntryWise}) recovers exactly the true state.  
\end{rem}
%%%%%%%%%%%%%%%
\begin{rem}
Under the condition of Remark \ref{rem:EntryWise-L0-Estimator}, if we consider the scenario where only a set of  $k<n_y$  sensors may be compromised by attackers, then  exact recovery is achieved if 
\begin{equation}\label{eq:cond-number-sensors}
	k<\frac{n_y}{2}-\dfrac{\tilde{\mu}(M)-1}{2T}.
\end{equation}
Taking into consideration the fact that  $\tilde{\mu}(M)-1<T$, it can then be seen that \eqref{eq:cond-number-sensors} is equivalent to  $k\leq \left\lceil n_y/2-1\right\rceil$ where the notation $\left\lceil r\right\rceil$, for $r\in \Re$, refers to the smallest integer larger or equal to $r$. 
To sum up, when the $\psi_t$ are defined as in \eqref{eq:L0-Norm-EntryWise},  the estimator \eqref{eq:Constrained-Estimator} is able to return the true state matrix even when $\left\lceil n_y/2-1\right\rceil$  sensors get faulty over the entire observation horizon.  This is reminiscent of a result stated in \cite{fawzi_secure_2014} which therefore appears to be a consequence of Theorem \ref{th:ex_rec}. 
%%%
\end{rem}
%%%%%%%%%%%%%%%%%%%%%%%%%%%%%%%%%%%%

\subsection{Stability of the class of  estimators $\mathcal{E}^\circ$ with respect to dense noise}
We have argued that  the class of estimators $\mathcal{E}^\circ$ in  \eqref{eq:Constrained-Estimator} is able to obtain exactly the true state matrix when there is no dense noises $(w_t,v_t)$ in the system equations and only the sparse noise $\left\{s_t\right\}$ is active.  The question we ask now is whether this set of estimators can, in addition to sparse noise, handle dense process and output noises and  to what extent this is possible. The starting point of our reflection is the observation that the dynamical system defined by 
\begin{equation}\label{eq:New-sys}
	\begin{aligned}
	&	\tilde{x}_{t+1}=A_t \tilde{x}_t, \quad \tilde{x}_0 = x_0 \\
	&	y_t = C_t\tilde{x}_t+s_t+(v_t+\tilde{v}_t),
	\end{aligned}
\end{equation}
produces the same output as system \eqref{eq:sys}. Here, $\tilde{v}_t=C_t\tilde{w}_t$, with $\tilde{w}_t=\sum_{k=0}^{t-1}A_{t-1}\cdots A_{k+1}w_k$ a definition which uses the convention that the product $A_{t-1}\cdots A_{k+1}=I$ if $k= t-1$. Then the idea is to apply the estimator $\mathcal{E}^\circ$ to \eqref{eq:New-sys}  by neglecting the dense component $(v_t+\tilde{v}_t)$ of the output equation. To state the resilience result for $\mathcal{E}^\circ$, consider for a given $\varepsilon\geq 0$, a partition $(\tilde{\mathbb{T}}_\varepsilon,\tilde{\mathbb{T}}_\varepsilon^c)$ of $\mathbb{T}$ defined as in \eqref{eq:def_teps} with $f_t$ replaced by $\tilde{f}_t\triangleq s_t+(v_t+\tilde{v}_t)=f_t+\tilde{v}_t$. 
%%%%%%%%%%%%%%%%%%%%%%%%%%%%%%%%%
\begin{theorem}\label{thm:Stability-E0}
Consider the estimator \eqref{eq:Constrained-Estimator} for the system \eqref{eq:sys}. 
 %under the assumption that the loss \eqref{eq:V-Sigma-circ} where $M=\left\{M_t\right\}$ is assumed to have been constructed  as in \eqref{eq:Mt} from the matrices of system \eqref{eq:sys}. 
Assume that the loss functions   $\{\psi_t\}$ involved in \eqref{eq:V-Sigma-circ} are  defined by~\eqref{eq:psi_f} in which $\psi$ is assumed to satisfy \ref{prop:prem}--\ref{prop:der} with constant $\gamma_\psi=1$.
Let $\varepsilon\geq 0$ and set $r=|\tilde{\mathbb{T}}_\varepsilon^c|$. 
%with $\mathbb{T}_\varepsilon^c$ is the complement of $\mathbb{T}_\varepsilon$ in $\mathbb{T}$ (see \eqref{eq:def_teps}). 
Denote with $N$ a norm on $\Re^{n\times T}$ defined by $N(Z)=\max_{t\in \mathbb{T}}\left\|z_t\right\|$ with $z_t$ being the $t$-th column of $Z$ and $\left\|\cdot\right\|$ being a norm on $\Re^n$.   \\ 
If the system \eqref{eq:sys} is observable on $\interval{0}{T-1}$ and $\nu_r(M)<1/2$,  then  there exists a $\mathcal{K}_\infty$ function $\alpha$ such that for all norm  $\lVert \cdot \rVert$ on $\Re^{n\times T}$,  
\begin{equation}
	N(\hat{X}-X) \leq R_\Sigma\alpha^{-1}(\rho) +\max_{t\in \mathbb{T}}\left\|\tilde{w}_t\right\|\quad \forall  \hat{X}\in \mathcal{E}^\circ(Y), 
\end{equation}
where $R_\Sigma$ is some  constant depending on the system $\Sigma$ and 
$$\rho=\dfrac{2}{D_1\big(1-2\nu_r(M)\big)}\sum_{t\in \tilde{\mathbb{T}}_\varepsilon}\psi_t(\tilde{f}_t)$$ 
with $\tilde{f}_t=f_t+\tilde{v}_t$, and $D_1=\min_{\left\|z\right\|=1} \Psi_{\mathbb{T}}^\circ(z)$. 
\end{theorem}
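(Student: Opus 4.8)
The plan is to exploit the reformulation \eqref{eq:New-sys}, which turns the state-estimation task into a robust linear regression for the initial state $x_0$ with effective measurement noise $\tilde{f}_t=f_t+\tilde{v}_t$, and then to propagate the resulting initial-state error forward along the (now noise-free) dynamics. First I would fix notation: write $\Gamma_t=A_{t-1}\cdots A_0$ (with $\Gamma_0=I$) for the state-transition matrix, so that $M_t=C_t\Gamma_t$ and the autonomous trajectory is $\tilde{x}_t=\Gamma_t x_0$. The true state obeys $x_t=\Gamma_t x_0+\tilde{w}_t$ with $\tilde{w}_t=\sum_{k=0}^{t-1}A_{t-1}\cdots A_{k+1}w_k$ (so that $\tilde{v}_t=C_t\tilde{w}_t$). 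Any $\hat{X}\in\mathcal{E}^\circ(Y)$ is reconstructed from some $\hat{x}_0\in\argmin_{z}V_\Sigma^\circ(Y,z)$ via $\hat{x}_t=\Gamma_t\hat{x}_0$. Hence $\hat{x}_t-x_t=\Gamma_t(\hat{x}_0-x_0)-\tilde{w}_t$, which immediately yields the column-wise bound
$$N(\hat{X}-X)\leq R_\Sigma\,\lVert \hat{x}_0-x_0\rVert+\max_{t\in\mathbb{T}}\lVert \tilde{w}_t\rVert, \qquad R_\Sigma=\max_{t\in\mathbb{T}}\lVert \Gamma_t\rVert,$$
where $\lVert \Gamma_t\rVert$ is the operator norm induced by $\lVert\cdot\rVert$ on $\Re^n$. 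This isolates the additive $\max_t\lVert\tilde{w}_t\rVert$ term of the claim and reduces the problem to controlling $\lVert\hat{x}_0-x_0\rVert$.

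For the regression part I would mirror the argument of Corollary~\ref{coro:resilience} and Theorem~\ref{th:norm1}, but applied to the cost $V_\Sigma^\circ$. Setting $e=\hat{x}_0-x_0$ and using $y_t-M_t x_0=\tilde{f}_t$, optimality gives $V_\Sigma^\circ(Y,\hat{x}_0)\leq V_\Sigma^\circ(Y,x_0)$, that is $\sum_{t}\psi_t(\tilde{f}_t-M_t e)\leq\sum_{t}\psi_t(\tilde{f}_t)$. Applying the GTI~\ref{prop:der} with $\gamma_\psi=1$ together with symmetry, separately over $\tilde{\mathbb{T}}_\varepsilon$ (lower bounding $\psi_t(\tilde{f}_t-M_te)\geq\psi_t(M_te)-\psi_t(\tilde{f}_t)$) and over $\tilde{\mathbb{T}}_\varepsilon^c$ (lower bounding $\psi_t(\tilde{f}_t-M_te)\geq\psi_t(\tilde{f}_t)-\psi_t(M_te)$), and then cancelling the $\tilde{f}_t$ terms, leads to $\Psi_{\mathbb{T}}^\circ(e)\leq 2\sum_{t\in\tilde{\mathbb{T}}_\varepsilon}\psi_t(\tilde{f}_t)+2\Psi_{\tilde{\mathbb{T}}_\varepsilon^c}^\circ(e)$. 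Since $|\tilde{\mathbb{T}}_\varepsilon^c|=r$, the concentration ratio \eqref{eq:nuR-M} gives $\Psi_{\tilde{\mathbb{T}}_\varepsilon^c}^\circ(e)\leq\nu_r(M)\,\Psi_{\mathbb{T}}^\circ(e)$, and the hypothesis $\nu_r(M)<1/2$ then yields $\Psi_{\mathbb{T}}^\circ(e)\leq\frac{2}{1-2\nu_r(M)}\sum_{t\in\tilde{\mathbb{T}}_\varepsilon}\psi_t(\tilde{f}_t)$.

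To convert this into a norm bound I would invoke a Lemma~\ref{lem:lb} type lower bound on $\Psi_{\mathbb{T}}^\circ$. Under observability, $\Psi_{\mathbb{T}}^\circ$ is positive definite (if $\Psi_{\mathbb{T}}^\circ(z)=0$ then every $M_t z=0$, hence $\mathcal{O}_{0,T-1}z=0$ and $z=0$), continuous, and inherits the GH property~\ref{prop:4} from $\psi$ exactly as in Lemma~\ref{lem:fc}; thus there is a $\mathcal{K}_\infty$ function $\alpha$ with $\Psi_{\mathbb{T}}^\circ(e)\geq D_1\,\alpha(\lVert e\rVert)$ and $D_1=\min_{\lVert z\rVert=1}\Psi_{\mathbb{T}}^\circ(z)>0$. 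Combining the two inequalities gives $\alpha(\lVert e\rVert)\leq\rho$, whence $\lVert\hat{x}_0-x_0\rVert\leq\alpha^{-1}(\rho)$; substituting into the propagation bound of the first paragraph delivers the stated inequality.

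The main obstacle I anticipate is not the regression estimate, which is essentially a re-run of the earlier proofs, but the careful noise bookkeeping in the reformulation: one must correctly split the process noise into the trajectory offset $\tilde{w}_t$, which produces the additive term $\max_t\lVert\tilde{w}_t\rVert$ and cannot be suppressed by the estimator, and the effective output perturbation $\tilde{v}_t=C_t\tilde{w}_t$, which is folded into $\tilde{f}_t$ and treated by the resilience mechanism. One must also verify that the reconstructed estimate $\hat{x}_t=\Gamma_t\hat{x}_0$ and the true state $x_t$ are compared consistently so that $R_\Sigma=\max_t\lVert\Gamma_t\rVert$ is indeed the worst-case transition-matrix gain appearing in the bound.
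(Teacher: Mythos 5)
Your proposal is correct and follows essentially the same route as the paper: bound the initial-state error $e_0=\hat{x}_0-x_0$ via optimality of $\hat{x}_0$, the GTI with $\gamma_\psi=1$ split over $\tilde{\mathbb{T}}_\varepsilon$ and $\tilde{\mathbb{T}}_\varepsilon^c$, the concentration ratio $\nu_r(M)<1/2$, and a Lemma~\ref{lem:lb}-type lower bound on $\Psi_{\mathbb{T}}^\circ$, then propagate $e_0$ through the noise-free dynamics with $R_\Sigma=\max_{t\in\mathbb{T}}\left\|A_{t-1}\cdots A_0\right\|_{\text{ind}}$ and absorb the accumulated process noise into the additive term $\max_{t\in\mathbb{T}}\|\tilde{w}_t\|$. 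Your bookkeeping is in fact slightly more careful than the paper's (your intermediate inequality $\Psi_{\mathbb{T}}^\circ(e)-2\Psi_{\tilde{\mathbb{T}}_\varepsilon^c}^\circ(e)\leq 2\sum_{t\in\tilde{\mathbb{T}}_\varepsilon}\psi_t(\tilde{f}_t)$ carries the factor $2$ that the paper's corresponding display drops, though the paper's final bound agrees with yours).
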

%%%%%%%%%%%%%%%%%%
\begin{proof}
Let $\hat{x}_0\in \argmin_{z\in \Re^n}V_{\Sigma}^\circ(Y,z)$. We first provide a bound on the error 
$e_0=\hat{x}_0-x_0$ with $x_0$ denoting the true initial state of system \eqref{eq:sys}. By exploiting the fact that $V_{\Sigma}^\circ(Y,\hat{x_0})\leq V_{\Sigma}^\circ(Y,x_0)$ and noting that $y_t=M_tx_0+\tilde{f}_t$, we reach the inequality
$$\sum_{t\in \mathbb{T}}\psi_t(\tilde{f}_t-M_te_0)\leq  \sum_{t\in \mathbb{T}}\psi_t(\tilde{f}_t).  $$
By then reasoning quite similarly as in the proof of Theorem \ref{th:lb}, we get
$$\Psi_{\mathbb{T}}^\circ(e_0)-2\Psi_{\mathbb{T}_\varepsilon^c}^\circ(e_0)\leq \sum_{t\in \tilde{\mathbb{T}}_\varepsilon}\psi_t(\tilde{f}_t)$$
which, by exploiting \eqref{eq:nuR-M} and the assumption that $\nu_r(M)<1/2$, leads to
$$\Psi_{\mathbb{T}}^\circ(e_0)\leq \dfrac{2}{1-2\nu_r(M)} \sum_{t\in \tilde{\mathbb{T}}_\varepsilon}\psi_t(\tilde{f}_t) $$
Applying  now Lemma~\ref{lem:lb}, we conclude that for any norm $\lVert \cdot \rVert$ on $\Re^n$, there exists a $\mathcal{K}_\infty$ function $\alpha$ such that
$\left\|e_0\right\|\leq \alpha^{-1}(\rho)$. 
Now by observing that for any $\hat{X}\in \mathcal{E}^\circ(Y)$, 
$$
\begin{aligned}
	\hat{X}-X =&\begin{pmatrix} e_0 & A_0 e_0 & \cdots & A_{T-1}\cdots A_0 e_0 \end{pmatrix}\\
	&\hspace{3cm} -\begin{pmatrix} 0 & \tilde{w}_0 & \cdots & \tilde{w}_{T-1}\end{pmatrix}
\end{aligned}
$$
%with $\tilde{w}_t=\sum_{k=0}^{t-1}A_{t-1}\cdots A_{k+1}w_k$,
the result follows by posing\footnote{We use here the convention that $A_{t-1}\cdots A_0=I$ if $t=0$. } 
$R_\Sigma=\max_{t\in \mathbb{T}}\left\|A_{t-1}\cdots A_0\right\|_{\text{ind}}$ with $\left\|\cdot\right\|_{\text{ind}}$ being the matrix norm induced by the vector norm $\left\|\cdot\right\|$  on $\Re^n$.
\end{proof}
%%%%%%%%%%%%%%%%%%%%%%%%%
\noindent The interest in Theorem \ref{thm:Stability-E0}   is that it provides a condition of resilience for the estimator $\mathcal{E}^\circ$ which can be checked numerically as will  be  discussed in the next section. 
%%%%%%%%%%%%%%%%%%%%%%%%%%
%%%%%%%%%%%%%%%%%%%%%%%%%%%%%%%%%%%%%%%%%%%%%%%%%%%%%%%%%%%%%%%%%%%%%%%%%%%%%%%%%%%
\section{On the numerical evaluation of the resilience conditions}\label{sec:Numerical-Eval}
%\subsection{Numerical certificates of the conditions of resilience}
The analysis results presented in Sections \ref{part:iv} and \ref{sec:Exact-Recoverability} rely on some functions (resilience index, concentration ratio, \ldots) which characterize quantitatively some properties of the system being observed. A question we ask now is whether it would be possible to evaluate numerically these measures.  In effect, computing the $r$-resilience index in  \eqref{eq:def_pr} would help testing for example the resilience condition in Theorem \ref{th:lb}. Similarly, evaluating the concentration ratio $\nu_r(M)$ introduced in \eqref{eq:nuR-M} is the way to assess whether a given estimator is able to return the true state of a given system if we make an hypothesis on the number of potential nonzero errors in the measurements.

Unfortunately, obtaining numerically the numbers $p_r$, $\tilde{p}_r$ or $\nu_r$ require solving some hard nonconvex and combinatorial optimization problems. This is indeed a common characteristic of the concepts which are usually used to assess resilience; for example, the popular  Restricted Isometry Property (RIP) constant ~\cite{candes_restricted_2008} is comparatively as hard to evaluate. 
We note however that when the dimension of the state is small enough, $\nu_r(M)$ can be exactly computed  by taking inspiration from a method presented in \cite{sharon_minimum_2009} even though at the price of a huge (but affordable) computational cost. Alternatively, a cheaper overestimation can be obtained by means of convex optimization as suggested in \cite{bako_class_2017}. %,Bako16-Automatica}. 
%%%%
The next lemma provides such an  overestimate for $\nu_r(M)$. 
\begin{lem}[An estimate of $\nu_r$]\label{lem:OverEstimate-nuR}
Assuming all quantities are well-defined (see the conditions in Definitions \ref{def:pr} and \ref{def:concentration-ratio}), the following statements hold: 
\begin{itemize}
	\item[(a)] $\nu_r\leq p_r$ %and $\tilde{\nu}_r\leq \tilde{p}_r$
\item[(b)] If $\mu(M)\leq T-1$ then 
\begin{equation}\label{eq:Upper-Bound-nuR}
	\nu_r(M)\leq \dfrac{r\nu^o}{1+\nu^o},
\end{equation}
where 
\begin{equation}\label{eq:nu0}
	\nu^o=\max_{t\in \mathbb{T}}\min_{\lambda_t\in \Re^T}\Big\{\left\|\lambda_t\right\|_\infty: V_tM_t=\sum_{k\in \mathbb{T}}\lambda_{tk}V_kM_k, \lambda_{tt}=0\Big\}
\end{equation}
\end{itemize}
In \eqref{eq:nu0}, the $\lambda_{tk}$ denote the entries of the vector $\lambda_t\in \Re^T$ and $\left\{V_t\right\}$ refers to the sequence of nonsingular weighting matrices involved in  \eqref{eq:psi_f}.  
\end{lem}
%%%   
\noindent The proof of statement (a) is straightforward by noticing that \eqref{eq:nuR-M} follows from \eqref{eq:def_pr} by constraining the variable $Z$ to be in $\mathcal{Z}_\Sigma$. As to the proof of statement (b), it follows a similar reasoning as the proof of Theorem 2 in \cite{bako_class_2017}. \\
The interest of this lemma is twofold. First it suggests that the resilience condition of $\mathcal{E}^\circ$ is weaker (in the sense that it is easier to achieve) than that of $\mathcal{E}$. Second, it provides an upper bound on $\nu_r(M)$ which can be computed by solving a convex optimization problem (see Eqs \eqref{eq:Upper-Bound-nuR}-\eqref{eq:nu0}). More specifically, given $\nu^o$ in \eqref{eq:nu0}, checking numerically whether $|\mathbb{T}_\varepsilon^c|<1/2(1+1/\nu^o)$ provides a sufficient condition for $\nu_r(M)<1/2$ and so, for the resilience of the estimator \eqref{eq:Constrained-Estimator}.   
%%%%%%%%%%%%%%%%%%%%%%%%%%%%%%%%%%%%%%%%%%%
%%%%%%%%%%%%%%%%%%%%%%%%%%%%%%%%%%%%%%%%%%%%%%%%%%%%%%%%%%%%%%%
In a similar spirit as in Lemma \ref{lem:OverEstimate-nuR}, we now show that the parameter $\tilde{p}_r$ defined in \eqref{eq:pr_norm1} can also be overestimated via convex optimization if the loss functions $\phi_t$ and $\psi_t$ in~\eqref{eq:phi_f}--\eqref{eq:psi_f} are both taken to be norms. 
%%%%%
\begin{lem}[An estimate of $\tilde{p}_r$]\label{lem:pr_calc}
Consider the resilience parameter $\tilde{p}_r$ defined in \eqref{eq:pr_norm1} where we assume that   $\psi_{ti}(e)=|e|$ for all $(t,i,e)\in \mathbb{T}\times \mathbb{S}\times \Re$ and $\phi_t$ is an arbitrary  norm. Then
\begin{equation}\label{eq:pr_overestimate}
	\tilde{p}_r\leq \dfrac{r}{b_1}
\end{equation}
where
\begin{equation}\label{eq:def_b1}
b_1=\inf_{(t,i )\in\mathbb{T}\times\mathbb{S}}\:\inf_{\substack{Z\in \Re^{n\times T}}}\left\{H_\Sigma(Z): c_{ti}^\top z_\tau = 1\right\}=\dfrac{1}{\tilde{p}_1}.
\end{equation}
\end{lem}

\begin{proof}
(See Appendix~\ref{app:pr_proof})
\end{proof}
\noindent Note, under the assumptions of Lemma \ref{lem:pr_calc}, that $\inf_{\substack{Z\in \Re^{n\times T}}}\left\{H_\Sigma(Z): c_{ti}^\top z_\tau = 1\right\}$ is a convex optimization problem for any given $(t,i)$. Hence, solving for $b_1$ in \eqref{eq:def_b1}  requires solving $Tn_y$ convex problems and picking the smallest value among all. The interest of the lemma is that it provides an overestimate of $\tilde{p}_r$ which is numerically computable. 
Based on the so obtained overestimate of $\tilde{p}_r$, we see from Theorem~\ref{th:lb} that the estimator~\eqref{eq:def_est} is resilient to $r$ outliers if $r<b_1/2$. Moreover,   we can deduce an underestimate of the number of outliers that the estimator~\eqref{eq:def_est} is able to handle  as $r_{\max}=\max\left\{r: r<b_1/2\right\}$.

\noindent As a last remark in this paragraph, let us observe that   Lemma \ref{lem:pr_calc} is also applicable to overestimate $p_r$ defined in~\eqref{eq:def_pr} in the case of a single-output system, \textit{i.e.}, when  $n_y=1$.

\color{black}

%%%%%%%%%%%%%%%%%%%%%%%%%%%%%%%%%%%%%%%%%%%%%%%%%%%%%%%%%%%%%%%%%%%%%%%%%%%%%%%%%
\section{Simulation Results}\label{part:v}

In this part we will illustrate numerically the resilience properties of the proposed class of estimators. 
For this purpose, we consider for simplicity, an example of linear time-invariant system in the form \eqref{eq:sys}. 
%\ 
We select a single-input single-output example where the pair $(A,C)$ is given by 
\begin{equation}\label{eq:example-matrices}
	\begin{aligned}
		%&A=   \begin{pmatrix} 0.85  & 0.5\\
		%-0.6   & 1.2\end{pmatrix}, 
		&A = \begin{pmatrix} 0.7 & 0.45\\
   -0.5 & 1\end{pmatrix}, \quad 
		 %B= \begin{pmatrix}1 \\ 0\end{pmatrix},\quad 
		C = \begin{pmatrix}1 & 2\end{pmatrix}. 
	\end{aligned}
\end{equation}
We instantiate the loss functions in \eqref{eq:V} as follows:   For all $t$ in $\mathbb{T}$ and for all $(z,e)\in \Re^{n}\times \Re^{n_y}$, $\phi_t(z)=\psi(W_t z)$ and  $\psi_t(e)=\psi(V_te)$ where the weighting matrices $W_t$ and $V_t$ and the  functions $\phi$ and $\psi$ will be specified below  for each experiment.

\subsection{Numerical certificate of exact recoverability}\label{subsec:Exact-Recoverability}
Suppose in this section that the process noise $w_t$  and the dense component $v_t$ of $f_t$ (see Eq. \eqref{eq:f=s+v}) are both identically equal to zero.  
We then focus on testing the exact recoverability property of the estimator \eqref{eq:Constrained-Estimator} in the presence only of the sparse noise $\left\{s_t\right\}$.  The times of occurrence of the nonzeros values in the sequence $\left\{s_t\right\}$ are picked at random. As to its values there are also randomly generated from  a zero-mean normal distribution with variance $100^2$. Given $T=100$ output measurements and the system matrices in \eqref{eq:example-matrices}, the estimator $\mathcal{E}^\circ$ is implemented by directly solving the optimization problem defined in \eqref{eq:Constrained-Estimator} through the CVX interface \cite{grant_cvx_2017}. Note that the implementation of the estimator \eqref{eq:Constrained-Estimator}-\eqref{eq:V-Sigma-circ} requires computing the matrices $M_t$ expressed in \eqref{eq:Mt}, which take the form $CA^t$ in the LTI case. A problem that may occur however  is that if $A$ is Schur stable as is the case here (or unstable), taking successive powers of $A$ produces matrices $M_t$ which might not be of the same order of magnitude. To preserve the contribution of each term of \eqref{eq:V-Sigma-circ},  we introduce special weighting matrices $\left\{V_t\right\}$ (in the loss function $\psi$ selected as the $\ell_1$ norm) to normalize the rows  of these matrices so that they all have unit $2$-norm. 
$V_t$ is therefore selected to be a diagonal matrix of the form $V_t=\diag(V_{t1}, \cdots, V_{tn_y})$, where 
\begin{equation}\label{eq:Vti}
	V_{ti}=
	\left\{\begin{array}{ll}1/\left\|c_i^\top A^t\right\|_2 & \mbox{if } \: c_i^\top A^t\neq 0
	\\  1  & \mbox{otherwise} \end{array}. \right. 
\end{equation}
Here,  $c_i^\top$, $i=1,\ldots,n_y$,  denote the $i$-th row of the matrix $C$. Indeed the effect of the weighting function in \eqref{eq:V-Sigma-circ} is equivalent to changing $y_t$ and $M_t$ respectively to $\tilde{y}_t=V_ty_t$  and $\tilde{M}_t=V_tCA^t$. 
Posing  $M=\big\{\tilde{M}_t\big\}$, it can be checked using the methods discussed in Section \ref{sec:Numerical-Eval} (See Eq. \eqref{eq:Upper-Bound-nuR}) that at least $r_{\max}=30$  erroneous data (out of $T=100$ measurements)  can be accommodated by the estimator while still returning exactly the true state.

\noindent To investigate empirical performance, we consider different ratios $|\Lambda_0^c|/T$ of nonzero values in the sequence $\left\{s_t\right\}$. For each fixed proportion of nonzero values, we run the estimator over $100$ different realizations of the output measurements.  The results, depicted in Figure \ref{fig:exact_recovery_without_normalization}, tend to show that the estimator can still find the true state even for proportions of gross errors as large as $60\%$. % when $V_t$ is defined as in \eqref{eq:Vti}. 
%%%
\begin{figure}
	\centering
	\psfrag{Normalization}{\scriptsize With normalization}
	\psfrag{WithoutNormalization}{\scriptsize Without normalization}
	%\psfrag{Proportion of gross errors}{Sparsity level of the noise}
	%\psfrag{Percentage of exact recovery}{Probability of exact recovery [\%]}
		%\includegraphics[width=.95\textwidth,height=4cm]{Fig/exact_recovery-v3.eps} 
			\includegraphics[scale=.45]{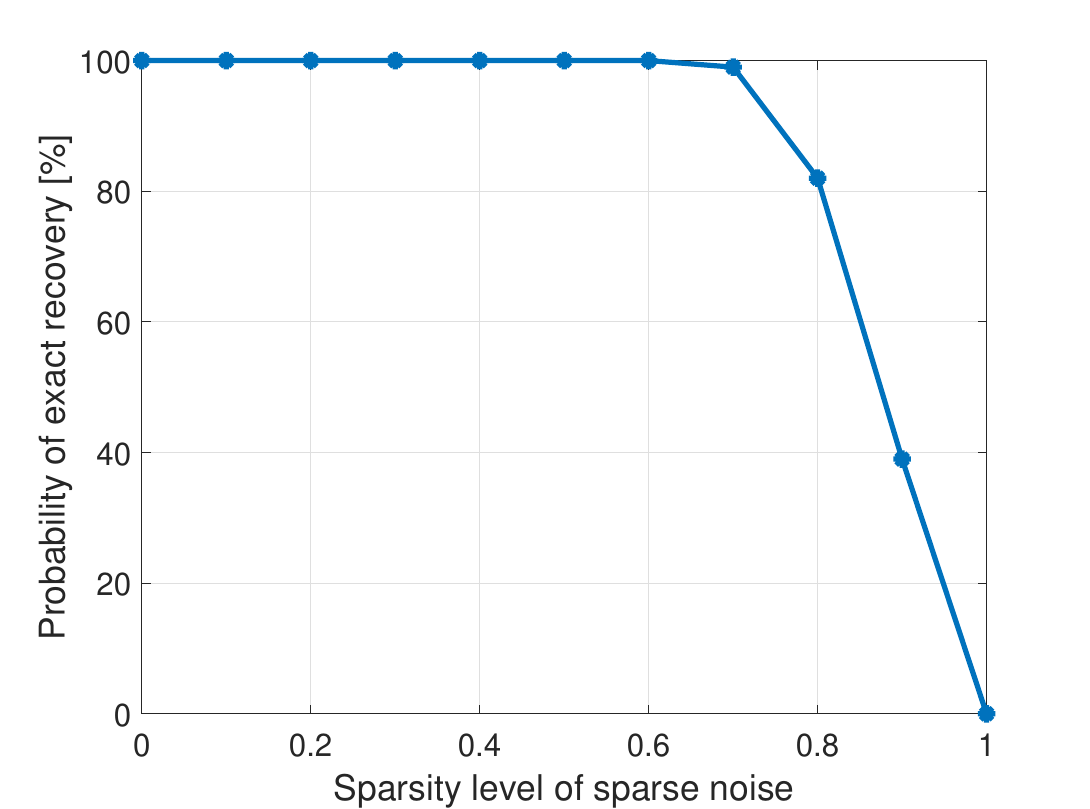} 
		\captionsetup{justification=justified}
	\caption{Probability of exact recovery (expressed in percentage) by the estimator \eqref{eq:Constrained-Estimator} in the presence of only sparse measurement noise $\left\{s_t\right\}$. The level of sparsity of the noise is expressed in terms of a fraction of nonzero values in the sequence $\left\{s_t: t\in \mathbb{T}\right\}$ with $\left|\mathbb{T}\right|=T=100$. 
	}
		\label{fig:exact_recovery_without_normalization}
\end{figure}
%%%%%%%%%%%%%%%%%%%%%%%%%%%

\subsection{Performances in the presence of dense noise}
We consider now the more realistic scenario where the process noise $\left\{w_t\right\}$ and the measurement noise $\left\{v_t\right\}$ are nonzero. We further assume them to be bounded, white  and uniformly distributed. For the numerical experiments these signals are sampled from   an interval of the form $\interval{-a}{a}$.
For comparison purpose, we conduct the estimation with several  estimators:
\begin{itemize}
	\item an instance of the estimator \eqref{eq:def_est},  denoted  $\mathcal{E}_{\ell_2^2,\ell_1}$ in the sequel, in which the loss function $\phi_t$ is quadratic and  $\psi_t$ is  the $\ell_1$-norm and $\lambda=1000$ (see \eqref{eq:V} and \eqref{eq:phi_f}-\eqref{eq:psi_f}) 
	%and specified as  in the beginning of Section \ref{part:v}.
	\item an instance of the estimator \eqref{eq:def_est} denoted  $\mathcal{E}_{\ell_1,\ell_1}$  in which both loss functions $\phi_t$ and  $\psi_t$ are  the $\ell_1$-norm with $\lambda=10$
	\item the estimator $\mathcal{E}^\circ$ defined in \eqref{eq:Constrained-Estimator}
\end{itemize}
In addition we implement  oracle versions\footnote{By oracle version of an estimator, we refer here to an implementation of this estimator which is aware of the sparse noise sequence $\left\{s_t\right\}$. } of $\mathcal{E}_{\ell_2^2,\ell_1}$ and of $\mathcal{E}_{\ell_2^2,\ell_2^2}$ (the latter corresponding to an instance of $\mathcal{E}$ where both $\phi_t$ and $\psi_t$ are instantiated as quadratic functions).

\paragraph{Experiment 1: Resilience test} Keeping the level of both dense noises (i.e., $w_t$ and $v_t$) fixed with amplitude $a=0.03$ for the entries of the former and $a=0.1$ for the latter (yielding a Signal to Noise Ratio (SNR) of about 30 dB in each case), we apply the estimators  $\mathcal{E}_{\ell_2^2,\ell_1}$, $\mathcal{E}_{\ell_1,\ell_1}$ and $\mathcal{E}^\circ$ as defined above) 
to  $100$ different realizations of the output data and we compute  the average of the corresponding  relative estimation errors. This process is repeated for different fractions of nonzeros in the sparse noise $\left\{s_t\right\}$ ranging from $0$ to $0.8$. The estimates obtained by these estimators are displayed in Figure \ref{fig:Noisy-Estimate} in log scale. For the sake of comparison, we also display the oracle estimates given by $\mathcal{E}_{\ell_2^2,\ell_1}$ and those obtained by a standard least squares estimator $\mathcal{E}_{\ell_2^2,\ell_2^2}$ (i.e. with $\phi_t$ and $\psi_t$ taken to be both quadratic in \eqref{eq:V}). By oracle of an estimator, we mean here a version of that estimator which is aware of the true values of the sparse noise sequence $\left\{s_t\right\}$. 
The results tend to show that the estimator \eqref{eq:def_est} remains stable until the (empirical) resilience condition is violated (an event that happens when the sparsity level for the sparse noise is around $60\%$). This is consistent with the resilience property characterized in Theorem \ref{th:lb} and the empirical observations made in Section \ref{subsec:Exact-Recoverability} according to which the estimator is insensitive to the sparse noise sequence $\left\{s_t\right\}$ as long as the number of nonzero values in it (whose magnitudes are possibly arbitrarily large) is less than a certain threshold determined by the properties of the system. 
 While Lemma \ref{lem:pr_calc} provides an underestimate of the number of correctable outliers as $r_{\max}=8$ (out of $100$), we can observe that the empirical breakpoint in the current example seems to be indeed around $40\%$. The discrepancy between the two values is partly explained by the pessimism of the upper bound of $p_r$ proposed in Lemma  \ref{lem:pr_calc}. 
%

%%%%%%%%%%%%%%%%%%%%%%
\begin{figure}[h!]
	\centering
	\includegraphics{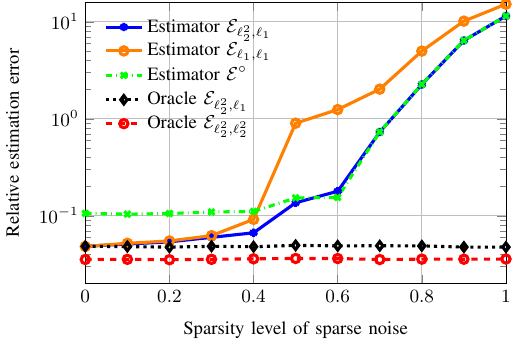}
%noisy_state_estimate.eps} 
		\captionsetup{justification=justified}
	\caption{Average relative estimation error (in logarithm scale) induced by  different estimators  versus sparsity level of the sparse noise $\left\{s_t\right\}$. The relative error is expressed here as $\big\|\hat{X}-X\big\|_2/\left\|X\right\|_2$ where $X$ and $\hat{X}$ denote the true and estimated state matrices respectively. Parameters of the estimator $\mathcal{E}$ in \eqref{eq:def_est}: $\lambda=1000$, $W_t=I_2$ and $V_t=1$ for all $t$.  
	}
		\label{fig:Noisy-Estimate}
\end{figure}
%%%%%%%%%%%%%%%%%%%%%%%

\paragraph{Experiment 2: Stability with respect to dense noise} Now, we fix the sparsity level of the time sequence $\left\{s_t\right\}$ to $0.2$ and let the powers of the dense noise $\left\{(w_t,v_t)\right\}$ vary jointly from 5 dB to 100 dB in term of SNR. 
The estimates obtained by the estimators \eqref{eq:def_est} and \eqref{eq:Constrained-Estimator} with the choices of $\phi_t$ and $\psi_t$ agreed in the beginning of Section \ref{part:v} are displayed in Figure \ref{fig:Noisy-Estimate-Different-Levels-Noise} in term of $\log_{10}$ of estimation errors. What this illustrates is that whenever the number of faulty data is reasonable (here $20\%$ of the available measurements), the  estimator discussed in this section behaves almost in the same way as when there is no faulty data at all.    
%%%%
\begin{figure}[h!]
	\centering
	\includegraphics{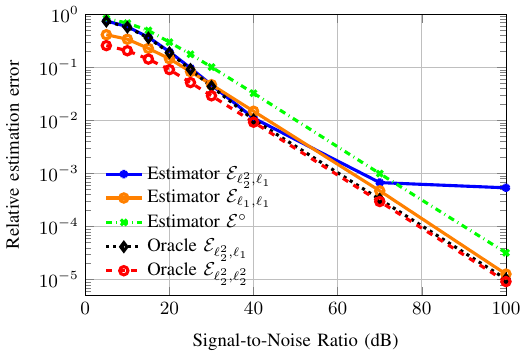}
		\captionsetup{justification=justified}
	\caption{Average relative estimation error (in log scale) induced by different estimators  for different levels of both dense noises $w_t$ and $v_t$.  Parameters of the estimator $\mathcal{E}$ in \eqref{eq:def_est}: $\lambda=1000$, $W_t=I_2$ and $V_t=1$ for all $t$.  
	}
		\label{fig:Noisy-Estimate-Different-Levels-Noise}
\end{figure}

\paragraph{Experiment 3: Impact of the regularization parameter $\lambda$ in $\mathcal{E}$} 
To assess the influence of the regularization parameter $\lambda$ on the performance of the estimators $\mathcal{E}_{\ell_2^2,\ell_1}$ and $\mathcal{E}_{\ell_1,\ell_1}$, we fix the amplitudes of both dense noises $\left\{w_t\right\}$ and $\left\{v_t\right\}$ at the same level as in Experiment 1 (\textit{i.e.} SNR equal to $30$dB and ratio of non-zeros entries in $\{s_t\}$ equal to 30$\%$). In this setting, we consider a set of values of $\lambda$ ranging from $10^{-3}$ to $10^{6}$. For each of these values we perform an estimation over a hundred realizations of the output data and compute the average of the corresponding relative estimation errors. The outcome of this test, depicted in Figure~\ref{fig:perf_lambda},   tends to suggest that low values of $\lambda$ yield quite poor results. Conversely, when $\lambda$ tends towards infinity, both estimators' performance measures  saturate at the same value, namely $0.1$. It turns out that this limit value corresponds to the relative error obtained for $\mathcal{E}^\circ$ in Experiment 1 in the same configuration, hence suggesting that $\mathcal{E}$ tends indeed to $\mathcal{E}^\circ$ in behavior as $\lambda$ becomes large. Finally, it is interesting to observe that both performance curves exhibit minima located around $\lambda=750$ and $\lambda=10$ for $\mathcal{E}_{\ell_2^2,\ell_1} $and $\mathcal{E}_{\ell_1,\ell_1}$ respectively.  
%%%%
\begin{figure}[hbtp]
\centering
\includegraphics{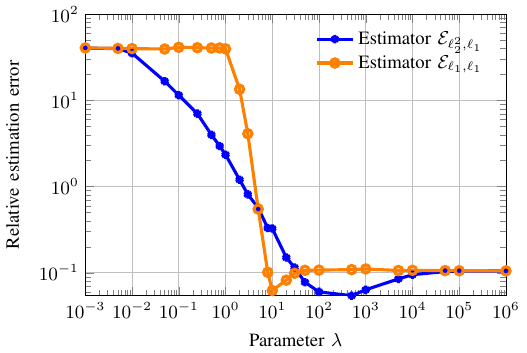}
\caption{Average relative estimation error induced by $\mathcal{E}_{\ell_2^2,\ell_1}$ and $\mathcal{E}_{\ell_1,\ell_1}$ for the system \eqref{eq:example-matrices}, SNR$=30$dB and 30$\%$ of non-zero entries in $\{s_t\}$ for different values of the regularization parameter $\lambda$.}
\label{fig:perf_lambda}
\end{figure}
%%%%

%%%%%%%%%%%%%%%%%%%%%%%%%%%%%%%%%%%%%%%%%%%%%%%%%%%%%
\section{Conclusion}\label{part:vi}
In this paper, we have considered the problem of estimating the state of linear time-varying systems in the face of uncertainties modeled as process  and measurement noises in the system equations. The measurement noise sequence assumes values of possibly arbitrarily large amplitude which occur intermittently in time and accross the available sensors.   
For this problem we  have proposed a class of estimators based on the resolution of a family of parameterizable optimization problems. The discussed family  is rich enough to include optimization-based estimators based on various loss functions which may be convex (e.g., $\ell_p$-norms) or nonconvex (e.g., $\ell_p$ quasi-norms or saturated functions), smooth or nonsmooth. 
 In particular, we have proved a resilience property for the proposed class of state estimators,  that is, the resulting estimation error is bounded by a bound which is independent of the extreme values of the measurement noise provided that the number of occurrences (over time and over the whole set of sensors) of such extreme values is limited. Note however that the estimators studied here operate in batch mode, that is, they apply to a finite collection of measurements. In future works we intend to investigate efficient and low cost adaptive versions of the proposed optimization framework.   Another interesting research avenue would be to study the level of  performance which is achievable  if one uses the discussed framework as a method to detect bad data prior to a refinement with standard least squares estimation.

%%%%%%%%%%%%%%%%%%%%%%%%%%%%%%%%%%%%%%%%%%%%%%%%%%%%%%%%%%%%%%%%%%%%%%%%%%%%%%%%%%%%%%%%%%%%%%%
\appendix
In this appendix, we provide some technical results  used in the paper and the associated proofs. 

\subsection{A useful technical lemma}\label{app:fc_proof}
\begin{lem}\label{lem:fc}
 Let $\xi_1,\xi_2:\Re^{a\times b}\rightarrow\Re_{\geq 0}$ be two functions which satisfy properties~\ref{prop:prem}--\ref{prop:3} and let $\ell:\Re^{c\times d} \rightarrow\Re^{a\times b}$ be an injective linear mapping. Then $\xi_1+\xi_2$ and $\xi_1\circ\ell$  verify~\ref{prop:prem}--\ref{prop:3}. In addition, the following holds: 
\begin{enumerate}\topsep=0pt\itemsep=0pt
\item[\emph{(j)}] If $\xi_1$, $\xi_2$ verify~\ref{prop:4}, then $\xi_1+\xi_2$ and $\xi_1\circ\ell$ verify \ref{prop:4} \label{lem_cf_i}.
\item[\emph{(jj)}]  If $\xi_1$, $\xi_2$ verify~\ref{prop:der}, then $\xi_1+\xi_2$ and $\xi_1\circ\ell$ verify \ref{prop:der} \label{lem_cf_ii}.
\end{enumerate}
\end{lem}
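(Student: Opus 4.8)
The plan is to verify the claimed properties one at a time, handling the sum $\xi_1+\xi_2$ and the composition $\xi_1\circ\ell$ in parallel. The whole argument rests on two elementary facts about $\ell$: since it is linear we have $\ell(0)=0$, $\ell(-Z)=-\ell(Z)$, $\ell(\lambda Z)=\lambda\ell(Z)$ and $\ell(Z_1-Z_2)=\ell(Z_1)-\ell(Z_2)$; and since it is injective, $Z\neq 0$ forces $\ell(Z)\neq 0$. These will be invoked repeatedly, the injectivity being needed only for positive definiteness.

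First I would dispatch the base properties \ref{prop:prem}--\ref{prop:3}. For the sum, \ref{prop:prem} holds because $(\xi_1+\xi_2)(0)=0$ and, for $Z\neq 0$, both summands are strictly positive by \ref{prop:prem} for $\xi_1,\xi_2$; \ref{prop:2} is the continuity of a sum of continuous maps; \ref{prop:3} is termwise symmetry. For the composition, $(\xi_1\circ\ell)(0)=\xi_1(0)=0$, while for $Z\neq 0$ injectivity gives $\ell(Z)\neq 0$ and hence $\xi_1(\ell(Z))>0$; continuity is the composition of continuous maps; and $\xi_1(\ell(-Z))=\xi_1(-\ell(Z))=\xi_1(\ell(Z))$ gives symmetry.

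For the composition halves of (j) and (jj), the argument is immediate substitution. Evaluating \ref{prop:4} for $\xi_1$ at the point $\ell(Z)$ and using $\lambda\ell(Z)=\ell(\lambda Z)$ yields $(\xi_1\circ\ell)(Z)\geq q_1(1/|\lambda|)\,(\xi_1\circ\ell)(\lambda Z)$, so $\xi_1\circ\ell$ inherits \ref{prop:4} with the \emph{same} $\mathcal{K}_\infty$ function $q_1$. Likewise, applying \ref{prop:der} for $\xi_1$ to the pair $(\ell(Z_1),\ell(Z_2))$ together with $\ell(Z_1-Z_2)=\ell(Z_1)-\ell(Z_2)$ confers on $\xi_1\circ\ell$ the property \ref{prop:der} with the same constant $\gamma_{\xi_1}$. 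No minimization is needed in the composition case.

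The only step requiring a little care is the sum case. Writing $q_1,q_2$ for the $\mathcal{K}_\infty$ functions associated with $\xi_1,\xi_2$ through \ref{prop:4}, I would add the two instances of \eqref{eq:homog_gen} and apply the elementary bound $q_1(s)a+q_2(s)b\geq \min\bigl(q_1(s),q_2(s)\bigr)(a+b)$ (valid for $a,b\geq 0$), with $s=1/|\lambda|$, $a=\xi_1(\lambda Z)$, $b=\xi_2(\lambda Z)$, to reach
\begin{equation*}
(\xi_1+\xi_2)(Z)\geq q(1/|\lambda|)\,(\xi_1+\xi_2)(\lambda Z),\qquad q\triangleq\min(q_1,q_2).
\end{equation*}
The main (and essentially the only) point to justify is that $q$ is again a $\mathcal{K}_\infty$ function: it is continuous and vanishes at $0$ trivially, tends to $+\infty$ because both $q_i$ do, and is strictly increasing since the pointwise minimum of two strictly increasing functions is strictly increasing. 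The corresponding step for \ref{prop:der} is easier: summing the two instances of \eqref{eq:tri_gen} and using $\gamma_{\xi_1}\xi_1(Z_1)+\gamma_{\xi_2}\xi_2(Z_1)\geq \min(\gamma_{\xi_1},\gamma_{\xi_2})\,(\xi_1+\xi_2)(Z_1)$ shows that $\xi_1+\xi_2$ satisfies \ref{prop:der} with $\gamma=\min(\gamma_{\xi_1},\gamma_{\xi_2})$. Thus the sole genuine obstacle is the short verification that the minimum of $\mathcal{K}_\infty$ functions remains $\mathcal{K}_\infty$; everything else is bookkeeping driven by linearity and injectivity of $\ell$.
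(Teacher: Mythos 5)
Your proof is correct and follows essentially the same route as the paper: direct substitution via linearity and injectivity of $\ell$ for the composition and base properties, and reduction of the sum case to the fact that the pointwise minimum of two $\mathcal{K}_\infty$ functions (respectively, of the constants $\gamma_{\xi_1},\gamma_{\xi_2}$) again works, which is exactly the paper's auxiliary Lemma on minima of $\mathcal{K}_\infty$ functions. The only difference is that you assert that short verification (continuity, strict monotonicity, unboundedness of the minimum) rather than spelling it out as the paper does, but your stated justifications are the right ones.
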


%\textit{Proof.} (See Appendix~\ref{app:fc_proof})
\noindent The main point of interest of this lemma is that even if there are functions which satisfy properties~\ref{prop:4} and~\ref{prop:der} with different values of $q$ and $\gamma$, their sum still verifies those properties.

\noindent To prove Lemma~\ref{lem:fc}, we will need the following result.  %\label{app:k_infty}
\begin{lem}[Minimum function of two $\mathcal{K}_{\infty}$ functions]\label{lem:kf}
If $q_1$ and $q_2$ are two $\mathcal{K}_\infty$ functions, then so is the function $q$ defined by 
\begin{equation}
\forall \lambda\in\Re_{\geq 0},\: q(\lambda)=\min_{i\in\{1,2\}}q_i(\lambda)
\end{equation}
%is a $\mathcal{K}_{\infty}$ function.
\end{lem}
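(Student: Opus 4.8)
The plan is to verify directly that $q=\min(q_1,q_2)$ satisfies each of the four defining properties of the class $\mathcal{K}_\infty$: continuity, vanishing at zero, strict monotonicity, and divergence to $+\infty$. Three of these are essentially immediate, and only strict monotonicity requires a small argument.

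First I would dispatch the easy properties. Continuity follows because the pointwise minimum of two continuous functions is continuous (for instance via the identity $\min(a,b)=\tfrac{1}{2}\big(a+b-|a-b|\big)$). Since $q_1(0)=q_2(0)=0$, we obtain $q(0)=\min(0,0)=0$, and $q\geq 0$ because each $q_i\geq 0$. For the behavior at infinity, fix any $M>0$; since $q_i(\lambda)\to +\infty$ there are thresholds $\lambda_i^\star$ beyond which $q_i(\lambda)>M$, so taking $\lambda^\star=\max(\lambda_1^\star,\lambda_2^\star)$ guarantees $q(\lambda)=\min(q_1(\lambda),q_2(\lambda))>M$ for all $\lambda>\lambda^\star$, whence $\lim_{\lambda\to+\infty}q(\lambda)=+\infty$.

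The one step needing care — and the only real obstacle — is strict monotonicity, since the minimum of two strictly increasing functions is not obviously strictly increasing once the index attaining the minimum switches. Here is how I would handle it. Fix $0\leq \lambda_1<\lambda_2$ and set $a=q(\lambda_1)$. Without loss of generality suppose the minimum at $\lambda_1$ is attained by $q_1$, i.e.\ $a=q_1(\lambda_1)\leq q_2(\lambda_1)$. Strict monotonicity of $q_1$ gives $q_1(\lambda_2)>q_1(\lambda_1)=a$, while strict monotonicity of $q_2$ gives $q_2(\lambda_2)>q_2(\lambda_1)\geq a$. Hence both $q_1(\lambda_2)$ and $q_2(\lambda_2)$ strictly exceed $a$, so $q(\lambda_2)=\min(q_1(\lambda_2),q_2(\lambda_2))>a=q(\lambda_1)$, which establishes strict monotonicity.

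Having verified all four properties, I conclude $q\in\mathcal{K}_\infty$. Finally I would note that the argument is not special to the case of two functions: applied inductively (or by taking the maximum of finitely many thresholds), the same reasoning shows that the minimum of any finite collection of $\mathcal{K}_\infty$ functions is again in $\mathcal{K}_\infty$, which is the form in which the result is invoked in the proof of Lemma~\ref{lem:fc}.
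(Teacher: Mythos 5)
Your proof is correct and follows essentially the same route as the paper's: the same $\min(a,b)=\tfrac12(a+b-|a-b|)$ identity for continuity, the same threshold-maximum argument for divergence, and the same key observation for strict monotonicity (that $q(\lambda_1)\leq q_i(\lambda_1)<q_i(\lambda_2)$ for both $i$, which your WLOG formulation expresses equivalently). The closing remark about finite collections is a harmless and accurate bonus.
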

%%%%
\begin{proof}
We have to prove that $q$ is continuous, strictly increasing and satisfies  $q(0)=0$ and $\lim_{\lambda \rightarrow+\infty}q(\lambda)=+\infty$.\\
First of all, it is clear that $q(0)=0$. Also, continuity of $q$ is immediate from that of $q_1$ and $q_2$ by noting that $q=(q_1+q_2-|q_1-q_2|)/2$. 
To see the strict increasingness of $q$, consider $\lambda_1$ and $\lambda_2$ in $\Re_{\geq 0}$ such that $\lambda_1<\lambda_2$. Then 
$q(\lambda_1)\leq q_1(\lambda_1)<q_1(\lambda_2)$ and  $q(\lambda_1)\leq q_2(\lambda_1)<q_2(\lambda_2)$.  It follows that $q(\lambda_1)<\min_{i\in\{1,2\}}q_i(\lambda_2)=q(\lambda_2)$ and hence $q$ is strictly increasing. 
We now show that $q(\lambda)$ tends to infinity when $\lambda\rightarrow +\infty$. 
Let $M>0$ be an arbitrary positive number. 
%w
 Since $q_1$ and $q_2$ tend to infinity, there exist $\eta_1$ and $\eta_2$ such that
$\lambda\geq \eta_1 \Rightarrow q_1(\lambda)\geq M$ and 
$\lambda\geq \eta_2 \Rightarrow q_2(\lambda)\geq M$. 
By taking $\eta=\max_{i\in\{1,2\}}\eta_i$, it holds that $q(\lambda)\geq M$ whenever $\lambda\geq \eta$, or equivalently that, $\lim_{\lambda\rightarrow +\infty}q(\lambda)=+\infty$. 
\end{proof}

\noindent \textbf{Proof of Lemma \ref{lem:fc}:}
The sum $\xi_1+\xi_2$   has clearly the properties~\ref{prop:prem}--\ref{prop:3} as a sum of continuous, even, positive definite functions. Moreover, the composition of a continuous, even, convex positive definite function with an injective linear mapping yields a continuous, even, positive definite function, so $\xi_1\circ\ell$ satisfies properties~\ref{prop:prem}--\ref{prop:3} too.
 
\noindent \textit{Proof of (j): }  Assume that $\xi_1$ and $\xi_2$ satisfy~\ref{prop:4} with $\mathcal{K}_\infty$ functions $q_1$ and $q_2$ respectively. For all $\lambda\neq 0$ and all $Z\in\Re^{a\times b}$, \eqref{eq:homog_gen} yields 
\begin{equation}
\xi_i(Z)\geq \min_{j\in\{1,2\}} q_j\left(\dfrac{1}{|\lambda |}\right) \xi_i(\lambda Z). 
\end{equation} 
If we define $q$ so that for all $\lambda\in\Re_{\geq 0}$, $q(\lambda)=\min_{i\in\{1,2\}}q_i(\lambda)$, 
then $q$ is a $\mathcal{K}_{\infty}$ function (see Lemma~\ref{lem:kf} above) such that for all $\lambda\neq 0$ and $Z\in\Re^{a\times b}$,
\begin{equation}
\xi_1(Z)+\xi_2(Z)\geq q\left(\dfrac{1}{|\lambda |}\right) \left(\xi_1(\lambda Z)+\xi_2(\lambda Z)\right)
\end{equation}
therefore $\xi_1+\xi_2$ verifies property~\ref{prop:4}. Besides, for all $\lambda\neq 0$ and $Z$ in $\Re^{c\times d}$,
\begin{equation}
\xi_1(\ell(Z))\geq q_1\left(\dfrac{1}{|\lambda |}\right) \xi_1(\lambda \ell(Z))=q_1\left(\dfrac{1}{|\lambda |}\right) \xi_1(\ell(\lambda Z))
\end{equation}
given the linearity of $\ell$. We can then conclude that $\xi_1\circ\ell$ also verifies property~\ref{prop:4}.

\noindent \textit{Proof of (jj):}  Assume that $\xi_1$ and $\xi_2$ satisfy~\ref{prop:der} for $\gamma_1$ and $\gamma_2$  respectively. Let $\gamma=\min_{i\in\{1,2\}}\gamma_i $. Similarly to the first case, for all $Z_1,Z_2$ in $\Re^{a\times b}$ and $i$ in $\{1,2\}$,~\eqref{eq:tri_gen} yields 
\begin{equation}
\xi_i(Z_1-Z_2)\geq \gamma\xi_i(Z_1)-\xi_i(Z_2)
\end{equation}
which gives
\begin{equation}
\xi_1(Z_1-Z_2)+\xi_2(Z_1-Z_2)\geq \gamma \left(\xi_1(Z_1)+\xi_2(Z_1)\right)-\left(\xi_1(Z_2)+\xi_2(Z_2)\right)
\end{equation}
therefore $\xi_1+\xi_2$ satisfies property~\ref{prop:der}.  Moreover, for all $Z_1$ and $Z_2$ in $\Re^{c\times d}$,
\begin{equation}
\xi_1(\ell(Z_1-Z_2))=\xi_1(\ell(Z_1)-\ell(Z_2))\geq \gamma\xi_1(\ell(Z_1))-\xi_1(\ell(Z_2))
\end{equation}
so $\xi_1\circ \ell$ satisfies~\ref{prop:der} too.
{\flushright\qed}

%%%%%%%%%%%%%%%%%%%%%%%%%%%%%%%%%%%%%%%%%%%%%%%%%%%%%%%%%%%%%%%%%%%%%%
\subsection{Proof of Lemma~\ref{lem:obs}}\label{app:obs_proof}
(i) $\Rightarrow$~(ii): Assuming that the system is observable on the interval $\interval{0}{T-1}$, we need to prove that there exists a $\mathcal{K}_\infty$ function $q$ which verifies~\eqref{eq:lem_lo_i}. The idea of the proof is to apply Lemma~\ref{lem:fc} to the function $F$ of $\Re^{n\times T}$ defined by $F(Z)=V_\Sigma(0,Z)$ with $V_\Sigma$ defined as in \eqref{eq:V}. 
To begin with, we note that $F$ can be decomposed as
$F=\xi\circ\ell$
where $\xi:\Re^{n\times (T-1)}\times\Re^{n\times T}\rightarrow\Re_{\geq 0}$ is a loss function such that for $Z=\begin{pmatrix}
z_{0} & \cdots & z_{T-2}
\end{pmatrix}$ in $\Re^{n\times (T-1)}$, $Y=\begin{pmatrix}
y_{0} & \cdots & y_{T-1}
\end{pmatrix}$ in $\Re^{n_y\times T}$,
$$\xi(Z,Y)=\sum_{t=0}^{T-2}\phi_t(z_{t})+\sum_{t=0}^{T-1}\psi_t(y_{t})$$
and $\ell:\Re^{n\times T}\rightarrow\Re^{n\times (T-1)}\times\Re^{n\times T}$ a linear mapping such that for all $Z=\begin{pmatrix}
z_{0}& \cdots & z_{T-1} 
\end{pmatrix}$ in $\Re^{n\times T}$,
\begin{multline*}
\ell(Z)=\Big(\begin{pmatrix}
z_{1}-A_{0}z_{0} & \cdots & z_{T-1}-A_{T-2}z_{T-2}
\end{pmatrix}\\
,\begin{pmatrix}
C_{0}z_{0} & \cdots & C_{T-1}z_{T-1}
\end{pmatrix}\Big). 
\end{multline*}
To apply Lemma~\ref{lem:fc} to $F$,  we need to check that $F$ fulfills the properties \ref{prop:prem}--\ref{prop:3}. In virtue of the assumptions on $\phi_t$ and $\psi_t$ agreed in the statement of the lemma, the first two properties are obviously satisfied. The third will be satisfied if $\ell$ is injective, a propriety which we now check. Let $Z$ be such that $\ell(Z)=0$. Then 
\begin{align}
&\forall t \in \{0 ,\ldots ,T-2\},&\: z_{t+1}-A_tz_t&=0 \label{eq:lem_obs_inj_i}\\
&\forall t \in \{0 ,\ldots ,T-1\}, &\: C_tz_t&=0 \label{eq:lem_obs_inj_ii}
\end{align}
An immediate consequence of \eqref{eq:lem_obs_inj_i}--\eqref{eq:lem_obs_inj_ii} is that $\mathcal{O}_{0,T-1}z_{0}=0$ which yields $z_{0}=0$ because the system is observable on $\interval{0}{T-1}$. Therefore, thanks to the recursive relation~\eqref{eq:lem_obs_inj_i}, we can conclude that $Z=0$, and so, the linear mapping $\ell$ is injective. \\
We can therefore apply Lemma~\ref{lem:fc} to conclude  that $F$ satisfy indeed ~\ref{prop:prem}--\ref{prop:4}. Now, consider a matrix norm $\lVert\cdot\rVert_{\text{ind}}$ on $\Re^{n\times T}$ induced by two vector norms $\lVert\cdot\rVert_T$ and $\lVert\cdot\rVert$ defined respectively on $\Re^T$ and $\Re^n$ in the sense that $$\lVert Z\rVert_{\text{ind}}=\sup_{\substack{\eta\in \Re^T\\\eta\neq 0}}\dfrac{\left\|Z\eta\right\|}{\left\|\eta\right\|_T}$$
Applying Lemma~\ref{lem:lb} to $F$ with the so-defined induced  norm, we infer that there  exists $D>0$ defined as in~\eqref{eq:def_d} and  a $\mathcal{K}_{\infty}$ function $q'$, such that for all $Z$ in $\Re^{n\times T}$,
\begin{equation}
F(Z)\geq Dq'(\lVert Z\rVert_{\text{ind}})
\end{equation}
If we denote with $e_1$ the canonical vector of $\Re^{T}$  with all entries equal to zero except the first one which is equal to $1$, then $Ze_1= z_{0}$. However, by definition of the induced norm, we know that $\lVert Ze_1\rVert/\lVert e_1\rVert_T\leq \lVert Z\rVert_{\text{ind}}$. Therefore, as $q'$ is an increasing  function, we get that $q'(\lVert z_{0}\rVert/\lVert e_1\rVert_T)\leq q'(\lVert Z\rVert_{\text{ind}})$. By posing $q:\lambda\mapsto Dq'(\lambda/\lVert e_1\rVert_T)$, it is easy to see that $q$ is a $\mathcal{K}_{\infty}$ function so that for all $Z$ in $\Re^{n\times T}$, $V(0,Z)=F(Z)\geq q(\lVert z_{0}\rVert)$. 
%
%%%%%%
(ii) $\Rightarrow$~(i): Assume that there exists $q$ in $\mathcal{K}_{\infty}$ such that for all $Z~=~\begin{pmatrix}z_{0} & z_{1} &\ldots &z_{T-1}\end{pmatrix}$ in $\Re^{n\times T}$ such that \eqref{eq:lem_lo_i}  holds. 
We want to prove that the matrix $\mathcal{O}_{0,T-1}$ defined in \eqref{eq:Obsv-Matrix} is of full column rank, which is equivalent to showing that for $z$ in $\Re^n$, $\mathcal{O}_{0,T-1}z=0$ implies $z=0$.
For all $z\in\Re^n$, construct a sequence $Z^*=\begin{pmatrix}z_0^* & \cdots & z_{T-1}^*\end{pmatrix}$ as follows: $z_{0}^*=z$ and $z_{t+1}^*=A_tz_t^*$ for all $ t\in\{0,\ldots,T-2\}$.  
Since the inequality~\eqref{eq:lem_lo_i} is supposed to be true for any sequence, so it is for the particular sequence $\{z_t^*\}$ defined above. Applying this inequality to $Z^*$ yields
\begin{equation}\label{eq:lem_lo_p}
V(0,Z^*)=\sum_{t=0}^{T-1}\psi_t(C_tz_t^*)\geq q(\lVert z_{0}^*\rVert)
\end{equation}
Now, observe that if $\mathcal{O}_{0,T-1}z=0$, then it follows from the recursive relation $z_{t+1}^*=A_tz_t^*$ that for all $t$ in $\{0,\ldots,T-1\}$, $C_tz_t^*=0$. Injecting this  in~\eqref{eq:lem_lo_p} imposes that $q(\lVert z_{0}^*\rVert)\leq 0$ which necessarily implies that $z=0$ as $q$ is a $\mathcal{K}_{\infty}$ function. Therefore, the matrix $\mathcal{O}_{0,T-1}$ is injective and the system is observable on the interval $\interval{0}{T-1}$.
%{\flushright \qed}

\subsection{Technical results for proving Corollary \ref{coro:Approximate-Resilience}}\label{subsec:appendix-Corollary}
\noindent This  section contains some technical steps of the proof of Corollary \ref{coro:Approximate-Resilience}. 
%%%%%%%%%%%%%%%%%%%%
\begin{lem}\label{lem:exponential-cost-P1-P5}
If $\ell:\Re^{n_y}\rightarrow\Re_{\geq 0}$ satisfies \ref{prop:prem}--\ref{prop:3} and \ref{prop:der}, then  so does the function $\psi$ defined by 
$\psi(y)=1-e^{-\ell(y)}$.
Moreover if $\ell$ fulfills \ref{prop:4}, then $\psi$ satisfies the same property but with a function $q$ in $\mathcal{K}_{\sat,a}$ for $a=1$. 
\end{lem}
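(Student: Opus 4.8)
The plan is to verify the four properties in turn, treating \ref{prop:prem}--\ref{prop:3} as routine and concentrating the effort on \ref{prop:der} and on the saturated-homogeneity claim. Throughout I write $\psi=g\circ\ell$ with $g:\Re_{\geq 0}\to\Re_{\geq 0}$, $g(t)=1-e^{-t}$, and I first record the three elementary facts about $g$ that drive everything: $g$ is continuous and strictly increasing with $g(0)=0$; $g$ is subadditive, $g(s+t)\le g(s)+g(t)$; and $g$ is sublinear above $1$, i.e. $g(ct)\le c\,g(t)$ for all $c\ge 1$ (this is concavity of $g$ together with $g(0)=0$). Each is proved in one line by differentiating an auxiliary scalar function. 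Properties \ref{prop:prem}--\ref{prop:3} are then immediate: $\psi(0)=g(0)=0$ while $\ell(y)>0$ for $y\ne 0$ forces $\psi(y)=g(\ell(y))>0$; continuity of $\psi$ follows from that of $g$ and $\ell$; and $\psi(-y)=g(\ell(-y))=g(\ell(y))=\psi(y)$ by the symmetry of $\ell$.

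For \ref{prop:der} I would use the equivalent additive form of the GTI recorded just after \eqref{eq:tri_gen}: $\ell(y_1+y_2)\le \gamma_\ell^{-1}\ell(y_1)+\gamma_\ell^{-1}\ell(y_2)$, where $\gamma_\ell$ is the GTI constant of $\ell$. Applying monotonicity and then subadditivity of $g$ yields
\[
\psi(y_1+y_2)\le g\bigl(\gamma_\ell^{-1}\ell(y_1)\bigr)+g\bigl(\gamma_\ell^{-1}\ell(y_2)\bigr).
\]
Then I split on the size of $\gamma_\ell$: if $\gamma_\ell\ge 1$, monotonicity alone gives $g(\gamma_\ell^{-1}\ell(y_i))\le g(\ell(y_i))=\psi(y_i)$; if $\gamma_\ell<1$, sublinearity with $c=\gamma_\ell^{-1}\ge 1$ gives $g(\gamma_\ell^{-1}\ell(y_i))\le \gamma_\ell^{-1}\psi(y_i)$. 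In both cases $\psi(y_1+y_2)\le \gamma_\psi^{-1}\bigl(\psi(y_1)+\psi(y_2)\bigr)$ with $\gamma_\psi=\min(1,\gamma_\ell)>0$, which is exactly \ref{prop:der} for $\psi$.

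For the saturated-homogeneity claim, I would first rewrite \ref{prop:4} for $\ell$ in the scaling form $\ell(\mu y)\ge q_\ell(\mu)\,\ell(y)$ for all $\mu>0$ and all $y$ (obtained from \eqref{eq:homog_gen} via $\lambda\mapsto 1/\mu$ and the symmetry \ref{prop:3}), with $q_\ell\in\mathcal{K}_\infty$. The engine is the scalar inequality $1-e^{-cs}\ge \min(c,1)\,(1-e^{-s})$, valid for all $c,s\ge 0$ (again proved by monotonicity of a one-variable auxiliary function). Taking $c=q_\ell(\mu)$, $s=\ell(y)$ and using monotonicity of $g$ gives $\psi(\mu y)\ge \min(q_\ell(\mu),1)\,\psi(y)$. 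Hence $\psi$ satisfies \ref{prop:4} with gauge $q_\psi=\min(q_\ell,1)$, which is continuous, zero at zero, strictly increasing until it reaches the value $1$ and constant thereafter — a saturated function with range $[0,1]$.

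The delicate point, and the step I expect to be the real obstacle, is making $q_\psi$ land \emph{precisely} in $\mathcal{K}_{\sat,1}$, i.e. ensuring the saturation occurs at argument $1$ so that $q_\psi(1)=1$ and $q_\psi(\mu)=1$ for $\mu\ge 1$; this exact normalization is what Corollary \ref{coro:Approximate-Resilience} later exploits through the implication ``$q'(\lVert E\rVert)<1\Rightarrow \lVert E\rVert<1$''. Since one may only shrink a valid gauge, an arbitrary admissible $q_\ell$ need not satisfy $q_\ell(1)=1$. My plan to close this is to work with the optimal gauge $q_\ell^\star(\mu)=\inf_{y\ne 0}\ell(\mu y)/\ell(y)$, for which $q_\ell^\star(1)=1$ holds automatically, and to set $q_\psi=\min(q_\ell^\star,1)$; the remaining work is then to check that $q_\ell^\star$ (or a suitable $\mathcal{K}_\infty$ minorant agreeing with it at $1$) is continuous and strictly increasing near $1$, so that $q_\psi$ is genuinely a member of $\mathcal{K}_{\sat,1}$.
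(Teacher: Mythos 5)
Your verification of \ref{prop:prem}--\ref{prop:3} is the same routine check as in the paper, but your treatment of \ref{prop:der} takes a genuinely different and more self-contained route. The paper reduces \ref{prop:der} to the two-variable inequality $1-2\bar{\gamma}_{\ell}+\bar{\gamma}_{\ell}\alpha+\bar{\gamma}_{\ell}\beta-(\alpha\beta)^{\bar{\gamma}_{\ell}}\leq 0$ (with $\bar{\gamma}_\ell=\gamma_\ell^{-1}$, $\alpha=e^{-\ell(a)}$, $\beta=e^{-\ell(b)}$) and settles it by appeal to unspecified ``differential calculations''; your monotonicity/subadditivity/sublinearity argument for $g(t)=1-e^{-t}$ proves exactly this inequality in three one-line steps ($g(\bar{\gamma}_\ell(s+t))\leq g(\bar{\gamma}_\ell s)+g(\bar{\gamma}_\ell t)\leq \bar{\gamma}_\ell g(s)+\bar{\gamma}_\ell g(t)$) and gives the same constant, since \eqref{eq:tri_gen} forces $\gamma_\ell\leq 1$, so that your $\gamma_\psi=\min(1,\gamma_\ell)$ equals $\gamma_\ell$. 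For the saturation claim, your scalar inequality $1-e^{-cs}\geq\min(c,1)(1-e^{-s})$ yields $\psi(\mu y)\geq\min\big(q_\ell(\mu),1\big)\psi(y)$ assuming only \ref{prop:4}; this is in fact \emph{more} general than the paper, which delegates this step to Lemma~\ref{lem:Infimum} and there assumes, beyond what Lemma~\ref{lem:exponential-cost-P1-P5} states, that \eqref{eq:homog_gen} holds with \emph{equality} for $\ell$ (true for $\left\|\cdot\right\|^p$, false in general) in order to compute the optimal gauge in closed form.

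The genuine gap is your final step, and your proposed repair does not work as stated. Under \ref{prop:prem}--\ref{prop:4} alone the optimal gauge $q_\ell^\star(\mu)=\inf_{y\neq 0}\ell(\mu y)/\ell(y)$ need not be increasing: for $\ell(y)=\left\|y\right\|^2\big(1.1+\sin(\log\left\|y\right\|)\big)$, which satisfies \ref{prop:prem}--\ref{prop:4} with $q(\mu)=\mu^2/21$, a first-order computation gives $q_\ell^\star(\mu)<1=q_\ell^\star(1)$ for $\mu$ slightly larger than $1$; consequently $\min(q_\ell^\star,1)$ is not constant beyond $1$, and no increasing minorant of $q_\ell^\star$ can agree with it at $1$, so neither variant of your plan lands in $\mathcal{K}_{\sat,1}$. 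The repair is simpler than you expect: cap the \emph{argument} rather than the value, setting $q_\psi(\mu)=q_\ell(\min(\mu,1))$. Because $q_\ell$ is increasing and $q_\ell(1)\leq 1$ (take $\lambda=1$ in \eqref{eq:homog_gen} and use \ref{prop:prem}), one has $q_\psi(\mu)\leq\min(q_\ell(\mu),1)$ for all $\mu>0$, so $q_\psi$ is a valid gauge by your inequality; and it is continuous, zero at zero, strictly increasing on $\interval{0}{1}$, and constant for $\mu\geq 1$ with values in $\interval{0}{1}$, hence a member of $\mathcal{K}_{\sat,1}$ exactly as the paper defines that class --- the definition does \emph{not} require the saturation value to equal $1$. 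The exact normalization $q_\psi(1)=1$ that worried you is indeed needed later, in the proof of Corollary~\ref{coro:Approximate-Resilience} (the step ``$q'(\lVert E\rVert)<1\Rightarrow\lVert E\rVert<1$''), but that is a requirement the corollary places on the lemma, one which the paper itself secures only under the extra equality hypothesis of Lemma~\ref{lem:Infimum}; it is not part of the statement you were asked to prove. With the argument-capping device, your proof is complete.
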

%%%%%%%%%
\begin{proof}
It is straightforward to check that $\psi$ obeys \ref{prop:prem}-\ref{prop:3}. By assumption, $\ell$ obeys \ref{prop:der}. Denote therefore  the associated constant with $\gamma_\ell$ (which, by \eqref{eq:tri_gen}, is necessarily less than or equal to $1$).  To see then that \ref{prop:der} is also satisfied by $\psi$, we just need to check that 
\begin{equation}\label{eq:psi(a+b)}
	\psi(a+b)-\bar{\gamma}_\ell \psi(a)-\bar{\gamma}_\ell\psi(b)\leq 0 \quad \forall (a,b)\in \Re^{n_y}\times \Re^{n_y}
\end{equation}
with $\bar{\gamma}_\ell=\gamma_\ell^{-1}\geq 1$, 
which is equivalent to 
$$1-2\bar{\gamma}_\ell+\bar{\gamma}_\ell e^{-\ell(a)}+\bar{\gamma}_\ell e^{-\ell(b)}-e^{-\ell(a+b)}\leq 0 $$
Noting that $\ell(a+b)\leq \bar{\gamma}_\ell\ell(a)+\bar{\gamma}_\ell\ell(b)$, we have $-e^{-\ell(a+b)}\leq -e^{-\bar{\gamma}_\ell\ell(a)-\bar{\gamma}_\ell\ell(b)}$. From this it follows that for  \eqref{eq:psi(a+b)} to hold, it is enough that
$$1-2\bar{\gamma}_\ell+\bar{\gamma}_\ell e^{-\ell(a)}+\bar{\gamma}_\ell e^{-\ell(b)}-e^{-\bar{\gamma}_\ell\ell(a)-\bar{\gamma}_\ell\ell(b)}\leq 0 $$ 
Posing $\alpha=e^{-\ell(a)}$ and $\beta=e^{-\ell(b)}$, it suffices that 
$$1-2\bar{\gamma}_\ell+\bar{\gamma}_\ell\alpha+\bar{\gamma}_\ell\beta-(\alpha\beta)^{\bar{\gamma}_\ell}\leq 0 \quad \forall (\alpha,\beta)\in \interval[open left]{0}{1} $$
which can indeed be checked to be true by applying the identity $1+\bar{\gamma}_{\ell}\alpha-\bar{\gamma}_{\ell}\leq \alpha^{\bar{\gamma}_{\ell}} $, see e.g., \cite[Fact 1.9.2]{Berstein09-Book}. In effect, it follows from this identity that
$$\begin{aligned}
	1-2\bar{\gamma}_\ell+\bar{\gamma}_\ell\alpha+&\bar{\gamma}_\ell\beta-(\alpha\beta)^{\bar{\gamma}_\ell}  \\
	&=(1+\bar{\gamma}_\ell\alpha-\bar{\gamma}_\ell)+(1+\bar{\gamma}_\ell\beta-\bar{\gamma}_\ell)-1-(\alpha\beta)^{\bar{\gamma}_\ell}\\
	&\leq \alpha^{\bar{\gamma}_\ell}+\beta^{\bar{\gamma}_\ell}-1-(\alpha\beta)^{\bar{\gamma}_\ell}\\
	& = -(1-\alpha^{\bar{\gamma}_\ell})(1-\beta^{\bar{\gamma}_\ell})\leq 0
\end{aligned}.$$
 In conclusion, \eqref{eq:psi(a+b)} holds and therefore $\psi$ satisfies \ref{prop:der}. \\
It remains now to check \ref{prop:4}. This follows directly from Lemma \ref{lem:Infimum} below, from which we know that 
%$$1-e^{-\ell(y)}\geq q^\star(\lambda) \big(1-e^{-\ell(y/\lambda)}\big)\quad  \forall \lambda>0$$
 $\psi(y)\geq q^\star(1/\lambda) \psi(\lambda y)$ with  $q^\star$ is a saturated function in $\mathcal{K}_{\sat,1}$. 
\end{proof}

\begin{lem}\label{lem:Infimum}
Let $\ell:\Re^{n_y}\rightarrow\Re_{\geq 0}$ be a function  satisfying  properties \ref{prop:prem}--\ref{prop:2} and \ref{prop:4}. In particular, assume that property \ref{prop:4} is satisfied by $\ell$ with a  $\mathcal{K}_\infty$ function  $q$ such that \eqref{eq:homog_gen} is an equality relation.
Let $$g(y,\lambda)=\dfrac{1-e^{-\ell(y)}}{1-e^{-\ell(y/\lambda)}} $$ %\dfrac{\psi(y)}{\psi(y/\lambda)}=
for $\lambda\neq 0$ and $y\neq 0$. 
Then the function $q^\star: \Re_{\geq 0}\rightarrow \interval{0}{1} $ defined  by
$q^\star(\lambda)=\inf_{y\neq 0} g(y,\lambda)$ for $\lambda>0$ and $q^\star(0)=0$,
is well-defined, continuous and strictly increasing on $\interval{0}{1}$. Moreover we have 
$$1-e^{-\ell(y)}\geq q^\star(1/\lambda) \big(1-e^{-\ell(\lambda y)}\big)\quad  \forall (\lambda,y)\in \Re_{>0}\times \Re^{n_y}$$
\end{lem}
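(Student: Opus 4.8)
The plan is to exploit the equality form of \ref{prop:4} to collapse the two–variable infimum defining $q^\star$ into a one–dimensional analysis. First I would put $\mu=1/\lambda$ in the equality relation $\ell(y)=q(1/|\mu|)\ell(\mu y)$ to obtain, for every $\lambda>0$ and $y\neq 0$, the identity $\ell(y/\lambda)=\ell(y)/q(\lambda)$; taking $\lambda=1$ in the same relation also forces $q(1)=1$. Substituting this into $g$ shows that $g(y,\lambda)$ depends on $y$ only through the scalar $u=\ell(y)$, namely
\[
g(y,\lambda)=\frac{1-e^{-u}}{1-e^{-u/q(\lambda)}}.
\]
Next I would argue that as $y$ ranges over the nonzero vectors, $u=\ell(y)$ ranges over all of $\Re_{>0}$: positive definiteness \ref{prop:prem} gives $u>0$, while fixing any $y_0\neq 0$ and using $\ell(ty_0)=\ell(y_0)/q(1/t)$ together with $q\in\mathcal{K}_{\infty}$ shows every positive value is attained along the ray $ty_0$. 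Hence $q^\star(\lambda)=F\big(q(\lambda)\big)$, where $F(c)=\inf_{u>0}\frac{1-e^{-u}}{1-e^{-u/c}}$.

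The core of the argument is then the computation of $F(c)$, and this is the step I expect to be the main obstacle. Writing $h(u)=\frac{1-e^{-u}}{1-e^{-u/c}}$, I would study $\frac{d}{du}\ln h(u)=p(u)-\tfrac{1}{c}\,p(u/c)$ with $p(s)=1/(e^{s}-1)$. Multiplying by $u>0$ reduces the sign question to comparing $P(u)$ with $P(u/c)$, where $P(s)=s/(e^{s}-1)$. The key auxiliary fact is that $P$ is strictly decreasing on $\Re_{>0}$, which I would verify by showing that the numerator of $P'(s)$, namely $e^{s}(1-s)-1$, is negative for $s>0$ (it vanishes at $0$ and has derivative $-se^{s}<0$). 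Consequently $h$ is strictly increasing in $u$ when $c<1$, strictly decreasing when $c>1$, and constant when $c=1$. Combining this monotonicity with the boundary limits $\lim_{u\to 0^{+}}h(u)=c$ and $\lim_{u\to\infty}h(u)=1$ (both obtained from first–order Taylor expansions) yields $F(c)=\min(c,1)$.

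With $F$ determined, I would conclude that $q^\star(\lambda)=\min\big(q(\lambda),1\big)$, from which every claimed property follows: the range lies in $\interval{0}{1}$ and $q^\star(0)=q(0)=0$; continuity holds as the minimum of two continuous functions; and since $q\in\mathcal{K}_{\infty}$ with $q(1)=1$, the map $\lambda\mapsto\min(q(\lambda),1)$ is strictly increasing on $\interval{0}{1}$ and saturates at $1$ for $\lambda\geq 1$, i.e.\ $q^\star\in\mathcal{K}_{\sat,1}$. Finally, the claimed inequality is immediate from the definition of the infimum: replacing $\lambda$ by $1/\lambda$ gives $q^\star(1/\lambda)=\inf_{y\neq 0}\frac{1-e^{-\ell(y)}}{1-e^{-\ell(\lambda y)}}$, so for each nonzero $y$ the ratio dominates $q^\star(1/\lambda)$, which rearranges to $1-e^{-\ell(y)}\geq q^\star(1/\lambda)\big(1-e^{-\ell(\lambda y)}\big)$; the case $y=0$ holds trivially since both sides vanish.
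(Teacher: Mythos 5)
Your proof is correct and takes essentially the same route as the paper's: both exploit the equality form of \ref{prop:4} to collapse the two-variable infimum into a one-dimensional one (the paper substitutes $a=e^{-\ell(y)}\in\interval[open]{0}{1}$, you substitute $u=\ell(y)\in\Re_{>0}$, which is the same change of variables up to $a=e^{-u}$), and both arrive at the same closed form $q^\star(\lambda)=\min\big(q(\lambda),1\big)$, which coincides with the paper's expression $1/q(1/\lambda)$ for $0<\lambda\leq 1$ and $1$ otherwise, since $q(\lambda)q(1/\lambda)=1$. The only difference is that your monotonicity analysis of $u\mapsto(1-e^{-u})/(1-e^{-u/c})$ via the decreasing function $P(s)=s/(e^{s}-1)$ rigorously justifies the value of the scalar infimum, a step the paper asserts without proof.
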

%%%%%%%%%
\begin{proof}
Since $g$ is positive on its domain  (hence lower-bounded), the defining infimum of $q^\star$ is well-defined.  Pose $a=e^{-\ell(y)}$. Then by using the continuity property of $\ell$ and its radial unboundedness (see Lemma \ref{lem:lb}), we see that the range of $a$ when $y$ lives in $\Re^{n_y}\setminus\{0\}$ is $\interval[open]{0}{1}$.  From the assumptions of the lemma, $\ell(y/\lambda)=q(1/\lambda)\ell(y)$ for all $y$ and all $\lambda>0$  and so, $q(1)=1$ and $e^{-\ell(y/\lambda)}=a^{q(1/\lambda)}$.     
%%%
For all $\lambda> 0$ we can write
$$\begin{aligned}
	q^\star(\lambda)=\inf_{y\neq 0}g(y,\lambda)=\inf_{a\in \interval[open]{0}{1}}\dfrac{1-a}{1-a^{q(\frac{1}{\lambda})}}
\end{aligned}$$ 
with $q(1/\lambda)\geq 1$ for $0<\lambda\leq 1$ and $q(1/\lambda)< 1$ for $\lambda>1$. 
We therefore obtain 
$$q^\star(\lambda)=\left\{\begin{array}{ll}\dfrac{1}{q(1/\lambda)} & \mbox{if } 0<\lambda\leq 1 \\ 1 & \mbox{otherwise}\end{array}\right.$$
%where we have made the change of variable $a=e^{-\ell(y)}$. 
The so obtained $q^\star$ is clearly continuous wherever it is well defined. Moreover, since $\lim_{\lambda\rightarrow 0}q^\star(\lambda)=q^\star(0)=0$, we conclude that $q^\star$ is continuous on its entire domain. From the properties of $q$, we deduce that $q^\star$ is strictly increasing  on $\interval{0}{1}$.  Lastly, we observe that the inequality in the statement of the lemma is a direct consequence of the definition of $q^\star$. 
\end{proof}
%%%%%%%%%%%%%%%%%
\subsection{Proof of Lemma~\ref{lem:pr_calc}}\label{app:pr_proof}
The starting point of the proof is the observation that for every integer $r$ in $\left\{1,\ldots,T\right\}$, $\tilde{p}_r\leq r\tilde{p}_1$.  Hence it suffices to show that  $\tilde{p}_1=1/b_1$ and is as expressed in \eqref{eq:def_b1}. Recall that by definition,
\begin{equation}\label{eq:def_p1}
\tilde{p}_1=\sup_{(t,i )\in\mathbb{T}\times\mathbb{S}}\:\sup_{\substack{Z\in \Re^{n\times T}\\ Z \neq 0}}\dfrac{| c_{ti}^\top z_t|}{H_\Sigma(Z)}.
\end{equation}
Without loss of generality, assume that $c_{ti}^\top \neq 0$ for all $(t,i)\in \mathbb{T}\times \mathbb{S}$ 
Then for any $(t,i)$,  %such that $c_{ti}^\top \neq 0$,  
$$\begin{aligned}
	\sup_{\substack{Z\in \Re^{n\times T}\\ Z \neq 0}}\dfrac{| c_{ti}^\top z_t|}{H_\Sigma(Z)}
	& =\sup_{\substack{Z\in \Re^{n\times T}\\ Z \neq 0}}\left\{\dfrac{|c_{ti}^\top z_t|}{H_\Sigma(Z)}:c_{ti}^\top z_t\neq 0\right\}
	\triangleq \dfrac{1}{\beta_{ti}}
\end{aligned}$$
where 
$$\begin{aligned}
	\beta_{ti}&=\inf_{\substack{Z\in \Re^{n\times T}\\ Z \neq 0}}\left\{\dfrac{H_\Sigma(Z)}{| c_{ti}^\top z_t|}:c_{ti}^\top z_t\neq 0\right\} \\
	&=\inf_{\substack{Z\in \Re^{n\times T}\\ Z \neq 0}}\left\{H_\Sigma(Z):|c_{ti}^\top z_t|=1\right\} \\
	&=\inf_{\substack{Z\in \Re^{n\times T}\\ Z \neq 0}}\left\{H_\Sigma(Z):c_{ti}^\top z_t=1\right\} 
\end{aligned}$$
Recalling that $H_\Sigma(Z)$ is a norm under the conditions of the lemma,  the second equality  in the expression of $\beta_{ti}$ above follows from the (strict) homogeneity property of norms. As to the last equality, it follows from the fact that $c_{ti}^\top z_t$ is a scalar which induces the possibility to replace the constraint $|c_{ti}^\top z_t|=1$ indifferently either by $c_{ti}^\top z_t=1$  or by $c_{ti}^\top z_t=-1$. \\
Now by invoking the definition of $\tilde{p}_1$, it can be seen that 
$$\tilde{p}_1=\sup_{(t,i )\in\mathbb{T}\times\mathbb{S}}\dfrac{1}{\beta_{ti}}=\dfrac{1}{\inf_{(t,i )\in\mathbb{T}\times\mathbb{S}}\beta_{ti}}=\dfrac{1}{b_1}.$$
\qed
%%%

\balance
\bibliographystyle{abbrv}
%\bibliography{bibli}

\begin{thebibliography}{10}

\bibitem{Bako11-Automatica}
L.~Bako.
\newblock Identification of switched linear systems via sparse optimization.
\newblock {\em Automatica}, 47:668--677, 2011.

\bibitem{bako_class_2017}
L.~Bako.
\newblock On a class of optimization-based robust estimators.
\newblock {\em IEEE Transactions on Automatic Control}, 62:5990--5997, 2017.

\bibitem{Berstein09-Book}
D.~S. Bernstein.
\newblock {\em Matrix Mathematics: Theory, Facts, and Formulas}.
\newblock 2009.

\bibitem{Boyd04-Book}
S.~Boyd and L.~Vandenberghe.
\newblock {\em Convex optimization}.
\newblock Cambridge University Press, 2004.

\bibitem{Candes06-IT}
E.~Cand\`{e}s and P.~A. Randall.
\newblock Highly robust error correction by convex programming.
\newblock {\em IEEE Transactions on Information Theory}, 54:2829--2840, 2006.

\bibitem{candes_restricted_2008}
E.~J. Cand\`{e}s.
\newblock The restricted isometry property and its implications for compressed
  sensing.
\newblock {\em Comptes rendus mathematique}, 346(9-10):589--592, 2008.

\bibitem{Candes08-SPM}
E.~J. Cand\`{e}s and M.~B. Wakin.
\newblock An introduction to compressive sampling.
\newblock {\em IEEE Signal Processing Society}, 25:21--30, 2008.

\bibitem{cardenas_secure_2008}
A.~Cardenas, S.~Amin, and S.~Sastry.
\newblock Secure control: Towards survivable cyber-physical systems.
\newblock In {\em The 28th International Conference on Distributed Computing
  Systems Workshops}, 2008.

\bibitem{chang_secure_2018}
Y.~H. Chang, Q.~Hu, and C.~J. Tomlin.
\newblock Secure estimation based kalman filter for cyber-physical systems
  against sensor attacks.
\newblock {\em Automatica}, 95:399--412, 2018.

\bibitem{fawzi_secure_2014}
H.~Fawzi, P.~Tabuada, and S.~Diggavi.
\newblock Secure estimation and control for cyber-physical systems under
  adversarial attacks.
\newblock {\em IEEE Transactions on Automatic Control}, 59(6):1454--1467, 2014.

\bibitem{Foucart13-Book}
S.~Foucart and H.~Rauhut.
\newblock {\em A mathematical introduction to compressive sensing}.
\newblock Birkh\"{a}user, 2013.

\bibitem{grant_cvx_2017}
M.~C. Grant and S.~P. Boyd.
\newblock {CVX}: {Matlab} software for disciplined convex programming, version
  2.1.
\newblock 2017.

\bibitem{Han19-TAC}
D.~Han, Y.~Mo, and L.~Xie.
\newblock Convex optimization based state estimation against sparse integrity
  attacks.
\newblock {\em IEEE Transactions on Automatic Control}, 64:2383--2395, 2019.

\bibitem{Huber-Book-09}
P.~J. Huber and E.~M. Ronchetti.
\newblock {\em Robust Statistics}.
\newblock A. John Wiley \& Sons, Inc. Publication (2nd Ed), 2009.

\bibitem{Kellett14}
C.~M. Kellett.
\newblock A compendium of comparison function results.
\newblock {\em Mathematics of Control, Signals, and Systems}, 26:339--374,
  2014.

\bibitem{kircher_analysis_2020}
A.~Kircher, L.~Bako, E.~Blanco, M.~Benallouch, and A.~Korniienko.
\newblock Analysis of resilience for a state estimator for time-discrete linear
  systems.
\newblock {\em 2020 American Control Conference}.

\bibitem{mishra_secure_2017}
S.~Mishra, Y.~Shoukry, N.~Karamchandani, S.~N. Diggavi, and P.~Tabuada.
\newblock Secure state estimation against sensor attacks in the presence of
  noise.
\newblock {\em IEEE Transactions on Control of Network Systems}, 4:49--59,
  2017.

\bibitem{Mo15-TAC}
Y.~Mo and B.~Sinopoli.
\newblock Secure estimation in the presence of integrity attacks.
\newblock {\em IEEE Transactions on Automatic Control}, 60:1145--1151, 2015.

\bibitem{pajic_attack-resilient_2017}
M.~Pajic, I.~Lee, and G.~J. Pappas.
\newblock Attack-resilient state estimation for noisy dynamical systems.
\newblock {\em IEEE Transactions on Control of Network Systems}, 4:82--92,
  2016.

\bibitem{pasqualetti_attack_2013}
F.~Pasqualetti, F.~Dorfler, and F.~Bullo.
\newblock Attack detection and identification in cyber-physical systems.
\newblock {\em IEEE Transactions on Automatic Control}, 58:2715--2729, 2013.

\bibitem{Ren19}
X.~Ren, Y.~Mo, J.~Chen, and K.~H. Johansson.
\newblock Secure state estimation with byzantine sensors: A probabilistic
  approach.
\newblock {\em Manuscript \url{https://arxiv.org/abs/1903.05698}}, 2019.

\bibitem{Rockafellar05-Book}
R.~T. Rockafellar and R.~J.-B.Wets.
\newblock {\em Variational analysis}.
\newblock Springer Science \& Business Media, 2009.

\bibitem{sharon_minimum_2009}
Y.~Sharon, J.~Wright, and Y.~Ma.
\newblock Minimum sum of distances estimator: Robustness and stability.
\newblock In {\em American Control Conference, St. Louis, Mo, USA}, pages
  524--530, 2009.

\bibitem{shoukry_event-triggered_2016}
Y.~Shoukry and P.~Tabuada.
\newblock Event-triggered state observers for sparse sensor noise/attacks.
\newblock {\em IEEE Transactions on Automatic Control}, 61:2079--2091, 2016.

\bibitem{sinopoli_kalman_2004}
B.~Sinopoli, L.~Schenato, M.~Franceschetti, K.~Poolla, M.~Jordan, and
  S.~Sastry.
\newblock Kalman filtering with intermittent observations.
\newblock {\em IEEE Transactions on Automatic Control}, 49:1453--1464, 2004.

\end{thebibliography}
%%%%%%%%%%%%%%%%%%%%%%%%%%%%%%%%

\end{document}